\let\origbibliography\thebibliography
\renewcommand\thebibliography[1]{\origbibliography{#1}\itemsep 0pt plus 0.2ex}
\newtheorem{theorem}{Theorem}[section]
\newtheorem{corollary}[theorem]{Corollary}
\newtheorem{lemma}[theorem]{Lemma}
\newtheorem{claim}[theorem]{Claim}
\newtheorem{conjecture}[theorem]{Conjecture}
\setlist[enumerate]{label=\upshape{(\arabic*)},noitemsep,topsep=\smallskipamount,leftmargin=*}
\newlist{enumalph}{enumerate}{1}
\setlist[enumalph]{label=\upshape{(\alph*)},noitemsep,topsep=\smallskipamount,leftmargin=*,widest=d}
\newlist{enumroman}{enumerate}{1}
\setlist[enumroman]{label=(\roman*),noitemsep,topsep=\smallskipamount,leftmargin=*,widest=ii}
\setlist[itemize]{noitemsep,topsep=\smallskipamount,leftmargin=1.5em}
\title{Approximating Pathwidth for Graphs of Small Treewidth\thanks{A preliminary version of this paper appeared in \emph{Proceedings of the Thirty-Second Annual ACM-SIAM Symposium on Discrete Algorithms}, pages 1965--1976. Society for Industrial and Applied Mathematics, Philadelphia, 2021.
That version contains two serious errors---in the proof of Theorem~3.1 and in the complexity analysis of the algorithm.}}
\author{Carla Groenland\thanks{Utrecht University, The Netherlands (\texttt{c.e.groenland@uu.nl}).
Supported by the project CRACKNP (ERC grant agreement No 853234).}
\and
Gwenaël Joret\thanks{Département d'Informatique, Université Libre de Bruxelles, Brussels, Belgium (\texttt{gwenael.joret@ulb.be}).
Research supported by an ARC grant from the Wallonia-Brussels Federation of Belgium, by a CDR grant from the National Fund for Scientific Research (FNRS), and by the Wallonia Brussels International (WBI) agency.}
\and
Wojciech Nadara\thanks{Institute of Informatics, University of Warsaw, Poland (\texttt{w.nadara@mimuw.edu.pl}).
This research is part of a~project that has received funding from the European Research Council (ERC) under the European Union's Horizon 2020 research and innovation programme Grant Agreement 714704.}
\and
Bartosz Walczak\thanks{Department of Theoretical Computer Science, Faculty of Mathematics and Computer Science, Jagiellonian University, Kraków, Poland (\texttt{bartosz.walczak@uj.edu.pl}).
Research partially supported by the Polish National Agency for Academic Exchange (NAWA).}}
\date{}
\newcommand{\setN}{\mathbb{N}}
\newcommand{\calC}{\mathcal{C}}
\newcommand{\calT}{\mathcal{T}}
\newcommand{\CC}[1]{\mathcal{C}_{#1}}
\newcommand{\inputB}{B^{\mathrm{in}}}
\newcommand{\inputG}{G^{\mathrm{in}}}
\newcommand{\inputT}{T^{\mathrm{in}}}
\newcommand{\inputV}{V^{\mathrm{in}}}
\newcommand{\size}[1]{{\lvert #1\rvert}}
\newcommand{\witness}[1]{\langle #1\rangle}
\newcommand{\blowup}[1]{#1^{\smash{(t)}}}
\DeclareMathOperator{\tw}{tw}
\DeclareMathOperator{\pw}{pw}
\DeclareMathOperator{\Sub}{Sub}
\DeclareMathOperator{\solve}{\mathsf{solve}}
\DeclareMathOperator{\level}{level}
\let\leq\leqslant
\let\geq\geqslant
\begin{document}

\maketitle

\begin{abstract}
We describe a polynomial-time algorithm which, given a graph $G$ with treewidth $t$, approximates the pathwidth of $G$ to within a ratio of $O(t\sqrt{\log t})$.
This is the first algorithm to achieve an $f(t)$-approximation for some function $f$.

Our approach builds on the following key insight: every graph with large pathwidth has large treewidth or contains a subdivision of a large complete binary tree.
Specifically, we show that every graph with pathwidth at least $th+2$ has treewidth at least $t$ or contains a subdivision of a complete binary tree of height $h+1$.
The bound $th+2$ is best possible up to a multiplicative constant.
This result was motivated by, and implies (with $c=2$), the following conjecture of Kawarabayashi and Rossman (SODA'18): there exists a universal constant $c$ such that every graph with pathwidth $\Omega(k^c)$ has treewidth at least $k$ or contains a subdivision of a complete binary tree of height $k$.

Our main technical algorithm takes a graph $G$ and some (not necessarily optimal) tree decomposition of $G$ of width $t'$ in the input, and it computes in polynomial time an integer $h$, a certificate that $G$ has pathwidth at least $h$, and a path decomposition of $G$ of width at most $(t'+1)h+1$.
The certificate is closely related to (and implies) the existence of a subdivision of a complete binary tree of height $h$.
The approximation algorithm for pathwidth is then obtained by combining this algorithm with the approximation algorithm of Feige, Hajiaghayi, and Lee (STOC'05) for treewidth.
\end{abstract}

\section{Introduction}

Tree decompositions and path decompositions are fundamental objects in graph theory.
For algorithmic purposes, it would be highly useful to be able to compute such decompositions of minimum width, that is, achieving the treewidth and the pathwidth of the graph, respectively.
However, both these problems are NP-hard, and remain so even when restricted to very specific graph classes \cite{ACP87,Bodlaender92,Gusted93,HM94,KBMK93,KKM95,MS88}.

The best known polynomial-time approximation algorithm for treewidth, due to Feige, Hajiaghayi, and Lee~\cite{FHL08}, computes a tree decomposition of an input graph $G$ whose width is within a factor of $O(\sqrt{\log\tw(G)})$ of the treewidth $\tw(G)$ of $G$.
A modification of that algorithm produces a path decomposition whose width is within a factor of $O(\log n\cdot\sqrt{\log\tw(G)})$ of the pathwidth $\pw(G)$ of $G$, where $n$ is the number of vertices of $G$, using the fact that $\tw(G)\leq\pw(G)\leq(\tw(G)+1)\log_2n$~\cite{FHL08}.
Combining it with existing FPT algorithms for treewidth and pathwidth~\cite{Bodlaender96,BK93} leads to a polynomial-time algorithm approximating pathwidth to within a factor of $O(\pw(G)\tw(G)\sqrt{\log\tw(G)})$.%
\footnote{Here is a brief sketch.
Instances $G$ such that $\pw(G)\tw(G)=O(\log n)$ can be solved optimally in polynomial time thanks to the following two algorithms.
(1) Bodlaender~\cite{Bodlaender96} described an algorithm that, given $G$ and a number $t$, constructs a tree decomposition of $G$ of width at most $t$ or finds that $\tw(G)>t$ in $2^{O(t^2)}n$ time.
(2) Bodlaender and Kloks~\cite{BK93} described an algorithm that, given $G$, a tree decomposition of $G$ of width $t$, and a number $p$, constructs a path decomposition of $G$ of width at most $p$ or finds that $\pw(G)>p$ in $2^{O(pt)}n$ time.
When $\pw(G)\tw(G)=\Omega(\log n)$, the aforementioned $O(\log n\cdot\sqrt{\log\tw(G)})$-approximation algorithm of Feige et~al.\ achieves the claimed bound.}
This is the best known approximation ratio for pathwidth that we are aware of.
A simple polynomial-time $f(\pw(G))$-approximation algorithm (with $f$ exponential) follows from the result of~\cite{CDF96}.

We describe a polynomial-time algorithm that approximates pathwidth to within a factor of $O(\tw(G)\sqrt{\log\tw(G)})$, thus effectively dropping the $\pw(G)$ factor in the above.
To our knowledge, this is the first algorithm to achieve an approximation ratio that depends only on treewidth.

Our approach builds on the following key insight: every graph with large pathwidth has large treewidth or contains a subdivision of a large complete binary tree.

\begin{theorem}
\label{thm1}
Every graph with treewidth\/ $t-1$ has pathwidth at most\/ $th+1$ or contains a subdivision of a complete binary tree of height\/ $h+1$.
\end{theorem}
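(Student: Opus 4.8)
The plan is to prove the contrapositive in a constructive, recursive form: given a graph $G$ together with a tree decomposition $(T,\beta)$ of width $t-1$, I want to either output a path decomposition of $G$ of width at most $th+1$, or output a subdivision of the complete binary tree of height $h+1$ in $G$. I would proceed by induction on $h$. Since pathwidth is monotone under disjoint union and under subgraphs, I may assume $G$ is connected, and by contracting any edge $xy$ of $T$ with $\beta(x)\subseteq\beta(y)$ I may assume the decomposition is \emph{reduced}, i.e.\ $\beta(x)\setminus\beta(y)\neq\emptyset$ for every edge $xy$ of $T$. In the base case $h=0$: a graph with no subdivision of the complete binary tree of height $1$ — equivalently, with no path on three vertices — has maximum degree at most $1$, hence is a matching and has pathwidth at most $1=t\cdot 0+1$.

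For the inductive step, root $T$ at a leaf $r$, and for a node $x$ let $G_x$ be the subgraph of $G$ induced by the union of the bags in the subtree of $T$ rooted at $x$; restricting $\beta$ to that subtree gives a tree decomposition of $G_x$ of width $t-1$. Applying the induction hypothesis (for $h-1$) to $G_x$, either $\pw(G_x)\leq t(h-1)+1$ — call $x$ \emph{light} — or $G_x$ contains a subdivision of the complete binary tree of height $h$ — call $x$ \emph{heavy}. If $r$ is light we are already done, so assume $r$ is heavy. Since $G_x\subseteq G_{x'}$ whenever $x$ is a descendant of $x'$, the heavy nodes are closed under taking ancestors, so they induce a subtree $B$ of $T$ containing $r$; the analysis splits on the shape of $B$.

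If $B$ is a path (equivalently, every node of $B$ has degree at most two in $B$), I build a path decomposition of $G$ by sweeping along $B$ starting from $r$: when passing through a node $b\in B$, I concatenate the path decompositions of $G_c$ over the light children $c$ of $b$ (each of width at most $t(h-1)+1$ by the induction hypothesis), adding the set $\beta(b)$ to every bag produced while handling $b$. The interface between $G_c$ and the rest of $G$ lies in $\beta(c)\cap\beta(b)$, which is contained in $\beta(b)$, as are the adhesions to the parts of $B$ before and after $b$; hence every resulting bag is contained in $\beta(b)$ together with a single bag of some $G_c$, so its size is at most $t+\bigl(t(h-1)+1\bigr)+1=th+2$, giving width at most $th+1$. (Verifying that these bags really form a path decomposition, in particular the vertex-interval condition, is the routine part.)

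The remaining case — where $B$ has a node $z$ with two heavy children $c_1$ and $c_2$ — is the heart of the proof. Then $G_{c_1}$ and $G_{c_2}$ each contain a subdivision of the complete binary tree of height $h$, and since $V(G_{c_1})\cap V(G_{c_2})\subseteq\beta(z)$ these two subdivisions meet in at most $t$ vertices. I want to join them through a new root in $\beta(z)$ to obtain a subdivision of the complete binary tree of height $h+1$. The obstacle is twofold: the two subdivisions must be made vertex-disjoint despite possibly sharing up to $t$ adhesion vertices, and for each $i$ I need a path from $\beta(c_i)\cap\beta(z)$ to the \emph{root} of the $i$-th subdivision that avoids the rest of it. I expect the resolution to be a strengthening of the induction hypothesis that delivers the height-$h$ subdivision already attached to a vertex of the parent adhesion — for instance by running the induction on $G_{c_i}$ augmented with a small gadget (an apex, or a pendant vertex, seeing $\beta(c_i)\cap\beta(z)$), which still has width $t-1$ precisely because the decomposition is reduced — possibly combined with working with a \emph{lean} tree decomposition, so that Menger-type connectivity across the adhesions lets one reroute around overlaps (here it helps that for a graph of maximum degree three, being a topological minor and being a minor coincide, so one may argue with minors when rerouting). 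Once each subdivision is supplied with such an attachment point, the splicing in this case is mechanical, and the induction closes.
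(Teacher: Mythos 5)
Your high-level plan — induct on $h$, root the tree decomposition, split on whether the ``heavy'' region is a path or branches, and in the branching case stitch two height-$h$ structures into one of height $h{+}1$ — is the right shape, and it matches the skeleton of the paper's proof. Your path case is essentially Lemma~\ref{lem:combine} in disguise and gives the right bound $th+1$. But the step you yourself flag as ``the heart of the proof'' is a genuine gap, not a routine gap, and the fix you gesture at does not obviously work.

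The concrete problem is that you carry ``$G_x$ contains a subdivision of a complete binary tree of height $h$'' as your inductive invariant, and this invariant does not compose. When $z$ has two heavy children $c_1,c_2$, the subgraphs $G_{c_1}$ and $G_{c_2}$ can share up to $t$ vertices of $\beta(z)$, and both of your subdivisions can pass through all of them. A subdivision of a complete binary tree is fragile: deleting even one vertex from it can destroy it, and there is no reason that $G_{c_i}-\beta(z)$ still contains a height-$h$ subdivision just because $G_{c_i}$ does. Your proposed repairs (apex/pendant gadgets, lean decompositions, Menger rerouting via minors) are directions rather than an argument, and the apex version in particular changes the width you are inducting on. Nothing in the proposal forces the two subdivisions to live in disjoint territory, nor pins their roots to the adhesion so that the splice is possible.

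The paper avoids this by changing the invariant and the ``subtree graph.'' Instead of asking that a subtree \emph{contain a subdivision}, it asks that a connected piece belong to $\calT_h$, a class defined so that membership is robust under the kind of cutting you need: $\calT_{h+1}$ just asks for three pairwise-disjoint $\calT_h$-pieces with pairwise connections avoiding the third, and disjointness is part of the definition rather than something to be engineered after the fact. Second, it does not use $G_x$ (union of bags below $x$) but the ``strict interiors'' $F_{x|y}$ (vertices appearing \emph{only} in bags of $T_{x|y}$), which are automatically pairwise disjoint and non-adjacent across distinct branches — so the disjointness problem you hit never arises. Third, in place of a lean decomposition it uses an optimal decomposition in which every $G_{x|y}$ is connected (Lemma~\ref{lem:connected-decomposition}), which is exactly the connectivity needed to route the pairwise paths; and it needs the Helly property of subtrees to dispatch the case where no edge separates two $\calT_h$-pieces. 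Extracting an actual subdivision of the complete binary tree, together with the ``already attached to a chosen vertex'' strengthening you intuited, is done once and for all in Lemma~\ref{lem:Th-subdivision} from a $\calT_h$-witness, rather than being threaded through the main induction. So the intuition behind your proposal is sound, but to close the gap you essentially need to replace the invariant ``has a height-$h$ subdivision'' by a robust surrogate like $\calT_h$ and replace $G_x$ by the strict interior sets; as written, the branching case of your induction does not go through.
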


The bound $th+1$ is best possible up to a multiplicative constant (see Section~\ref{sec:tight}).
Our original motivation for Theorem~\ref{thm1} was the following result of Kawarabayashi and Rossman~\cite{KR22} about treedepth, which is an upper bound on pathwidth: every graph with treedepth $\Omega(k^5\log^2k)$ has treewidth at least $k$, or contains a subdivision of a complete binary tree of height $k$, or contains a path of order $2^k$.
The bound $\Omega(k^5\log^2k)$ was recently lowered to $\Omega(k^3)$ by Czerwiński, Nadara, and Pilipczuk~\cite{CNP21}, who also devised an $O(\tw(G)\log^{3/2}\tw(G))$-approximation algorithm for treedepth.
Kawarabayashi and Rossman~\cite{KR22} conjectured that the third outcome of their theorem, the path of order $2^k$, could be avoided if one considered pathwidth instead of treedepth: they conjectured the existence of a universal constant $c$ such that every graph with pathwidth $\Omega(k^c)$ has treewidth at least $k$ or contains a subdivision of a complete binary tree of height $k$.
Theorem~\ref{thm1} implies their conjecture with $c=2$, which is best possible (see Section~\ref{sec:tight}).
We remark that Wood~\cite{Wood13} also conjectured a statement of this type, with a bound of the form $f(t)\cdot h$ on the pathwidth for some function $f$ (see also \cite[Lemma~6]{MW15} and \cite[Conjecture~6.4.1]{Hickingbotham19}).
Both Theorem~\ref{thm1} and the treedepth results~\cite{KR22,CNP21} are a continuation of a line of research on excluded minor characterizations of graphs with small values of their corresponding width parameters (treewidth/pathwidth/treedepth), which was started by the seminal Grid Minor Theorem~\cite{RS86} and its improved polynomial versions~\cite{CC16,CT21}.
In the last section, we propose yet another problem in a similar vein (see Conjecture~\ref{conjecture}).

Since the complete binary tree of height $h$ has pathwidth $\lceil h/2\rceil$~\cite{Scheffler89}, any subdivision of it (as a subgraph) can be used to certify that the pathwidth of a given graph is large.
The following key concept provides a stronger certificate of large pathwidth, more suitable for our purposes.
Let $(\calT_h)_{h=0}^\infty$ be a sequence of classes of graphs defined inductively as follows: $\calT_0$ is the class of all connected graphs, and $\calT_{h+1}$ is the class of connected graphs $G$ that contain three pairwise disjoint sets of vertices $V_1$, $V_2$, and $V_3$ such that $G[V_1],G[V_2],G[V_3]\in\calT_h$ and any two of $V_1$, $V_2$, and $V_3$ can be connected in $G$ by a path avoiding the third one.
Every graph in $\calT_h$ has the following properties:
\begin{itemize}
\item it has pathwidth at least $h$ (see Lemma~\ref{lem:Th-pw}), and
\item it contains a subdivision of a complete binary tree of height $h$ (see Lemma~\ref{lem:Th-subdivision}).
\end{itemize}

Theorem~\ref{thm1} has a short and simple proof (see Section~\ref{sec:proof1}).
It proceeds by showing that every connected graph with treewidth $t-1$ has pathwidth at most $th+1$ or belongs to $\calT_{h+1}$.
The stronger assertion allows us to apply induction on $h$.
Unfortunately, this proof is not algorithmic.

To obtain the aforementioned approximation algorithm, we prove the following algorithmic version of Theorem~\ref{thm1}.
Its proof is significantly more involved (see Section~\ref{sec:proof2}).

\begin{theorem}
\label{thm2}
For every connected graph\/ $G$ with treewidth at most\/ $t-1$, there is an integer\/ $h\geq 0$ such that\/ $G\in\calT_h$ and\/ $G$ has pathwidth at most\/ $th+1$.
Moreover, there is a polynomial-time algorithm to compute such an integer\/ $h$, a path decomposition of\/ $G$ of width at most\/ $th+1$, and a subdivision of a complete binary tree of height\/ $h$ in\/ $G$ given a tree decomposition of\/ $G$ of width at most\/ $t-1$.
\end{theorem}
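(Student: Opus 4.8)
The plan is to deduce the first sentence from the strengthened form of Theorem~\ref{thm1}, and then to bootstrap it into an algorithm by a recursion on $h$. For the first sentence: since $\calT_0$ contains every connected graph and membership in $\calT_h$ forces pathwidth at least $h$ (Lemma~\ref{lem:Th-pw}), there is a largest $h$ with $G\in\calT_h$; for that $h$ we have $G\notin\calT_{h+1}$, so the strengthened Theorem~\ref{thm1} gives $\pw(G)\le th+1$, while $G\in\calT_h$ gives a subdivision of the complete binary tree of height $h$ by Lemma~\ref{lem:Th-subdivision}. The algorithm is built around a recursive subroutine $\mathsf{solve}(G,\mathcal D,h)$ which, given a connected graph $G$, a tree decomposition $\mathcal D$ of $G$ of width at most $t-1$, and an integer $h\ge 0$, returns \emph{either} a path decomposition of $G$ of width at most $th+1$ \emph{or} an explicit certificate that $G\in\calT_{h+1}$ (three vertex sets witnessing membership, together with the connecting paths). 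Calling it for $h=0,1,2,\dots$ and stopping at the first $h^\star$ for which a path decomposition comes back, we observe that for $h<h^\star$ a $\calT_{h+1}$-certificate comes back, so (if $h^\star\ge1$) $G\in\calT_{h^\star}$, whence $h^\star=O(\log n)$ since a graph in $\calT_h$ has at least $2^h$ vertices (being a supergraph of a subdivided complete binary tree of height $h$); thus only polynomially many calls are made. If $h^\star=0$ the output is immediate; otherwise we output $h=h^\star$, the path decomposition returned at $h^\star$, and a subdivision of the complete binary tree of height $h^\star$ extracted from the $\calT_{h^\star}$-certificate returned at $h^\star-1$ by the straightforward constructive version of Lemma~\ref{lem:Th-subdivision} (recurse into two of the three parts for rooted height-$(h^\star-1)$ subdivisions, then use the third part together with the two connecting paths, cleaned to be internally disjoint, to attach their roots to a fresh root).

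It remains to implement $\mathsf{solve}(G,\mathcal D,h)$. For $h=0$ we check in linear time whether $G$ is a caterpillar; if so then $\pw(G)\le 1=t\cdot 0+1$, and if not then $G$ contains a cycle or a subdivision of $K_{1,3}$ with each edge subdivided once, and either of these is a member of $\calT_1$ (three vertices around a cycle, respectively the three tips of the subdivided claw, with the obvious paths). For $h\ge 1$, root $\mathcal D=(T,(B_x)_x)$ and descend from the root along a path $P=x_0,x_1,\dots$ chosen greedily to \emph{follow the hard direction}: at the current node $x_i$, for each child $z\notin P$ examine the connected components of $G_z$, the subgraph induced by the union of the bags in the subtree at $z$ (by the separation properties of tree decompositions and the connectivity of $G$, each such component meets the adhesion $B_z\cap B_{x_i}$), and call $\mathsf{solve}(\cdot,\cdot,h-1)$ recursively on each; call $z$ \emph{hard} if one of these recursive calls returns a $\calT_h$-certificate. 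If $x_i$ has a hard child, take $x_{i+1}$ to be one such child and continue; otherwise $P$ ends at $x_i$.

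When a recursive call returns a path decomposition, splice it into the decomposition being built for $G$ after first adding the bag $B_{x_i}$ (of size at most $t$) to each of its bags and keeping $B_{x_i}$ in every bag strung between $B_{x_i}$ and $B_{x_{i+1}}$, so that each vertex's support remains an interval; since the recursive width is at most $t(h-1)+1$, the resulting bags have size at most $t(h-1)+2+t=th+2$, i.e.\ width $th+1$, and since $B_{x_i}$ contains the relevant adhesions, every edge leaving the spliced part is recorded. Interleaving all these spliced pieces with the path-bags $B_{x_i}$ themselves (of width $t-1\le th+1$) gives a valid path decomposition of $G$ of width $th+1$ --- \emph{unless} along the descent we accumulate three pairwise-disjoint $\calT_h$-certificates, each lying in a distinct subtree hanging off $P$ and hence separated from the others; in that case one can route in $G$ a path between any two of them that avoids the third (routing it up through the bags along $P$), so those three sets witness $G\in\calT_{h+1}$ and we return that certificate. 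The point of always following a hard child is that the only $\calT_h$-pieces we are forced to set aside hang off $P$ and are mutually separated, and that if we never collect three of them then every hanging subtree is eventually disposed of through the path-decomposition branch.

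The delicate points, in increasing order of difficulty, are: (i) keeping the width exactly $th+1$ rather than $th+O(t)$ --- this forces the splices to \emph{reuse} the path-bags and adhesions of $\mathcal D$ rather than add fresh vertices, and demands care about which vertices persist across splices; (ii) the running time --- a careless recursion might branch uncontrollably, but at a given invocation the relevant hanging subtrees are pairwise disjoint subtrees of the fixed tree $T$, so the work at each of the $O(\log n)$ relevant values of $h$ is polynomial (one may also memoize over the $O(|V(T)|\log n)$ subproblems); and (iii), the real crux, the connectivity analysis --- subtrees of $\mathcal D$ need not induce connected subgraphs, so one works with their components throughout, and one must show, using the separation structure of $\mathcal D$ and the connectivity of $G$, that any three $\calT_h$-certificates harvested from distinct hanging subtrees really can be pairwise joined in $G$ by a path avoiding the third, which is where the joining paths get routed up through the path-bags and where some care in choosing $P$ (or in preconditioning $\mathcal D$) may be needed. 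Pinning down exactly which $\calT_h$-pieces the descent may ignore, and proving that the unignorable ones assemble into a $\calT_{h+1}$-certificate, is where Theorem~\ref{thm2} genuinely goes beyond the short proof of Theorem~\ref{thm1}.
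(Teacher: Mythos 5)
Your first-sentence argument is correct and matches the paper's reduction (Theorem~\ref{thm:existence} plus Lemmas~\ref{lem:Th-pw} and~\ref{lem:Th-subdivision}), and the top-level driver---iterate over $h=0,1,2,\ldots$, stop at the first $h^\star$ where a path decomposition comes back, and recover a $\calT_{h^\star}$-certificate from the call at $h^\star-1$---is a legitimate reformulation of the paper's bottom-up computation of the exact $h$-value.

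The gap is in the implementation of $\mathsf{solve}$, and it is exactly the point you yourself flag at the end. Your claim that ``if we never collect three of them then every hanging subtree is eventually disposed of through the path-decomposition branch'' is false as stated. A set-aside piece $C$ is set aside precisely because the recursive call $\mathsf{solve}(C,\cdot,h-1)$ returned a $\calT_h$-certificate rather than a path decomposition; you therefore have no path decomposition of $C$ of width $t(h-1)+1$ to splice, and since $C\in\calT_h$ its pathwidth can be as large as $t\cdot h(C)+1$ with $h(C)\geq h$, so adding $B_{x_i}$ would overshoot $th+1$. With exactly one or two such pieces you can return neither a $\calT_{h+1}$-certificate nor a width-$(th+1)$ path decomposition by the machinery you have set up. This is precisely the paper's ``key run'' case ($\size{Z}=2$ and $H$ a path). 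The paper resolves it by partitioning the in-between region $U$ into slabs $U_{i,i+1}$, identifying which components are ``parent components'' (those straddling a bag on the path $Q_{i,i+1}$), and issuing \emph{secondary} recursive calls on carefully enlarged components $\hat\gamma\in\CC{G}(U\setminus R)$ with a fresh upper-bound request $k$---a construction that has no counterpart in your sketch and constitutes the main technical content of the proof (Claims~4.9 and~4.10 and their surrounding definitions). Your greedy descent also differs structurally from the paper's in that the paper works with the set $Z$ of \emph{minimal} maximum-level nodes (which are pairwise disjoint and non-adjacent by normalization), which is what makes the three-way connectivity in the $\size{Z}\geq 3$ case immediate; without normalization the routing claims you wave at become delicate.

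A second, independent gap is the polynomiality of the recursion. ``One may also memoize over the $O(\size{V(T)}\log n)$ subproblems'' is not a valid count: the subgraphs on which secondary calls are made are components of $G-R$ for various $R$, which a priori are not indexed by nodes of $T$. The paper invests Subsection~\ref{subsec:complexity}---the normalized tree decomposition, Lemma~\ref{lem:Tin}, the sets $R^k$, and a downward induction (Lemma~\ref{lem:subproblems})---to prove that every subproblem $(\gamma,k)$ lies in $\Sub(\inputG-\bigcup_{i\geq k}R^i)$ and therefore the total number of subproblems is $O(n\log n)$. This is the second piece of real work your proposal does not supply.
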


Since every graph in $\calT_h$ has pathwidth at least $h$, combining Theorem~\ref{thm2} (applied to every connected component of the input graph) with the aforementioned approximation algorithm for treewidth of Feige et al.~\cite{FHL08}, we obtain the following approximation algorithm for pathwidth.

\begin{corollary}
\label{cor:approx}
There is a polynomial-time algorithm which, given a graph\/ $G$ of treewidth\/ $t$ and pathwidth\/ $p$, computes a path decomposition of\/ $G$ of width\/ $O(t\sqrt{\log t}\cdot p)$.
Moreover, if a tree decomposition of\/ $G$ of width\/ $t'$ is also given in the input, the resulting path decomposition has width at most\/ $(t'+1)p+1$.
\end{corollary}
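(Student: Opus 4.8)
The plan is to reduce to Theorem~\ref{thm2} by first obtaining a tree decomposition of $G$ of controlled width, then invoking Theorem~\ref{thm2} on each connected component and combining the resulting path decompositions.

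First I would handle the case where a tree decomposition of width $t'$ is given in the input. Let $G_1,\dots,G_m$ be the connected components of $G$. For each $G_i$, restricting the given tree decomposition to $G_i$ yields a tree decomposition of $G_i$ of width at most $t'$, i.e.\ of width at most $(t'+1)-1$. Applying Theorem~\ref{thm2} to $G_i$ with $t:=t'+1$, I obtain an integer $h_i$ with $G_i\in\calT_{h_i}$, hence $h_i\le\pw(G_i)\le\pw(G)=p$ by Lemma~\ref{lem:Th-pw}, together with a path decomposition of $G_i$ of width at most $(t'+1)h_i+1\le(t'+1)p+1$. Finally I would concatenate these $m$ path decompositions, one after another, in any order: since the $G_i$ are vertex-disjoint and there are no edges between them, the concatenation is a valid path decomposition of $G$, and its width is the maximum of the individual widths, hence at most $(t'+1)p+1$. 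This proves the second assertion.

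For the first assertion, where no tree decomposition is given, I would run the approximation algorithm of Feige, Hajiaghayi, and Lee~\cite{FHL08} on $G$ to compute in polynomial time a tree decomposition of $G$ of width $t'=O(t\sqrt{\log t})$, where $t=\tw(G)$. Plugging this into the second assertion gives a path decomposition of $G$ of width at most $(t'+1)p+1=O(t\sqrt{\log t})\cdot p+1=O(t\sqrt{\log t}\cdot p)$, as claimed. (If $p=0$ then $G$ is edgeless and the trivial one-bag-per-vertex path decomposition of width $0$ works, so we may assume $p\ge1$ to absorb the additive constant.)

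There is essentially no obstacle here: the corollary is a routine packaging of Theorem~\ref{thm2} with the FHL treewidth approximation, the only points requiring (minor) care being that Theorem~\ref{thm2} is stated for \emph{connected} graphs—hence the reduction to components and the concatenation step—and that the certificate $G_i\in\calT_{h_i}$ is precisely what lets us bound each $h_i$ by $p$ via Lemma~\ref{lem:Th-pw}, turning the $th+1$ bound of Theorem~\ref{thm2} into the desired $(t'+1)p+1$ bound.
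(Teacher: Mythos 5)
Your proof is correct and matches the argument the paper sketches just before the corollary: apply Theorem~\ref{thm2} (with $t$ replaced by $t'+1$) to each connected component, bound the returned $h_i$ by $p$ via Lemma~\ref{lem:Th-pw}, concatenate the resulting path decompositions, and for the first assertion obtain $t'=O(t\sqrt{\log t})$ from the Feige--Hajiaghayi--Lee algorithm. The component reduction and the $p=0$ edge case are handled correctly.
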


We remark that if we consider graphs $G$ coming from a fixed class of graphs with bounded treewidth, then we can first use an algorithm of Bodlaender~\cite{Bodlaender96} to compute an optimal tree decomposition of $G$ in linear time, and then use the above algorithm to approximate pathwidth to within a ratio of roughly $\tw(G)+1$.
We note the following two precursors of this result in the literature (with slightly better approximation ratios): Bodlaender and Fomin~\cite{BF02} gave a $2$-approximation algorithm for computing the pathwidth of outerplanar graphs (a subclass of graphs of treewidth at most $2$), and Fomin and Thilikos~\cite{FT06} gave a $3$-approximation algorithm for computing the pathwidth on Halin graphs (a subclass of graphs of treewidth at most $3$).

We conclude this introduction with a remark about parameterized algorithms, even though our focus in this paper is approximation algorithms with running time polynomial in the size of the input graph.
Bodlaender~\cite{Bodlaender96} (see also~\cite{Bodlaender12}) designed a linear-time FPT algorithm for computing pathwidth when parameterized by the pathwidth.
That is, for an $n$-vertex input graph $G$, his algorithm computes the pathwidth $\pw(G)$ and an optimal path decomposition of $G$ in $f(\pw(G))\cdot n$ time for some computable function $f$.
Bodlaender and Kloks~\cite{BK93} considered the problem of computing the pathwidth when the input graph has small treewidth.
They devised an XP algorithm for computing pathwidth when parameterized by the treewidth: given an $n$-vertex graph $G$, the algorithm computes $\pw(G)$ and an optimal path decomposition of $G$ in $n^{f(\tw(G))}$ time for some computable function $f$.
It is an old open problem whether pathwidth is fixed-parameter tractable when parameterized by the treewidth, that is, whether there exists an algorithm to compute the pathwidth of an $n$-vertex input graph $G$ in $f(\tw(G))\cdot n^{O(1)}$ time for some computable function $f$.
This question was first raised by Dean~\cite{Dean93}.
Fomin and Thilikos~\cite{FT06} pointed out that even obtaining an approximation of pathwidth when parameterized by treewidth is open.
Our approximation algorithm is a solution to the latter problem (in a strong sense---with polynomial dependence of the running time in the parameter) and can be seen as a step in the direction of Dean's question.

\section{Preliminaries and Tools}

\subsection{Basic Definitions}

Graphs considered in this paper are finite, simple, and undirected.
We use standard graph-theoretic terminology and notation.
We allow a graph to have no vertices; by convention, such a graph \emph{is not connected} and has no connected components.
The vertex set of a graph $G$ is denoted by $V(G)$.
A vertex $v$ of a graph $G$ is considered a neighbor of a set $X\subseteq V(G)$ if $v\notin X$ and $v$ is connected by an edge to some vertex in $X$.
The neighborhood (thus defined) of a set $X$ in $G$ is denoted by $N_G(X)$.

A \emph{tree decomposition} of a graph $G$ is a pair $(T,\{B_x\}_{x\in V(T)})$ where $T$ is a tree and $\{B_x\}_{x\in V(T)}$ is a family of subsets of $V(G)$ called \emph{bags}, satisfying the following two conditions:
\begin{itemize}
\item for each vertex $v$ of $G$, the set of nodes $\{x\in V(T)\colon v\in B_x\}$ induces a non-empty subtree of $T$;
\item for each edge $uv$ of $G$, there is a node $x$ in $T$ such that both $u$ and $v$ belong to $B_x$.
\end{itemize}
The \emph{width} of a tree decomposition $(T,\{B_x\}_{x\in V(T)})$ is $\max_{x\in V(T)}\size{B_x}-1$.
The \emph{treewidth} of a graph $G$ is the minimum width of a tree decomposition of $G$.
A \emph{path decomposition} and \emph{pathwidth} are defined analogously with the extra requirement that the tree $T$ is a path.
The treewidth and the pathwidth of a graph $G$ are denoted by $\tw(G)$ and $\pw(G)$, respectively.
We refer the reader to~\cite{Diestel10} for background on tree decompositions.

A \emph{complete binary tree of height} $h$ is a rooted tree in which every non-leaf node has two children and every path from the root to a leaf has $h$ edges.
Such a tree has $2^{h+1}-1$ nodes.
A \emph{complete ternary tree of height} $h$ is defined analogously but with the requirement that every non-leaf node has three children.
A \emph{subdivision} of a tree $T$ is a tree obtained from $T$ by replacing each edge $uv$ with some path connecting $u$ and $v$ whose internal nodes are new nodes private to that path.

\subsection{Witnesses for Large Pathwidth}

Recall that $(\calT_h)_{h=0}^\infty$ is the sequence of classes of graphs defined inductively as follows: $\calT_0$ is the class of all connected graphs, and $\calT_{h+1}$ is the class of connected graphs $G$ that contain three pairwise disjoint sets of vertices $V_1$, $V_2$, and $V_3$ such that $G[V_1],G[V_2],G[V_3]\in\calT_h$ and any two of $V_1$, $V_2$, and $V_3$ can be connected in $G$ by a path avoiding the third one.

A \emph{$\calT_h$-witness} for a graph $G\in\calT_h$ is a complete ternary tree of height $h$ of subsets of $V(G)$ defined inductively following the definition of $\calT_h$.
The $\calT_0$-witness for a connected graph $G$ is the tree with the single node $V(G)$, denoted by $\witness{V(G)}$.
A $\calT_{h+1}$-witness for a graph $G\in\calT_{h+1}$ is a tree with root $V(G)$ and with three subtrees $W_1,W_2,W_3$ of the root that are $\calT_h$-witnesses of $G[V_1],G[V_2],G[V_3]$ for some sets $V_1,V_2,V_3$ as in the definition of $\calT_{h+1}$; it is denoted by $\witness{V(G);W_1,W_2,W_3}$.

It clearly follows from these definitions that every graph in $\calT_h$ has at least $3^h$ vertices and every $\calT_h$-witness of an $n$-vertex graph has $O(n)$ nodes.
The next two lemmas explain the connection of $\calT_h$ to pathwidth and to subdivisions of complete binary trees.

\begin{lemma}
\label{lem:Th-pw}
If\/ $G\in\calT_h$, then\/ $\pw(G)\geq h$.
\end{lemma}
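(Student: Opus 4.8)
The plan is to prove the statement by induction on $h$, following the recursive structure of the definition of $\calT_h$. The base case $h=0$ is trivial: every graph has pathwidth at least $0$. For the inductive step, suppose $G\in\calT_{h+1}$ and fix pairwise disjoint vertex sets $V_1,V_2,V_3$ as in the definition, so that $G[V_i]\in\calT_h$ for each $i$, and any two of them are joined in $G$ by a path avoiding the third. By the inductive hypothesis, $\pw(G[V_i])\geq h$ for each $i$. We want to conclude $\pw(G)\geq h+1$, and the natural mechanism is the classical fact that pathwidth behaves like a ``linear'' search/connectivity parameter: if a path decomposition of $G$ had width at most $h$, we would derive a contradiction with the three-way connectivity.

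The key step is the following ``no simultaneous separation'' argument. Consider any path decomposition $\mathcal{P}=(P,\{B_x\}_{x\in V(P)})$ of $G$. For each $i\in\{1,2,3\}$, since $G[V_i]\in\calT_h$ has pathwidth at least $h$, the restriction of $\mathcal{P}$ to $V_i$ (that is, the path decomposition of $G[V_i]$ obtained by intersecting every bag with $V_i$) has width at least $h$, so some bag $B_{x_i}$ satisfies $\size{B_{x_i}\cap V_i}\geq h+1$. Order the path $P$ linearly and pick, among $x_1,x_2,x_3$, the one that is the median in this linear order; say it is $x_2$ after relabeling, so $x_1$ and $x_3$ lie on opposite sides of $x_2$ (weakly). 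Now I claim $B_{x_2}$ must contain a vertex of $V_2$ on top of the $h+1$ vertices of $V_1\cap B_{x_1}$... more precisely, I want to show $\size{B_{x_2}}\geq h+2$, which forces $\pw(G)\geq h+1$. The point is that $B_{x_2}$ already contains at least $h+1$ vertices of $V_2$; it remains to find one more vertex in $B_{x_2}$ outside $V_2$. Here I use that there is a path $Q$ in $G$ connecting $V_1$ and $V_3$ and avoiding $V_2$: since $B_{x_1}$ meets $V_1$ and $B_{x_3}$ meets $V_3$, the standard property of path (and tree) decompositions that the bags along $P$ separating $x_1$ from $x_3$ must cover every $V_1$–$V_3$ path implies that $B_{x_2}$ contains a vertex of $Q$; and every vertex of $Q$ lies outside $V_2$. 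Hence $\size{B_{x_2}}\geq(h+1)+1=h+2$, so $\pw(G)\geq h+1$, completing the induction.

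I should be slightly careful with the degenerate possibility that $x_1=x_2=x_3$ or that the median node coincides with one of the extreme ones; in those cases the separation argument still goes through because the relevant bag already contains $h+1$ vertices of two different parts $V_i$ (which are disjoint), giving $\size{B_{x_2}}\geq 2h+2\geq h+2$ for $h\geq 0$, or one can simply note that a single bag meeting both $V_1$ and $V_3$ suffices. I expect the main obstacle to be stating the separation property cleanly: one must invoke that for a path decomposition, if $a$ appears (in some bag to the left), $b$ appears (in some bag to the right), and $ab\in E(G)$ or more generally $a,b$ are connected by a path $Q$, then every bag strictly between them contains a vertex of $Q$ — this is exactly the ``connected subtrees'' / Helly property of (path) decompositions, and it is where all the work sits. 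Everything else is bookkeeping with the linear order on $P$ and the inductive hypothesis.
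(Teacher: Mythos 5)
Your argument follows essentially the same route as the paper: induction on $h$, pick for each $i$ a bag $B_{x_i}$ with $\size{B_{x_i}\cap V_i}\geq h+1$, take the median one, and use a $V_2$-avoiding $V_1$–$V_3$ path to force an extra vertex in that bag. One step as you state it is not quite right, though, and it is exactly where you flagged the difficulty. The claim that ``the bags along $P$ separating $x_1$ from $x_3$ must cover \emph{every} $V_1$–$V_3$ path'' is false: a $V_1$–$V_3$ path $Q$ could consist entirely of vertices whose bags lie strictly to the left of $x_2$, in which case $B_{x_2}$ need not meet $Q$. The correct version, which you also write down, is that a path $Q$ is forced through $B_{x_2}$ only when it joins a vertex appearing in $B_{x_1}$ to a vertex appearing in $B_{x_3}$. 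To get such a $Q$, you must use that $G[V_1]$ and $G[V_3]$ are connected (they are, being in $\calT_h$, and $\calT_0$ consists of connected graphs): take the $V_2$-avoiding path guaranteed by the definition of $\calT_{h+1}$ and extend both ends inside $G[V_1]$ and $G[V_3]$ until they reach $B_{x_1}\cap V_1$ and $B_{x_3}\cap V_3$ respectively; the extended path still avoids $V_2$, meets $B_{x_1}$ and $B_{x_3}$, and hence (since the bags meeting a connected subgraph form a subpath of $P$) meets $B_{x_2}$. This is precisely the way the paper phrases it. So the gap is small and easy to fill, but as written the ``every $V_1$–$V_3$ path'' justification does not stand; the connectedness of the $G[V_i]$ is the missing ingredient that makes the separation argument go through.
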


\begin{proof}
The proof goes by induction on $h$.
The case $h=0$ is trivial.
Now, assume that $h\geq 1$ and the lemma holds for $h-1$.
Since $G\in\calT_h$, there are sets $V_1,V_2,V_3\subseteq V(G)$ interconnected as in the definition of $\calT_h$, such that $G[V_i]\in\calT_{h-1}$ and thus $\pw(G[V_i])\geq h-1$ for $i=1,2,3$.
Let $P$ be a path decomposition of $G$.
With bags restricted to $V_i$, it becomes a path decomposition of $G[V_i]$.
It follows that for $i=1,2,3$, there is a bag $B_i$ in $P$ such that $\size{B_i\cap V_i}\geq h$.
Assume without loss of generality that $B_1,B_2,B_3$ occur in this order in $P$.
Since $G[V_1]$ and $G[V_3]$ are connected, there is a path that connects $B_1\cap V_1$ and $B_3\cap V_3$ in $G$ avoiding $V_2$.
This path must have a vertex in $B_2$, so $\size{B_2\setminus V_2}\geq 1$ and thus $\size{B_2}\geq h+1$.
This proves that $\pw(G)\geq h$.
\end{proof}

The proof of Lemma~\ref{lem:Th-pw} generalizes the well-known proof of the fact that (a subdivision of) a complete binary tree of height $h$ has pathwidth at least $\lceil h/2\rceil$.
Actually, it is straightforward to show that such a tree belongs to $\calT_{\lceil h/2\rceil}$.

\begin{lemma}
\label{lem:Th-subdivision}
If\/ $G\in\calT_h$, then\/ $G$ contains a subdivision of a complete binary tree of height\/ $h$ as a subgraph.
Moreover, it can be computed in polynomial time from a\/ $\calT_h$-witness for\/ $G$.
\end{lemma}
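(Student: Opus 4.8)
The plan is to prove, by induction on $h$, a strengthening of the statement that lets the recursion compose: for every connected graph $G$, every $\calT_h$-witness $W$ for $G$, and every vertex $v_0\in V(G)$, one can compute in polynomial time a subdivision $H$ of the complete binary tree of height $h$ in $G$ together with a path $P$ from $v_0$ to the root of $H$ with $V(P)\cap V(H)=\{\text{root of }H\}$ (so in particular $v_0\notin V(H)$ unless $v_0$ is the root of $H$). For $h=0$ this is trivial: take $H$ to be the single vertex $v_0$ and $P$ the trivial path. The recursive procedure produced by the inductive step is the required algorithm; since a $\calT_h$-witness of an $n$-vertex graph has $O(n)$ nodes and the recursion unfolds over a complete binary tree of height $h$, which has $O(3^h)=O(n)$ nodes because $G\in\calT_h$ has at least $3^h$ vertices, while each step performs only elementary graph computations, the total running time is polynomial.

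For the inductive step, write the given witness as $\witness{V(G);W_1,W_2,W_3}$, so that $G[V_i]\in\calT_{h-1}$ with $\calT_{h-1}$-witness $W_i$, the sets $V_1,V_2,V_3$ are pairwise disjoint, and each pair of them is joined in $G$ by a path avoiding the third. The first task is to choose two of the three sets, say $V_i$ and $V_j$ with $v_0\notin V_i\cup V_j$, such that the component $C_0$ of $G-(V_i\cup V_j)$ containing $v_0$ has a neighbour in $V_i$ and a neighbour in $V_j$; this is addressed below. Granting it, pick $p_i\in C_0$ adjacent to some $z_i\in V_i$ and $p_j\in C_0$ adjacent to some $z_j\in V_j$, and inside $G[C_0]$ take a vertex $\rho$ together with three internally disjoint paths $A_0,B_i,B_j$ from $\rho$ to $v_0,p_i,p_j$ that pairwise meet only at $\rho$ (in any connected graph, three vertices admit such a configuration, for instance via the median of the three in a spanning tree). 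Let $A_i$ be $B_i$ extended by the edge $p_iz_i$ and $A_j$ be $B_j$ extended by $p_jz_j$; since $z_i,z_j\notin C_0$, the paths $A_0,A_i,A_j$ still meet only at $\rho$, we have $\rho\notin V_i\cup V_j$, the path $A_0$ avoids $V_i\cup V_j$, and $A_i$ (resp.\ $A_j$) meets $V_i\cup V_j$ only in its endpoint $z_i$ (resp.\ $z_j$). Now apply the inductive hypothesis to $(G[V_i],W_i,z_i)$ to obtain a subdivision $H_i$ of the complete binary tree of height $h-1$ in $G[V_i]$, its root $y_i$, and a path $R_i$ from $z_i$ to $y_i$ in $G[V_i]$ with $V(R_i)\cap V(H_i)=\{y_i\}$; similarly $H_j,y_j,R_j$. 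Output $H:=H_i\cup H_j\cup(A_i\cup R_i)\cup(A_j\cup R_j)$ rooted at $\rho$, and $P:=A_0$. Then $A_i\cup R_i$ is a path from $\rho$ to $y_i$ meeting $H_i$ only at $y_i$ (since $A_i$ avoids $V_i\supseteq V(H_i)$ except at $z_i$, and $R_i\subseteq V_i$ meets $H_i$ only at $y_i$), disjoint from $H_j$ (as $A_i$ misses $V_j$ and $R_i\subseteq V_i$), and meeting $A_j\cup R_j$ only at $\rho$; symmetrically for the other branch; $H_i$ and $H_j$ are disjoint because $V_i\cap V_j=\emptyset$; and each branch contains the nontrivial path $A_i$ or $A_j$, so $\rho$ has degree $2$ in $H$ and differs from $y_i,y_j$. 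Hence $H$ is a subdivision of the complete binary tree of height $h$, and $P=A_0$ is a path from $v_0$ to $\rho$ meeting $V(H)$ only at $\rho$.

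The step I expect to be the main obstacle is the combinatorial claim that a suitable pair $\{i,j\}$ exists. If $v_0$ lies in some $V_k$, one takes $\{i,j\}=\{1,2,3\}\setminus\{k\}$: the connected set $V_k$ contains $v_0$ and is disjoint from $V_i\cup V_j$, so $V_k\subseteq C_0$, and the given path from $V_k$ to $V_i$ avoiding $V_j$, with its $V_i$-endpoint deleted, lies in $C_0$ and exhibits a neighbour of $V_i$ in $C_0$; symmetrically for $V_j$. If $v_0$ lies in none of the three sets, I argue by contradiction: if no pair is good, then for each pair $\{i,j\}$ the component of $G-(V_i\cup V_j)$ containing $v_0$ has all its neighbours in $G$ inside a single one of $V_1,V_2,V_3$---it cannot avoid both $V_i$ and $V_j$ since $G$ is connected and the $V_\ell$ are nonempty, and if it had a neighbour in the third set $V_k$ then $V_k$ would lie in that component and the path from $V_k$ to $V_j$ avoiding $V_i$ would produce a neighbour of $V_j$, contradicting badness---so that single set separates $v_0$ from the other two in $G$. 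Since the three pairs have no common index, at least two of the three separating sets are distinct, say $V_1$ (separating $v_0$ from $V_2\cup V_3$) and $V_2$ (separating $v_0$ from $V_1\cup V_3$); then a shortest path from $v_0$ to $V_1$ must meet $V_2$, and its prefix up to the first vertex of $V_2$ is a path from $v_0$ to $V_2$ avoiding $V_1$, contradicting the first separation. Carrying out this case analysis cleanly is the delicate part; the gluing above, the base case, and the running-time bound are then routine.
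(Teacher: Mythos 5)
Your proof is correct and takes essentially the same approach as the paper's: both strengthen the statement to produce, for any chosen vertex $v_0$, a subdivision of height $h$ together with an anchoring path from $v_0$ to its root, build a Y-shape rooted at a vertex $\rho$ outside two of the three sets, and recurse into those two sets. The only real difference is how the choice of which two sets to recurse into is justified: the paper dispenses with it via a one-line WLOG (some $V_k$ contains $v_0$ or is reachable from $v_0$ by a path avoiding the union of the other two, simply because $G$ is connected), whereas you give a longer three-way case analysis ending in a contradiction via separators, which establishes the same thing.
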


\begin{proof}
We prove, by induction on $h$, that for every graph $G\in\calT_h$ and every $v\in V(G)$, the following structure exists in $G$: a subdivision $S$ of a complete binary tree of height $h$ with some root $r$ and a path $P$ from $v$ to $r$ such that $V(P)\cap V(S)=\{r\}$.
This is trivial for $h=0$.
For the induction step, assume that $h\geq 1$ and the statement holds for $h-1$.
Let $G\in\calT_h$ and $v\in V(G)$.
Let $V_1,V_2,V_3\subseteq V(G)$ be as in the definition of $\calT_h$.
Assume without loss of generality that $v\in V_3$ or $v$ can be connected with $V_3$ by a path in $G$ avoiding $V_1\cup V_2$.
For $i=1,2$, since $G[V_3]$ is connected and $G$ has a path connecting $V_i$ with $V_3$ and avoiding $V_{3-i}$, there is also a path in $G$ from $v$ to some vertex $v_i\in V_i$ avoiding $V_1\cup V_2\setminus\{v_i\}$.
These paths can be chosen so that they first follow a common path $P$ from $v$ to some vertex $r$ in $G-(V_1\cup V_2)$ and then they split into a path $Q_1$ from $r$ to $v_1$ and a path $Q_2$ from $r$ to $v_2$ so that $r$ is the only common vertex of any two of $P,Q_1,Q_2$.
For $i=1,2$, the induction hypothesis provides an appropriate structure in $G[V_i]$: a subdivision $S_i$ of a complete binary tree of height $h-1$ with root $r_i$ and a path $P_i$ from $v_i$ to $r_i$ such that $V(P_i)\cap V(S_i)=\{r_i\}$.
Connecting $r$ with $S_1$ and $S_2$ by the combined paths $Q_1P_1$ and $Q_2P_2$, respectively, yields a subdivision $S$ of a complete binary tree of height $h$ with root $r$ in $G$.
The construction guarantees that $V(P)\cap V(S)=\{r\}$.

Clearly, given a $\calT_h$-witness for $G$, the induction step described above can be performed in polynomial time, and therefore the full recursive procedure of computing a subdivision of a complete binary tree of height $h$ in $G$ works in polynomial time.
\end{proof}

\subsection{Combining Path Decompositions}

The following lemma will be used several times in the paper to combine path decompositions.

\begin{lemma}
\label{lem:combine}
Let\/ $G$ be a graph and\/ $(T,\{B_x\}_{x\in V(T)})$ be a tree decomposition of\/ $G$ of width\/ $t-1$.
\begin{enumerate}
\item\label{item:combine-vertex} If\/ $q\in V(T)$ and every connected component of\/ $G-B_q$ has pathwidth at most\/ $\ell$, then there is a path decomposition of\/ $G$ of width at most\/ $\ell+t$ which contains\/ $B_q$ in every bag.
\item\label{item:combine-subpath} If\/ $Q$ is the path connecting\/ $x$ and\/ $y$ in\/ $T$ and every connected component of\/ $G-\bigcup_{q\in V(Q)}B_q$ has pathwidth at most\/ $\ell$, then there is a path decomposition of\/ $G$ of width at most\/ $\ell+t$ which contains\/ $B_x$ in the first bag and\/ $B_y$ in the last bag.
\end{enumerate}
In either case, there is a polynomial-time algorithm to construct such a path decomposition of\/ $G$ from the path decompositions of the respective components\/ $C$ of width at most\/ $\ell$.
\end{lemma}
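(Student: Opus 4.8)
The plan is to prove both parts by explicitly assembling the path decomposition from the pieces we are given, using the tree decomposition to "route" between them. For part~\eqref{item:combine-subpath}, write $Q = q_1 q_2 \cdots q_m$ for the path in $T$ from $x = q_1$ to $y = q_m$, and let $S = \bigcup_{i=1}^m B_{q_i}$. Removing $S$ from $G$ breaks it into connected components $C_1, \dots, C_k$, each with a given path decomposition of width at most $\ell$. The first key observation is that each component $C_j$ "attaches" to $S$ only through the bags along $Q$, and more precisely $N_G(V(C_j))$ is contained in a single bag $B_{q_{i}}$ for some $i$: indeed, the nodes $x\in V(T)$ whose bags meet $V(C_j)$ form a subtree $T_j$ of $T$ disjoint from $V(Q)$ (if it met $V(Q)$ some vertex of $C_j$ would lie in $S$), hence $T_j$ hangs off $Q$ at a unique node, and standard tree-decomposition arguments show every neighbor of $C_j$ lies in the bag at that node. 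So I can define $\level(C_j)$ to be this index $i\in\{1,\dots,m\}$.

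Now I build the path decomposition left to right. Order the components so that those with $\level = 1$ come first, then those with $\level = 2$, and so on; within a block of components sharing the same level $i$, concatenate their given path decompositions in any order. Then take this long concatenated path decomposition of $G - S$ and add to every bag the set $B_{q_{i}}$ corresponding to the level of the component that bag came from; finally, between the block for level $i$ and the block for level $i+1$ insert a single transitional bag $B_{q_i}\cup B_{q_{i+1}}$ (and prepend $B_{q_1}$ as a first bag and append $B_{q_m}$ as a last bag). Each bag now has size at most $\ell + 1 + t = \ell + t + 1$, i.e.\ width at most $\ell + t$, as claimed, and it contains $B_x = B_{q_1}$ in the first bag and $B_y = B_{q_m}$ in the last. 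Verifying that this is a valid path decomposition is the routine but essential part: for connectivity, a vertex of some $C_j$ appears contiguously because its original decomposition contributed a contiguous block and nothing else touches it; a vertex $v\in S$ lies in $B_{q_i}$ for a contiguous sub-range of indices $i$ (since the nodes of $Q$ containing $v$ form a subpath), and the bags we added containing $v$ are exactly those in the corresponding consecutive blocks plus the transitional bags between them, hence contiguous. For edges: an edge inside some $C_j$ is covered by its original decomposition; an edge inside $S$ is covered by some $B_{q_i}$, which appears as a full added set (or inside a transitional bag); an edge between $v\in V(C_j)$ and $S$ has its $S$-endpoint in $B_{q_{\level(C_j)}}$ by the attachment claim, and that set is added to every bag in $C_j$'s block, at least one of which contains $v$.

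Part~\eqref{item:combine-vertex} is the degenerate case $m = 1$, $x = y = q$: here $S = B_q$, every component of $G - B_q$ has its neighborhood inside $B_q$, we concatenate all the component decompositions and add $B_q$ to every bag, giving width at most $\ell + t$ with $B_q$ in every bag. The algorithmic claim is immediate from the construction: computing the components of $G - S$, finding $\level(C_j)$ for each (read off from which node of $Q$ the subtree $T_j$ attaches, or directly by checking which $B_{q_i}$ contains $N_G(V(C_j))$), sorting, concatenating, and padding the bags are all polynomial-time operations given the input path decompositions. The main obstacle is purely bookkeeping rather than conceptual: one must be careful that the transitional bags $B_{q_i}\cup B_{q_{i+1}}$ really do preserve the subtree/interval property for vertices of $S$ that appear in consecutive bags along $Q$, and that no vertex of $S$ is "orphaned" between two blocks whose levels skip over the index where it lives — this is handled by noting that the indices $i$ with $v\in B_{q_i}$ form an interval, so if $v\in B_{q_i}\cap B_{q_{i'}}$ with $i<i'$ then $v\in B_{q_{i''}}$ for all $i\le i''\le i'$, and all those sets $B_{q_{i''}}$ are inserted (either as the added set of the level-$i''$ block, which is nonempty precisely when there is at least one component at that level, or via the transitional bag between blocks $i''-1$ and $i''$, which we include regardless of whether the block is empty).
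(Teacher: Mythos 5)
Your construction takes a genuinely different route from the paper's.  The paper proves part~\eqref{item:combine-subpath} by reusing part~\eqref{item:combine-vertex} as a black box: for each node $q$ of $Q$ it forms the subtree $T_q$ of nodes reachable from $q$ in $T$ without passing through another node of $Q$, applies part~\eqref{item:combine-vertex} to $G[V_q]$ where $V_q=\bigcup_{z\in V(T_q)}B_z$, and concatenates the resulting path decompositions along $Q$.  You instead work directly with the components of $G-S$ and route each one to its attachment bag $B_{q_i}$; your attachment claim (that $N_G(V(C_j))$ is contained in a single $B_{q_i}$) is correct, and the ordering-and-padding construction is morally sound.  The paper's version is shorter because correctness of the concatenation comes for free from part~\eqref{item:combine-vertex}, whereas yours has to verify the interval property and edge coverage by hand, but both architectures work.

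There is, however, a concrete bug in your width bound.  The transitional bags $B_{q_i}\cup B_{q_{i+1}}$ can have size as large as $2t$: adjacent bags of a tree decomposition need not overlap at all (e.g., a disjoint union of $t$-cliques, one clique per bag, with $T$ a path).  When $\ell<t-1$ this exceeds the permitted size $\ell+t+1$, and this is not a degenerate corner case---in the paper's applications $\ell$ can be as small as $1$ while $t$ is large.  The fix preserves your approach: use the single bag $B_{q_{i+1}}$ (or $B_{q_i}$) as the transitional bag, which has size at most $t\leq\ell+t+1$.  Your contiguity argument survives, since a vertex $v\in S$ lies in $B_{q_i}$ for $i$ in some interval $[a,b]$ and hence in every block-$i$ bag and every standalone transition for $a\leq i\leq b$; and for edge coverage note that every edge of $G[S]$ lies entirely inside some single $B_{q_p}$ (if $u,v\in B_x$ then the node-subtrees of $u$ and of $v$ both contain $x$ and a node of $Q$, hence both contain the attachment node $q_p$ of $x$), and that set appears verbatim as a standalone transitional bag.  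Without this change the proof does not establish the stated bound.
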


\begin{proof}
In case~\ref{item:combine-vertex}, the path decomposition of $G$ is obtained by concatenating the path decompositions of the connected components of $G-B_q$ (which have width at most $\ell$) and adding $B_q$ to every bag.
Now, consider case~\ref{item:combine-subpath}.
For every node $q$ of $Q$, let $T_q$ be the subtree of $T$ induced on the nodes $z$ such that the path from $q$ to $z$ in $T$ contains no other nodes of $Q$, and let $V_q=\bigcup_{z\in V(T_q)}B_z$.
Apply case~\ref{item:combine-vertex} to the graph $G[V_q]$, the tree decomposition $(T_q,\{B_z\}_{z\in V(T_q)})$ of $G[V_q]$, and the node $q\in V(T_q)$ to obtain a path decomposition of $G[V_q]$ of width at most $\ell+t$ containing $B_q$ in every bag.
Then, concatenate the path decompositions thus obtained for all nodes $q$ of $Q$ (in the order they occur on $Q$) to obtain a requested path decomposition of $G$.
\end{proof}

\section{Proof of Theorem \ref{thm1}}
\label{sec:proof1}

The statement of Theorem~\ref{thm1} on a graph $G$ follows from the same statement on every connected component of $G$.
By Lemma~\ref{lem:Th-subdivision}, every graph in $\calT_h$ contains a subdivision of a complete binary tree of height $h$.
Thus, Theorem~\ref{thm1} is a direct corollary to the following statement.

\begin{theorem}
\label{thm:existence}
For every\/ $h\in\setN$, every connected graph with treewidth at most\/ $t-1$ has pathwidth at most\/ $th+1$ or belongs to\/ $\calT_{h+1}$.
\end{theorem}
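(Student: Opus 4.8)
The plan is to prove Theorem~\ref{thm:existence} by induction on $h$, as the paper hints, using the stronger conclusion ``$G\in\calT_{h+1}$'' rather than merely ``$G$ contains a subdivided complete binary tree'' precisely so that the induction hypothesis is strong enough to carry forward. The base case $h=0$ amounts to showing that a connected graph of treewidth at most $t-1$ either has pathwidth at most $1$ or belongs to $\calT_1$; since $\calT_0$ is just all connected graphs, membership in $\calT_1$ only requires finding three pairwise disjoint vertex sets, each inducing a connected subgraph, pairwise connectable by paths avoiding the third. A connected graph that is \emph{not} in $\calT_1$ is very restricted — essentially a ``caterpillar-like'' structure — and one should check directly that such a graph has pathwidth at most $1$.

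For the induction step, I would fix a connected graph $G$ with $\tw(G)\le t-1$ and a tree decomposition $(T,\{B_x\}_{x\in V(T)})$ of width $t-1$, and assume $G\notin\calT_{h+1}$; the goal is then to build a path decomposition of width at most $th+1$. The natural strategy is to find a single bag $B_q$ (or a path of bags) in $T$ such that every connected component $C$ of $G-B_q$ has small pathwidth, and then invoke Lemma~\ref{lem:combine}\eqref{item:combine-vertex}: if each such $C$ has pathwidth at most $t(h-1)+1$, we get pathwidth at most $t(h-1)+1+t = th+1$ for $G$, exactly as desired. So the crux is: can we always choose $q$ so that no component of $G-B_q$ belongs to $\calT_h$? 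If such $q$ exists, then by the induction hypothesis each component $C$ of $G-B_q$ (which has treewidth at most $t-1$) has pathwidth at most $t(h-1)+1$, and we are done.

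The heart of the argument is thus a ``centroid-type'' lemma: if $G\notin\calT_{h+1}$, then there is a node $q\in V(T)$ such that at most one component of $G-B_q$ lies in $\calT_h$ — and then one needs a refinement (perhaps passing to a path $Q$ of bags, using Lemma~\ref{lem:combine}\eqref{item:combine-subpath}, or a second application of the induction hypothesis absorbing that one bad component) to handle the lone surviving $\calT_h$-component. Concretely, for each node $x\in V(T)$ and each component of $T-x$, look at the union of bags in that component minus $B_x$; this gives the ``branches'' of $G$ hanging off $B_x$. If two different branches at $x$, together with a third suitably chosen connected piece, each induced a $\calT_h$-subgraph and were pairwise path-connectable avoiding each other, we could assemble a $\calT_{h+1}$-witness for $G$, contradicting $G\notin\calT_{h+1}$. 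A standard ``move toward the heavy side'' argument on $T$ should then produce a node $q$ past which only one direction contains a $\calT_h$-component; sliding $q$ one more step (or merging with the adjacent bag, costing nothing since widths are measured by $t$) should eliminate even that one.

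I expect the main obstacle to be exactly this last point: ruling out the single exceptional $\calT_h$-component after choosing the centroid node. One must argue carefully that the ``bad'' branch can be broken further — either because the separator $B_q$ already intersects every $\calT_h$-subgraph inside it in a way that drops its rank, or by re-rooting $T$ inside that branch and recursing — without creating a configuration that, combined with the rest of $G$, certifies $G\in\calT_{h+1}$. Getting the bookkeeping right so that the final path decomposition genuinely has width $\le th+1$ (and not $th+1+O(1)$) requires that we pay the additive $t$ only once, which is why Lemma~\ref{lem:combine} is stated to put $B_q$ in \emph{every} bag; threading that through the recursion is the delicate part. The non-algorithmic nature of the proof comes precisely from this existential centroid choice, which is why the authors need the separate, more involved Theorem~\ref{thm2}.
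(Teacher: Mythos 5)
Your high-level plan matches the paper's: induction on $h$ with the stronger conclusion $G\in\calT_{h+1}$, Lemma~\ref{lem:combine} to assemble the path decomposition, and the anticipation that a path of bags rather than a single node may be required. But your sketch leaves two genuine gaps. The first is the most serious: you work with an arbitrary width-$(t-1)$ tree decomposition, yet the step ``three branches each carrying a $\calT_h$-component $\Rightarrow$ $G\in\calT_{h+1}$'' needs, beyond disjointness, that any two of the three sets are connectable by a path \emph{avoiding} the third; this fails with a generic decomposition because the remainder of the graph need not stay connected. The paper resolves this by first proving Lemma~\ref{lem:connected-decomposition}: every connected graph has an \emph{optimal} tree decomposition in which, for every oriented edge $xy$ of $T$, the subgraph $G_{x|y}$ induced on the union of bags on the $x$-side is connected (obtained by minimizing ``fatness'' lexicographically, à la Bellenbaum--Diestel). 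Then for leaves $z_1,z_2,z_3$ of the relevant subtree, $G_{x_i|z_i}$ is a connected graph disjoint from $C_i$ containing the other two $C_j$'s, and the required paths follow immediately. Without this preliminary lemma the witness construction you rely on is not justified.

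The second gap is that the ``move toward the heavy side, then slide $q$ one more step'' argument does not converge to a complete proof. The paper organizes the case analysis differently: let $E$ be the set of tree edges $xy$ such that \emph{both} $F_{x|y}$ and $F_{y|x}$ (the subgraphs on vertices appearing only in bags on one side) have a component in $\calT_h$. Monotonicity of $\calT_h$ under connected supergraphs shows $E$ spans a subtree $K$ of $T$. If $E=\emptyset$, a Helly-property argument on the subtrees whose side carries a $\calT_h$-component yields a single node $z$ to which Lemma~\ref{lem:combine}~\ref{item:combine-vertex} applies. If $K$ has three or more leaves, you read off a $\calT_{h+1}$-witness as above. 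If $K$ has exactly two leaves, then $K$ is a path and Lemma~\ref{lem:combine}~\ref{item:combine-subpath} with $Q=K$ gives $\pw(G)\leq th+1$. So one does not try to eliminate the remaining $\calT_h$-direction at all --- one simply accepts a path $Q$ of bags in that case, which is exactly the width budget Lemma~\ref{lem:combine} allows. (A minor point: your base case says a connected graph not in $\calT_1$ is ``caterpillar-like''; in fact it is exactly a simple path, because any cycle or any vertex of degree $\geq 3$ already produces a $\calT_1$-witness.)
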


A tree decomposition of $G$ is \emph{optimal} if its width is equal to $\tw(G)$.
For the proof of Theorem~\ref{thm:existence}, we need optimal tree decompositions with an additional property.
Namely, consider a connected graph $G$ and a tree decomposition $(T,\{B_x\}_{x\in V(T)})$ of $G$.
Every edge $xy$ of $T$ splits $T$ into two subtrees: $T_{x|y}$ containing $x$ and $T_{y|x}$ containing $y$.
For every oriented edge $xy$ of $T$, let $G_{x|y}$ denote the subgraph of $G$ induced on the union of the bags of the nodes in $T_{x|y}$.
The property we need is that every subgraph of the form $G_{x|y}$ is connected.
It is known that such a tree decomposition always exists~\cite{FN06}, but for completeness, we present a short proof of this fact in the following lemma.

\begin{lemma}
\label{lem:connected-decomposition}
Every connected graph\/ $G$ has an optimal tree decomposition\/ $(T,\{B_x\}_{x\in V(T)})$ with the property that\/ $G_{x|y}$ is connected for every oriented edge\/ $xy$ of\/ $T$.
\end{lemma}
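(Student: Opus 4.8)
The plan is to begin from an arbitrary optimal tree decomposition of $G$ (which we may take to have all bags non-empty, as removing an empty leaf bag preserves optimality) and repeatedly ``split off'' disconnected sides, with a potential function ensuring the process stops. For a tree decomposition $(T,\{B_x\}_{x\in V(T)})$ of $G$, set $\Psi=\sum_{xy\in E(T)}\size{B_x\cap B_y}^2$, a non-negative integer. I will show that if $(T,\{B_x\}_{x\in V(T)})$ is optimal but $G_{x|y}$ is disconnected for some oriented edge $xy$, then one can construct an optimal tree decomposition with strictly smaller $\Psi$; iterating this finitely many times yields a decomposition in which every $G_{x|y}$ is connected.

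The first ingredient is a structural observation. Fix an oriented edge $xy$ and put $S=B_x\cap B_y$. The tree-decomposition axioms give $V(G_{x|y})\cap V(G_{y|x})=S$, and no vertex of $V(G_{x|y})\setminus S$ has a $G$-neighbour outside $V(G_{x|y})$. Hence a connected component $C$ of $G_{x|y}$ with $C\cap S=\emptyset$ would have all of its $G$-neighbours inside $V(G_{x|y})$, hence inside $C$, so $C=V(G)$ since $G$ is connected, forcing $G_{x|y}=G$. Therefore, when $G_{x|y}$ is disconnected, with components $C_1,\dots,C_k$ (so $k\ge 2$), we have $S_i:=S\cap C_i\ne\emptyset$ for every $i$, and $S=S_1\sqcup\dots\sqcup S_k$.

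Given such an edge $xy$, the main step replaces the subtree $T_{x|y}$ by $k$ disjoint ``restricted copies'', one per component. For each $i$, let $T^i=\{z\in V(T_{x|y})\colon B_z\cap C_i\ne\emptyset\}$. Since $C_i$ is connected and every edge of $G_{x|y}$ is covered by some bag of $T_{x|y}$ (if its endpoints appear together only in bags of $T_{y|x}$, both lie in $S\subseteq B_x$), a routine chaining argument shows $T^i$ is a subtree of $T_{x|y}$; moreover $x\in T^i$ because $\emptyset\ne S_i\subseteq B_x\cap C_i$. Form $T'$ from $T_{y|x}$ by attaching, for each $i$, a fresh copy of $T^i$ with bags $B_z\cap C_i$, joining the copy of $x$ to $y$. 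Then $(T',\{B'_x\}_{x\in V(T')})$ is a valid tree decomposition: each vertex of $V(G_{x|y})$ belongs to exactly one $C_i$, so it appears only in copy $i$ (together with its unchanged occurrences in $T_{y|x}$ when it lies in $S_i$, which still glue up through the new edge), keeping every occurrence set a subtree; each edge of $G_{x|y}$ lies in a single $C_i$ and is covered inside copy $i$; and all vertices and edges outside $V(G_{x|y})$ are untouched in $T_{y|x}$. Every new bag is a subset of an old bag, so the width is preserved and the new decomposition is again optimal. I expect this verification---carefully tracking the occurrence set of each vertex---to be the main obstacle; the rest is short.

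It remains to check that $\Psi$ strictly decreases. Edges inside $T_{y|x}$ and their bags are unchanged. The edge $xy$, contributing $\size{S}^2$, is replaced by the $k$ edges joining $y$ to the copies of $x$, whose bags meet $B_y$ in exactly $S_1,\dots,S_k$; since $k\ge 2$ and each $S_i\ne\emptyset$, $\sum_i\size{S_i}^2<\bigl(\sum_i\size{S_i}\bigr)^2=\size{S}^2$. Each edge $pq$ of $T_{x|y}$, contributing $\size{B_p\cap B_q}^2$, is replaced by its copies inside the $T^i$, the copy in $T^i$ contributing $\size{B_p\cap B_q\cap C_i}^2$; since the sets $B_p\cap B_q\cap C_i$ are pairwise disjoint with union $B_p\cap B_q$, the total is at most $\size{B_p\cap B_q}^2$. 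Summing over all edges gives $\Psi(T')<\Psi(T)$, and since $\Psi$ is a non-negative integer the iteration terminates, producing the desired decomposition.
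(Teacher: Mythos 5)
Your proof is correct. Both arguments minimize a potential over optimal tree decompositions and show that one with a disconnected side $G_{x|y}$ can be improved, but the potential and the surgery are genuinely different. The paper lexicographically minimizes the ``fatness'' tuple recording how many bags there are of each size, and performs a purely local modification: it replaces the single node $x$ by one node $x_C$ with bag $B_x\cap V(C)$ per component $C$ of $G_{x|y}$. This requires choosing the offending edge $xy$ so that $\size{V(T_{x|y})}$ is minimal, ensuring each $G_{z|x}$ for $z\in N_T(x)\setminus\{y\}$ is connected and hence can be reattached to the unique correct copy $x_{C_z}$. You instead minimize the numerical quantity $\Psi=\sum_{xy\in E(T)}\size{B_x\cap B_y}^2$ and replace the whole subtree $T_{x|y}$ by $k$ restricted copies $T^i$, one per component $C_i$. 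This is a heavier modification, but it removes the need to pick $xy$ minimally or to reason about which side a neighbouring subtree lands in: each vertex of $V(G_{x|y})$ belongs to exactly one $C_i$, so its occurrence set simply migrates to copy $i$. Your potential is a single non-negative integer that drops by at least one per step (strict decrease at the new edges incident to $y$ by convexity, non-increase elsewhere because the sets $B_p\cap B_q\cap C_i$ partition $B_p\cap B_q$), so termination is immediate; the paper's lexicographic argument has the same effect but needs one to observe that the space of candidate fatness tuples is well-ordered. Both approaches are sound; yours trades a bulkier replacement for cleaner bookkeeping and a simpler termination measure.
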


\begin{proof}
Let $t=\tw(G)+1$.
The \emph{fatness} of an optimal tree decomposition $(T,\{B_x\}_{x\in V(T)})$ of $G$ is the $t$-tuple $(k_0,\ldots,k_{t-1})$ such that $k_i$ is the number of bags $B_x$ of size $t-i$.
Let $(T,\{B_x\}_{x\in V(T)})$ be an optimal tree decomposition of $G$ with lexicographically minimum fatness.
(The idea of taking such a tree decomposition comes from the proof of existence of optimal ``lean'' tree decompositions due to Bellenbaum and Diestel \cite[Theorem~3.1]{BD02}.)
We show that it has the required property.

Suppose it does not.
Let $xy$ be an edge of $T$ such that $G_{x|y}$ is disconnected and the number of nodes in the subtree $T_{x|y}$ of $T$ is minimized.
Let $\calC$ be the family of connected components of $G_{x|y}$, so that $\size{\calC}\geq 2$.
Let $Z=N_T(x)\setminus\{y\}$.
For every node $z\in Z$, let $C_z$ be the connected component of $G_{x|y}$ that contains $G_{z|x}$, which exists because the choice of $xy$ guarantees that $G_{z|x}$ is connected.

We modify $(T,\{B_x\}_{x\in V(T)})$ into a new tree decomposition of $G$ as follows.
We keep all nodes other than $x$ (with their bags $B_x$) and all edges non-incident to $x$.
We replace the node $x$ by $\size{\calC}$ nodes $x_C$ with bags $B_{x_C}=B_x\cap V(C)$ for each $C\in\calC$.
We replace the edge $xy$ by $\size{\calC}$ edges $x_Cy$ for each $C\in\calC$, and we replace the edge $xz$ by an edge $x_{C_z}z$ for each $z\in Z$.
It is straightforward to verify that what we obtain is indeed a tree decomposition of $G$ and it has width $t$.

Since $G$ is connected, we have $B_{x_C}=B_x\cap V(C)\neq\emptyset$ for every $C\in\calC$.
This and the assumption that $\size{\calC}\geq 2$ imply that $\size{B_{x_C}}<\size{B_x}$ for all $C\in\calC$.
We conclude that the fatness of the new tree decomposition is lexicographically less than the fatness of $(T,\{B_x\}_{x\in V(T)})$, which contradicts the assumption that the latter is lexicographically minimal.
\end{proof}

\begin{proof}[Proof of Theorem~\ref{thm:existence}]
The proof goes by induction on $h$.
The statement is true for $h=0$: if a connected graph $G$ has a cycle or a vertex of degree at least $3$, then $G\in\calT_1$, and otherwise $G$ is a path, so $\pw(G)\leq 1$.
For the rest of the proof, assume that $h\geq 1$ and the statement is true for $h-1$.

Let $G$ be a connected graph width treewidth at most $t-1$ and $(T,\{B_x\}_{x\in V(T)})$ be an optimal tree decomposition of $G$ obtained from Lemma~\ref{lem:connected-decomposition}.
Thus $\size{B_x}\leq t$ for every node $x$ of $T$ and $G_{x|y}$ is connected for every oriented edge $xy$ of $T$.
For every oriented edge $xy$ of $T$, let $F_{x|y}$ be the subgraph of $G$ induced on the vertices not in $G_{y|x}$, that is, on the vertices that belong only to bags in $T_{x|y}$ and to no other bags.
Let $E$ be the set of edges $xy$ of $T$ such that both $F_{x|y}$ and $F_{y|x}$ have a connected component belonging to $\calT_h$.

Suppose $E=\emptyset$.
It follows that every pair of trees of the form $T_{x|y}$ such that $F_{x|y}$ has a connected component in $\calT_h$ has a common node.
This implies that all such trees have a common node, say $z$, by the well-known fact that subtrees of a tree have the Helly property \cite[Theorem~4.1]{Horn72}.
For every neighbor $y$ of $z$ in $T$ and every connected component $C$ of $F_{y|z}$, since $C\notin\calT_h$, the induction hypothesis gives $\pw(C)\leq t(h-1)+1$.
Lemma~\ref{lem:combine}~\ref{item:combine-vertex} applied with $q=z$ yields $\pw(G)\leq th+1$.

For the rest of the proof, assume $E\neq\emptyset$.
Since every connected supergraph of a graph from $\calT_h$ belongs to $\calT_h$, the set $E$ is the edge set of some subtree $K$ of $T$.
Let $Z$ be the set of leaves of $K$.
Since $K$ has at least one edge, we have $\size{Z}\geq 2$.

Suppose $\size{Z}\geq 3$.
Choose any distinct $z_1,z_2,z_3\in Z$.
For each $i\in\{1,2,3\}$, let $C_i$ be a connected component of $F_{z_i|x_i}$ that belongs to $\calT_h$, where $x_i$ is the unique neighbor of $z_i$ in $K$.
For each $i\in\{1,2,3\}$, the subgraph $G_{x_i|z_i}$ is connected, is vertex-disjoint from $C_i$, and contains the other two of $C_1$, $C_2$, and $C_3$.
Consequently, any two of the sets $V(C_1)$, $V(C_2)$, and $V(C_3)$ can be connected by a path in $G$ avoiding the third one.
This shows that $G\in\calT_{h+1}$.

Now, suppose $\size{Z}=2$.
It follows that $K$ is a path $x_1\ldots x_m$, where $Z=\{x_1,x_m\}$.
For every node $x_i$ of $K$, every neighbor $y$ of $x_i$ in $T-K$, and every connected component $C$ of $F_{y|x_i}$, since $C\notin\calT_h$, the induction hypothesis gives $\pw(C)\leq t(h-1)+1$.
Lemma~\ref{lem:combine}~\ref{item:combine-subpath} applied with $Q=K$ yields $\pw(G)\leq th+1$.
\end{proof}

\section{Proof of Theorem \ref{thm2}}
\label{sec:proof2}

By Lemma~\ref{lem:Th-subdivision}, every graph in $\calT_h$ contains a subdivision of a complete binary tree of height $h$, which can be computed in polynomial time from a $\calT_h$-witness of $G$.
Therefore, Theorem~\ref{thm2} is a direct corollary to the following statement, the proof of which is presented in this section.

\begin{theorem}
\label{thm:algorithm}
There is a polynomial-time algorithm which, given a connected graph\/ $G$ and a tree decomposition of\/ $G$ of width at most\/ $t-1$, computes
\begin{itemize}
\item a number\/ $h\in\setN$ such that\/ $G\in\calT_h$ and\/ $\pw(G)\leq th+1$,
\item a\/ $\calT_h$-witness for\/ $G$,
\item a path decomposition of\/ $G$ of width at most\/ $th+1$.
\end{itemize}
\end{theorem}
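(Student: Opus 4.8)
The plan is to make the existence proof of Theorem~\ref{thm:existence} constructive by recursing on $h$ and, at each level, replacing the non-algorithmic Helly-type argument with an explicit search. First I would handle the base case $h=0$ directly: test whether $G$ is a path (then output $h=0$, the witness $\witness{V(G)}$, and the trivial path decomposition) or else detect a cycle or a degree-$\geq 3$ vertex, in which case $G\in\calT_1$ and we proceed to larger~$h$. For the inductive step, given the input tree decomposition $(T,\{B_x\}_{x\in V(T)})$ of width $\leq t-1$, I would first massage it into the form guaranteed by Lemma~\ref{lem:connected-decomposition}, namely so that every $G_{x|y}$ is connected. The proof of that lemma is itself constructive (repeatedly split a node whose $G_{x|y}$ is disconnected, strictly decreasing a lexicographic ``fatness'' potential), so this preprocessing runs in polynomial time.

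The heart of the algorithm is a subroutine that, for the current value of $h$, decides for each oriented edge $xy$ of $T$ whether some connected component of $F_{x|y}$ lies in $\calT_h$, and if so returns a $\calT_h$-witness for it. I would compute this bottom-up over $T$ rooted arbitrarily: having recursively obtained, for the children $y_1,\dots,y_d$ of a node $x$, whether each $F_{y_i|x}$ contains a $\calT_{h-1}$-subgraph together with witnesses, I can decide whether $F_{x|y}$ (for $y$ the parent of $x$) contains a $\calT_h$-subgraph by checking, among the sub-pieces $F_{y_i|x}$ and the ``local'' graph around $B_x$, whether three of them can be interconnected by paths each avoiding the other two — this is a bounded collection of disjoint-paths / connectivity tests of the kind already used (implicitly) in the proof of Theorem~\ref{thm:existence}, and each such test is polynomial. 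Running this subroutine tells me the set $E$ of ``branch'' edges of $T$. If $E=\varnothing$, the subtrees $T_{x|y}$ witnessing membership in $\calT_h$ pairwise intersect, so by Helly they share a node $z$; I find $z$ explicitly (e.g.\ as the node minimizing total distance to those subtrees, or by intersecting them greedily), recurse on every component $C$ of $G-B_z$ with the restricted tree decomposition to get $\pw(C)\leq t(h-1)+1$ and path decompositions, and glue via Lemma~\ref{lem:combine}\ref{item:combine-vertex} to get $\pw(G)\leq th+1$; here $G\in\calT_{h-1}$ is false for the components, but $G$ itself need not be in $\calT_h$, so I must be careful to report the correct value of~$h$ — concretely, I keep increasing~$h$ until the first time $E\neq\varnothing$ fails to produce a branching, or rather I run the whole procedure for the smallest $h$ at which $G\notin\calT_{h+1}$, detecting that threshold by the very same test.

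More precisely, I would organize the outer loop as follows: starting from $h=0$, repeatedly run the subroutine above; as long as it certifies $G\in\calT_{h+1}$ (either because $\size Z\geq 3$ for the subtree $K$ spanned by $E$, yielding three interconnectable components and hence a $\calT_{h+1}$-witness $\witness{V(G);W_1,W_2,W_3}$ assembled from the three recursively computed $\calT_h$-witnesses; or because one can still push further), increment~$h$. The loop terminates because $\calT_h$ requires at least $3^h$ vertices, so $h\leq\log_3 n$. When it stops — i.e.\ when $E=\varnothing$, or $E\neq\varnothing$ but $K$ is a path (the $\size Z=2$ case) — I obtain $\pw(G)\leq th+1$ via Lemma~\ref{lem:combine} (part~\ref{item:combine-vertex} for $E=\varnothing$ using the Helly node $z$, part~\ref{item:combine-subpath} for the path $K$ using $Q=K$), together with the promised path decomposition built by the algorithmic half of Lemma~\ref{lem:combine}; and I already hold a $\calT_h$-witness for $G$ from the previous successful iteration. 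The combinatorial correctness is exactly Theorem~\ref{thm:existence}; what the algorithm adds is the explicit computation of the splitting structure and the witnesses.

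The main obstacle I anticipate is the recursive computation of $\calT_h$-membership and witnesses with a running time that stays polynomial rather than blowing up to $n^{\Theta(h)}$. The danger is that determining whether a piece $F_{x|y}$ lies in $\calT_h$ naively invokes the $\calT_{h-1}$ test on many overlapping subpieces, and nesting this over the $O(\log n)$ levels of $h$ could cause a quasi-polynomial blow-up; worse, within a single level the ``three interconnected sets'' condition is a constraint over triples of the $O(n)$ subtrees hanging off a node, which must be evaluated without enumerating all triples. I would control this by memoizing: for each oriented edge $xy$ and each relevant value $h'\leq h$, store a single bit (``does $F_{x|y}$ contain a $\calT_{h'}$-subgraph?'') plus one witness, and compute these in a single bottom-up sweep per value of $h'$, so that the total work is $O(\log n)$ sweeps each doing polynomially many connectivity/flow computations on $G$. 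The interconnection test at a node reduces to asking, for the multiset of ``reachable'' subpieces, whether at least three of them are pairwise linkable while avoiding the rest, which by the connectedness of the $G_{x|y}$'s (guaranteed by Lemma~\ref{lem:connected-decomposition}) simplifies to a small number of ``is there a component in $\calT_{h'}$ on this side of $B_x$'' lookups — exactly mirroring the case analysis ($\size Z\geq 3$ versus $\size Z=2$ versus $E=\varnothing$) in the proof of Theorem~\ref{thm:existence}. Making this bookkeeping precise, and verifying that the witnesses produced genuinely satisfy the inductive definition (in particular that the three chosen sets are pairwise disjoint and correctly interconnected), is where the real work lies, but it is routine once the recursive structure above is set up.
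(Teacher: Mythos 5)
Your plan—to directly algorithmize the proof of Theorem~\ref{thm:existence} by recursing on $h$ and memoizing $\calT_{h'}$-membership bits per oriented edge of $T$—does not match the paper's approach, and the crucial step you flag as ``where the real work lies'' is exactly where it breaks down. The subproblems generated by the existence proof are not indexed by oriented edges of a fixed tree $T$: in the $E=\emptyset$ case you recurse on connected components $C$ of $G-B_z$, and in the $\size Z=2$ case on components of $F_{y\mid x_i}$; each such $C$ is a fresh connected induced subgraph to which you must re-apply the \emph{entire} procedure, and in your plan you would re-run the fattening argument of Lemma~\ref{lem:connected-decomposition} on $C$ to regain the connectivity property $G_{x\mid y}$ connected. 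That re-decomposition produces a tree that need have no relation to the original $T$, so the memo table keyed on oriented edges of $T$ does not apply to the recursive calls. Even if you never re-ran Lemma~\ref{lem:connected-decomposition} and merely restricted the decomposition, the components would form a laminar family of up to $n$ pieces per level of $T$, and at each stage the algorithm potentially recurses on all of them with fresh Helly/path case analysis; nothing in your proposal bounds the number of \emph{distinct} component subgraphs encountered across the whole recursion by a polynomial. This is not a bookkeeping detail: the quasi-polynomial blow-up you anticipate is real, and a flat memoization by $(xy,h')$ does not prevent it because the recursion simply is not structured around those pairs.

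The paper avoids this by abandoning Lemma~\ref{lem:connected-decomposition} entirely. It builds a \emph{normalized} tree decomposition from a single fixed input decomposition (Subsection~\ref{subsec:normalized}): the node set $\Sub(G)$ consists of connected components $\CC{G}(V'_x)$, so every node is automatically connected and the subgraphs appearing in recursive calls lie in a common laminar family tied to the input decomposition (Lemma~\ref{lem:Tin}). On top of this, two further devices are needed and are absent from your plan: the \emph{upper bound request} $b$, which prunes a call as soon as a $\calT_b$-witness is found and thereby prevents deep exploration of large pieces; and the replacement of ``parent components'' $\gamma$ by enlarged \emph{secondary components} $\hat\gamma=\gamma$ or a component of $G[U_{i,i+1}\setminus(R_i\cup R_{i,i+1}\cup R_{i+1})]$, chosen precisely so that Lemma~\ref{lem:level} and Lemma~\ref{lem:subproblems} force all subproblems $(\gamma,k)$ into $\Sub(\inputG-\bigcup_{i\geq k}R^i)$, giving the $O(n\log n)$ bound. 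Without some analogue of these two ideas, a direct transcription of Theorem~\ref{thm:existence} does not yield a polynomial-time algorithm, and the ``routine bookkeeping'' you defer is in fact the entire content of Theorem~\ref{thm:algorithm}.
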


\subsection{Normalized Tree Decompositions}
\label{subsec:normalized}

Let $G$ be a graph with a fixed rooted tree decomposition $(T',\{B'_x\}_{x\in V(T')})$ of width at most $t-1$, which we call the \emph{initial tree decomposition} of $G$.
For a node $x$ of $T'$, let $T'_x$ be the subtree of $T'$ consisting of $x$ and all nodes of $T'$ lying below $x$, and let $V'_x$ be the set of vertices of $G$ contained only in bags of $T'_x$ (i.e., in no bags of $T'-T'_x$).
We show how to turn $(T',\{B'_x\}_{x\in V(T')})$ into a \emph{normalized tree decomposition} of $G$.
Intuitively, the latter will be a cleaned-up version of the initial tree decomposition with some additional properties that will be useful in the algorithm.

For a subset $X$ of the vertex set of a graph $G$, let $\CC{G}(X)$ denote the family of connected components of the induced subgraph $G[X]$ (which is empty when $X=\emptyset$).
Let
\[\Sub(G)=\bigcup_{x\in V(T')}\CC{G}(V'_x).\]
For a connected graph $G$, the set $\Sub(G)$ will be the node set of the normalized tree decomposition of $G$.
We will use Greek letters $\alpha$, $\beta$, etc.\ to denote members of $\Sub(G)$ (nodes of the normalized tree decomposition of $G$).
Here are some easy consequences of these definitions and the assumption that $(T',\{B'_x\}_{x\in V(T')})$ is a tree decomposition of $G$.

\begin{lemma}
\label{lem:Sub}
The following holds for every graph\/ $G$ and for\/ $\Sub(G)$ defined as above.
\begin{enumerate}
\item If\/ $G$ is connected, then\/ $G\in\Sub(G)$.
\item\label{item:Sub-laminar} If\/ $\alpha,\beta\in\Sub(G)$, then\/ $V(\alpha)\subseteq V(\beta)$, or\/ $V(\beta)\subseteq V(\alpha)$, or\/ $V(\alpha)\cap V(\beta)=\emptyset$.
\item If\/ $\alpha,\beta\in\Sub(G)$ and\/ $V(\alpha)\cap V(\beta)=\emptyset$, then no edge of\/ $G$ connects\/ $V(\alpha)$ and\/ $V(\beta)$.
\end{enumerate}
\end{lemma}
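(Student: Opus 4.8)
The plan is to derive all three parts from a single structural fact about the collection $\{V'_x\}_{x\in V(T')}$ — that it is a \emph{laminar family} — together with a companion observation about edges. First I would verify laminarity. If $x$ is an ancestor of $y$ in $T'$ (or $x=y$), then $T'_y\subseteq T'_x$, and since $V'_z$ is by definition the set of vertices appearing in bags of $T'_z$ only, this immediately gives $V'_y\subseteq V'_x$. If instead $x$ and $y$ are incomparable, then $T'_x$ and $T'_y$ are vertex-disjoint subtrees of $T'$; a vertex $v\in V'_x\cap V'_y$ would force the non-empty set of nodes of $T'$ whose bag contains $v$ to lie inside $V(T'_x)\cap V(T'_y)=\emptyset$, which is absurd, so $V'_x\cap V'_y=\emptyset$. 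The same reasoning yields the companion observation: when $T'_x$ and $T'_y$ are disjoint, no edge of $G$ joins $V'_x$ to $V'_y$, since an edge with one endpoint in each would lie in a common bag, and that bag would then belong to both $T'_x$ and $T'_y$.

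Part~1 is then immediate: for the root $r$ of $T'$ we have $T'_r=T'$, hence $V'_r=V(G)$, so $\CC{G}(V'_r)=\CC{G}(V(G))$ is exactly the set of connected components of $G$; when $G$ is connected this set is $\{G\}$, and therefore $G\in\Sub(G)$.

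For part~2, write $\alpha\in\CC{G}(V'_x)$ and $\beta\in\CC{G}(V'_y)$. If $V'_x\cap V'_y=\emptyset$, then $V(\alpha)\cap V(\beta)\subseteq V'_x\cap V'_y=\emptyset$. Otherwise, by laminarity we may assume $V'_y\subseteq V'_x$; then $G[V'_y]$ is an induced subgraph of $G[V'_x]$, so the connected set $V(\beta)$ is contained in a single connected component of $G[V'_x]$, which is either $\alpha$ (giving $V(\beta)\subseteq V(\alpha)$) or a different component disjoint from $\alpha$ (giving $V(\alpha)\cap V(\beta)=\emptyset$). For part~3, suppose towards a contradiction that some edge $uv$ of $G$ has $u\in V(\alpha)$ and $v\in V(\beta)$ while $V(\alpha)\cap V(\beta)=\emptyset$, with $\alpha\in\CC{G}(V'_x)$ and $\beta\in\CC{G}(V'_y)$. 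The case $V'_x\cap V'_y=\emptyset$ is excluded by the companion observation, so we may again assume $V'_y\subseteq V'_x$; then $u,v\in V'_x$, so $uv$ is an edge of $G[V'_x]$ with $u\in V(\alpha)$, and since $\alpha$ is a connected component of $G[V'_x]$ this forces $v\in V(\alpha)$, contradicting $V(\alpha)\cap V(\beta)=\emptyset$.

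There is no genuine obstacle here — everything follows from the defining axioms of a tree decomposition and elementary properties of subtrees of a tree — but the one point that rewards a little care is the verification of laminarity and of the companion edge fact in the incomparable case, since both hinge on combining the ``appears only below $x$'' definition of $V'_x$ with the non-emptiness of the set of bags containing a given vertex and with the subtree property of tree decompositions; once those are in place, parts~1--3 are routine.
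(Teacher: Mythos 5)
Your proof is correct and is the natural argument that the paper implicitly expects but leaves to the reader (the paper states Lemma~\ref{lem:Sub} without proof, calling the claims ``easy consequences'' of the definitions). Your structure---first establishing laminarity of $\{V'_x\}_{x\in V(T')}$ and the no-edges-between-incomparable-$V'_x$ observation, then deriving (1)--(3)---is exactly the intended route. One small point worth making explicit in the write-up of part~3: from $V'_x\cap V'_y=\emptyset$ you invoke the companion observation, which was stated for $T'_x,T'_y$ disjoint; the missing half-step is that $V'_x\cap V'_y=\emptyset$ with both non-empty (which holds here, since $\alpha$ and $\beta$ exist) forces $x$ and $y$ to be incomparable, since otherwise one of $V'_x,V'_y$ would contain the other. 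This is immediate from the laminarity case analysis you already did, but spelling it out would close the only implicit gap.
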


Now, assume that $G$ is a connected graph.
By Lemma~\ref{lem:Sub}, the members of $\Sub(G)$ are organized in a rooted tree with root $G$ and with the following properties for any nodes $\alpha,\beta\in\Sub(G)$:
\begin{itemize}
\item if $\beta$ is a descendant of $\alpha$ (i.e., $\alpha$ lies on the path from the root to $\beta$), then $V(\beta)\subseteq V(\alpha)$;
\item if neither of $\alpha$ and $\beta$ is a descendant of the other, then $V(\alpha)$ and $V(\beta)$ are disjoint and non-adjacent in $G$.
\end{itemize}
Let $T$ denote this rooted tree (not to be confused with $T'$).
The normalized tree decomposition will be built on this tree $T$.
For each $\alpha\in\Sub(G)$, let $A_\alpha$ be the set of vertices $v\in V(G)$ for which $\alpha$ is the smallest subgraph in $\Sub(G)$ containing $v$ (which must be unique), and let $B_\alpha=A_\alpha\cup N_G(V(\alpha))$.
(Recall that $N_G(V(\alpha))\subseteq V(G)\setminus V(\alpha)$ by the definition of neighborhood.)

\begin{lemma}
\label{lem:normalized}
The following holds for every connected graph\/ $G$.
\begin{enumerate}
\item\label{item:normalized-A} $\{A_\alpha\}_{\alpha\in\Sub(G)}$ is a partition of\/ $V(G)$ into non-empty sets.
\item\label{item:normalized-td} $(T,\{B_\alpha\}_{\alpha\in\Sub(G)})$ is a tree decomposition of\/ $G$.
\item\label{item:normalized-width} $\size{B_\alpha}\leq t$ for every\/ $\alpha\in\Sub(G)$.
\end{enumerate}
\end{lemma}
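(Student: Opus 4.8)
The plan is to verify the three parts of Lemma~\ref{lem:normalized} more or less in the order stated, since each relies on the structure of the tree $T$ built from $\Sub(G)$ and the laminarity given by Lemma~\ref{lem:Sub}. For part~\ref{item:normalized-A}, the key point is that every vertex $v\in V(G)$ lies in \emph{some} member of $\Sub(G)$: indeed, $G$ itself is in $\Sub(G)$ by Lemma~\ref{lem:Sub}(1), so the family of members containing $v$ is non-empty, and by laminarity (Lemma~\ref{lem:Sub}\ref{item:Sub-laminar}) these members are linearly ordered by inclusion of their vertex sets, hence there is a unique smallest one, which is precisely the $\alpha$ with $v\in A_\alpha$. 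This gives that the sets $A_\alpha$ partition $V(G)$. Non-emptiness of each $A_\alpha$ requires a short argument: if $\alpha\in\Sub(G)$, pick $x\in V(T')$ with $\alpha\in\CC{G}(V'_x)$ and $x$ as low as possible in $T'$; any vertex of $\alpha$ that also lies in a strictly smaller member $\beta$ would force $\beta$ to come from a bag strictly below $x$, and chasing the definition of $V'_x$ together with the connectivity of $\alpha$ and $\beta$ shows some vertex of $\alpha$ cannot be covered this way. (Alternatively one observes that the minimal members of $\Sub(G)$ below $\alpha$ that are children of $\alpha$ in $T$ have vertex sets that are disjoint and non-adjacent, so they cannot cover all of the connected graph $\alpha$ unless there is exactly one child equal to $\alpha$, which is impossible since children are strictly smaller.)

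For part~\ref{item:normalized-td}, I would check the two tree-decomposition axioms directly. For the edge axiom: given an edge $uv$ of $G$, let $\alpha$ be the smallest member of $\Sub(G)$ containing both $u$ and $v$ (it exists by the same laminarity/linear-order argument, applied to the intersection of the chains for $u$ and for $v$, using Lemma~\ref{lem:Sub}(3) to rule out the disjoint case since $uv$ is an edge). Then at least one of $u,v$ lies in $A_\alpha$ — say $u\in A_\alpha$ — and the other, $v$, either lies in $A_\alpha$ too or lies in a strictly smaller member, but then $v\notin V(\alpha)$ is impossible... more carefully: $v\in V(\alpha)$, and the member $\beta\ni v$ that is the child of $\alpha$ on the chain to $v$ has $u\in N_G(V(\beta))$ since $u\notin V(\beta)$ (else $\alpha$ would not be smallest) and $uv$ is an edge; hence $u\in B_\beta$ and $v\in A_\beta\subseteq B_\beta$, so $u,v\in B_\beta$. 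For the connectivity (subtree) axiom: fix $v\in V(G)$ and let $\alpha$ be the member with $v\in A_\alpha$. I claim the set of $\beta\in\Sub(G)$ with $v\in B_\beta$ consists of $\alpha$ together with those ancestors $\beta$ of $\alpha$ (strictly above $\alpha$ in $T$) for which $v\in N_G(V(\beta))$, and that these form a connected path going up from $\alpha$. The ancestors form a path in $T$ by construction of $T$; the only thing to verify is that if $\beta'$ is an ancestor of $\beta$, both ancestors of $\alpha$, and $v\in N_G(V(\beta'))$, then $v\in N_G(V(\beta))$ as well — but this follows because $v$ has a neighbor in $V(\beta')\subseteq V(\beta)$ while $v\notin V(\beta)$ (since $v\in V(\alpha)\subseteq V(\beta)$ would contradict $\alpha$ being a strict descendant... wait, $v\in V(\alpha)$ and $\alpha$ is a descendant of $\beta$, so $v\in V(\beta)$, so $v$ is \emph{not} a neighbor of $V(\beta)$). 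So I must be more careful here: the correct claim is that $v\in B_\beta$ exactly when $\beta=\alpha$, or $\beta$ is a \emph{proper ancestor} of $\alpha$ with $v\in N_G(V(\beta))$. Since $V(\alpha)\subseteq V(\beta)$ for such $\beta$ and $v\in A_\alpha\subseteq V(\alpha)$, we indeed have $v\notin V(\beta)$... no: $v\in V(\alpha)\subseteq V(\beta)$, contradiction again. The resolution is that $N_G(V(\beta))$ is disjoint from $V(\beta)$ by definition, so $v\in N_G(V(\beta))$ forces $v\notin V(\beta)$, hence $\beta$ is \emph{not} an ancestor of $\alpha$ and not a descendant; but then $V(\alpha)\cap V(\beta)=\emptyset$ and non-adjacent, contradicting that $v\in A_\alpha\subseteq V(\alpha)$ has a neighbor in $V(\beta)$. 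So in fact $v\in B_\beta$ only for $\beta=\alpha$?? That is wrong too, because $B_\alpha = A_\alpha\cup N_G(V(\alpha))$ includes neighbors of $\alpha$. The cleanest formulation: $v\in B_\beta$ iff $v\in A_\beta$, or $v\in N_G(V(\beta))$ and $\beta$ is such that $\alpha$ (the home of $v$) is disjoint-and-nonadjacent from $\beta$ — which can't happen. Hence the right statement is: $v\in B_\beta$ iff $\beta=\alpha$, or $\beta$ lies on the path in $T$ from $\alpha$ up to (and including) the highest ancestor of $\alpha$ still "seeing" $v$, where seeing means $\beta$ has $\alpha$ as a descendant... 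I will need to sort out this bookkeeping carefully; I expect this to be the main obstacle, and the fix is to note that $v\in N_G(V(\beta))$ holds for $\beta$ a proper ancestor of $\alpha$ precisely because $v$ might have neighbors \emph{outside} $V(\alpha)$ but \emph{inside} $V(\beta)$ — so the set of such $\beta$ is a suffix of the root-to-$\alpha$ path (closed upward under the "seeing" relation since $V(\beta)$ grows), hence connected; combined with $\alpha$ itself it is still connected, and that is the subtree axiom.

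For part~\ref{item:normalized-width}, I would argue that $B_\alpha\subseteq B'_x$ for a suitable node $x$ of the initial decomposition $T'$, whence $\size{B_\alpha}\le\size{B'_x}\le t$. Concretely, choose $x\in V(T')$ with $\alpha\in\CC{G}(V'_x)$, and choose $x$ minimal with this property. Then $N_G(V(\alpha))$ is contained in $B'_x$: any neighbor $w$ of $V(\alpha)$ has a neighbor in $V(\alpha)\subseteq V'_x$, and by the tree-decomposition axioms of $(T',\{B'_x\})$ together with $w\notin V'_x$, the bags containing $w$ and those containing that neighbor must both meet, forcing $w\in B'_x$ (the bag at $x$ separates $V'_x$ from the rest). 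And $A_\alpha\subseteq V(\alpha)\subseteq V'_x$; moreover a vertex of $A_\alpha$ that lies in some bag strictly below $x$ would belong to a strictly smaller connected piece $V'_{x'}$ hence to a strictly smaller member of $\Sub(G)$, contradicting $v\in A_\alpha$ — so actually $A_\alpha$ meets the bag $B'_x$, giving $A_\alpha\subseteq B'_x$ as well (every vertex of $A_\alpha$ appears in some bag of $T'_x$ but in no bag strictly below $x$, hence in $B'_x$). Therefore $B_\alpha=A_\alpha\cup N_G(V(\alpha))\subseteq B'_x$, and the width bound follows. Throughout I will lean on the standard fact that in any tree decomposition the bag at a node $x$ separates $V'_x$ from $V(G)\setminus(V'_x\cup B'_x)$, which is exactly what makes the neighborhood of a "descendant-closed" vertex set sit inside a single bag.
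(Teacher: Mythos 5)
Parts~\ref{item:normalized-A} and~\ref{item:normalized-width} of your proposal are essentially the paper's argument: for~\ref{item:normalized-A}, your ``alternative'' observation (the children of $\alpha$ are pairwise disjoint and non-adjacent, so they cannot cover the connected graph $\alpha$) is exactly the proof given; for~\ref{item:normalized-width}, choosing the lowest $x$ with $\alpha\in\CC{G}(V'_x)$ and showing $A_\alpha\subseteq B'_x$ and $N_G(V(\alpha))\subseteq B'_x$ matches the paper.

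Part~\ref{item:normalized-td} is where you get stuck, and the source of the trouble is a fixed direction error that you circle around but never correct. You repeatedly try to describe the set $\{\beta : v\in B_\beta\}$ (for $v\in A_\alpha$) as consisting of $\alpha$ together with certain \emph{ancestors} of $\alpha$, and each time you run into the contradiction that for a proper ancestor $\beta$ one has $v\in V(\alpha)\subseteq V(\beta)$, hence $v\notin N_G(V(\beta))$. That contradiction is telling you the truth: no proper ancestor of $\alpha$ can contribute. The relevant nodes are the \emph{descendants} of $\alpha$. Concretely: $v\in B_\beta$ with $\beta\neq\alpha$ forces $v\in N_G(V(\beta))$, so $v\notin V(\beta)$ yet $v$ has a neighbor in $V(\beta)$; by laminarity (Lemma~\ref{lem:Sub}) and non-adjacency of disjoint parts, $V(\beta)\subsetneq V(\alpha)$, i.e.\ $\beta$ is a strict descendant of $\alpha$. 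Connectivity then follows because for any $\gamma$ on the $T$-path from $\alpha$ to such a $\beta$ one has $V(\beta)\subseteq V(\gamma)\subsetneq V(\alpha)$, so $v\notin V(\gamma)$ (as $v\in A_\alpha$) while $v$ still has a neighbor in $V(\gamma)$, giving $v\in N_G(V(\gamma))\subseteq B_\gamma$. So the bags containing $v$ form a subtree hanging \emph{below} $\alpha$, not a suffix of the root-to-$\alpha$ path.

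Your edge-axiom argument also has a small but real bug: you take $\alpha$ to be the smallest member containing both $u$ and $v$, get (say) $u\in A_\alpha$, and let $\beta$ be the child of $\alpha$ on the chain toward $v$; but then you assert $v\in A_\beta$, which is unwarranted---$v$ could lie in a proper descendant of $\beta$, in which case $v\in V(\beta)$ but $v\notin B_\beta$. The fix (and what the paper does) is to let the witness node be the \emph{smallest} member $\beta$ containing $v$, so $v\in A_\beta$ by definition, and then $u\notin V(\beta)$ together with the edge $uv$ gives $u\in N_G(V(\beta))\subseteq B_\beta$. (The paper phrases it slightly differently, taking $\alpha$ minimal containing at least one of $u,v$, but it is the same idea.)
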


\begin{proof}
It is clear from the definition that $\bigcup_{\alpha\in\Sub(G)}A_\alpha=V(G)$.
Let $\alpha\in\Sub(G)$.
Since $\alpha$ is connected and the vertex sets of the children of $\alpha$ in $T$ are pairwise disjoint and non-adjacent, at least one vertex of $\alpha$ is not a vertex of any child of $\alpha$ and thus belongs to $A_\alpha$.
This shows \ref{item:normalized-A}.

For the proof of \ref{item:normalized-td}, let $\alpha\in\Sub(G)$ and $v\in A_\alpha$.
Then $v\in B_\alpha$.
If $v\in B_\beta$ for some other $\beta\in\Sub(G)$, then $v$ must be a neighbor of $V(\beta)$ in $G$, which implies that $\beta$ is a descendant of $\alpha$ in $T$.
In that case, $v$ is also a neighbor of $V(\gamma)$ (and thus $v\in B_\gamma$) for every internal node $\gamma$ on the path from $\alpha$ to $\beta$ in $T$.
This shows that the nodes $\beta$ of $T$ such that $v\in B_\beta$ form a non-empty connected subtree of $T$.
It remains to show that any two adjacent vertices $u$ and $v$ of $G$ belong to a common bag $B_\alpha$.
Let $\alpha$ be a minimal member of $\Sub(G)$ containing at least one of $u$ and $v$; say, it contains $v$.
Then $v\in A_\alpha$.
If $u\notin A_\alpha$, then $u\notin V(\alpha)$, by minimality of $\alpha$, so $u$ is a neighbor of $V(\alpha)$ in $G$.
In either case, $u,v\in B_\alpha$.

For the proof of \ref{item:normalized-width}, let $\alpha\in\Sub(G)$, and let $x$ be the lowest node of $T'$ such that $\alpha\in\CC{G}(V'_x)$.
Every vertex $v\in A_\alpha$ belongs to $B'_x$; otherwise it would belong to $V'_y$ for some child $y$ of $x$ in $T'$, and the connected component of $G[V'_y]$ containing $v$ would be a proper induced subgraph of $\alpha$, contradicting $v\in A_\alpha$.
Now, let $v$ be a neighbor of $V(\alpha)$ in $G$.
Thus $v$ is a neighbor of some vertex $u\in V(\alpha)$ while $v\notin V(\alpha)$.
It follows that $u\in V'_x$ while $v\notin V'_x$, which implies that $v$ belongs to some bag of $T'-T'_x$ as well as some bag of $T'_x$ (as $uv$ is an edge of $G$), so it belongs to $B'_x$.
This shows that $B_\alpha\subseteq B'_x$, so $\size{B_\alpha}\leq\size{B'_x}\leq t$.
\end{proof}

We call $(T,\{B_\alpha\}_{\alpha\in\Sub(G)})$ the \emph{normalized tree decomposition} of $G$.
By Lemma~\ref{lem:normalized}, it has at most $\size{V(G)}$ nodes and its width is at most $t-1$.
The following lemma is a direct consequence of the construction of the normalized tree decomposition and will be used in the next subsection.
We emphasize that by a \emph{subtree} of the rooted tree $T$ we simply mean a connected subgraph of $T$, that is, a subtree does not need to be closed under taking descendants.

\begin{lemma}
\label{lem:component}
If\/ $Q$ is a subtree of\/ $T$ and\/ $\gamma$ is a connected component of\/ $G-\bigcup_{\xi\in V(Q)}B_\xi$, then either
\begin{enumalph}
\item $\gamma$ is a node of\/ $T-Q$ that is a child in\/ $T$ of some node of\/ $Q$, or
\item $V(\gamma)\cap V(\xi)=\emptyset$ where\/ $\xi$ is the root of\/ $Q$ in\/ $T$, that is, the node of\/ $Q$ closest to the root in\/ $T$.
\end{enumalph}
\end{lemma}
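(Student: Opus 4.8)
\medskip

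The plan is to trace the component $\gamma$ of $G-\bigcup_{\xi\in V(Q)}B_\xi$ through the vertex partition $\{A_\alpha\}_{\alpha\in\Sub(G)}$ supplied by Lemma~\ref{lem:normalized}. For $v\in V(\gamma)$, let $\alpha(v)$ be the unique node of $T$ with $v\in A_{\alpha(v)}$; since $A_{\alpha(v)}\subseteq B_{\alpha(v)}$ and $v$ avoids every $B_\xi$ with $\xi\in V(Q)$, we have $\alpha(v)\notin V(Q)$. Writing $\xi_0$ for the root of $Q$ in $T$ (so that every node of $Q$, hence every child of such a node, is a descendant of $\xi_0$), I would distinguish two cases according to whether some vertex $v\in V(\gamma)$ has $\alpha(v)$ a (necessarily proper) descendant of $\xi_0$.

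Suppose not. Then each $\alpha(v)$ is either a proper ancestor of $\xi_0$ or incomparable to $\xi_0$ in $T$. In the first case $V(\xi_0)\subsetneq V(\alpha(v))$, so the fact that $\alpha(v)$ is the smallest element of $\Sub(G)$ containing $v$ forces $v\notin V(\xi_0)$; in the second case $V(\alpha(v))\cap V(\xi_0)=\emptyset$, so again $v\notin V(\xi_0)$. Hence $V(\gamma)\cap V(\xi_0)=\emptyset$, which is outcome~(b).

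Now suppose some $v\in V(\gamma)$ has $\alpha(v)$ a proper descendant of $\xi_0$. Let $\eta$ be the deepest node of $Q$ on the path from the root of $T$ to $\alpha(v)$, and let $c$ be the child of $\eta$ on that path; by the choice of $\eta$ we get $c\notin V(Q)$, and $v\in V(\alpha(v))\subseteq V(c)$. I claim $\gamma=c$, which is outcome~(a). For the inclusion $V(c)\subseteq V(\gamma)$ it suffices to check $V(c)\cap B_\xi=\emptyset$ for every $\xi\in V(Q)$: by Lemma~\ref{lem:Sub}\ref{item:Sub-laminar}, $V(\xi)\subseteq V(c)$ is impossible (it would place $c$ on the $\xi$--$\xi_0$ path in $T$, forcing $c\in V(Q)$ as $Q$ is connected), so either $V(c)\subseteq V(\xi)$, in which case $V(c)$ misses both $A_\xi$ and $N_G(V(\xi))$, or $V(c)\cap V(\xi)=\emptyset$, in which case $V(c)$ and $V(\xi)$ are also non-adjacent by Lemma~\ref{lem:Sub}, so once more $V(c)$ misses $A_\xi\cup N_G(V(\xi))=B_\xi$. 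Since $c$ is connected and $v\in V(c)\cap V(\gamma)$, this gives $V(c)\subseteq V(\gamma)$.

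The reverse inclusion is the crux, and I expect it to be the main obstacle. The plan is to establish the auxiliary fact that $N_G(V(c))\subseteq B_\eta$, i.e.\ the boundary of a node lies in the bag of its parent. Given $b\in N_G(V(c))$ with a neighbour $a\in V(c)$: if $b\notin V(\eta)$ then $b\in N_G(V(\eta))\subseteq B_\eta$; if $b\in V(\eta)$ then $b\in A_\delta$ for some descendant $\delta$ of $\eta$, and $\delta$ can be neither a descendant of a child $c'\neq c$ of $\eta$ (that would make $b\in V(c')$ adjacent to $a\in V(c)$, contradicting the non-adjacency of distinct children from Lemma~\ref{lem:Sub}) nor a descendant of $c$ (that would put $b\in V(c)$), so $\delta=\eta$ and $b\in A_\eta\subseteq B_\eta$. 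Granting this, if some $u\in V(\gamma)\setminus V(c)$ existed, a $v$--$u$ path inside $\gamma$ would contain an edge $ab$ with $a\in V(c)$ and $b\in V(\gamma)\setminus V(c)$; then $b\in N_G(V(c))\subseteq B_\eta$ with $\eta\in V(Q)$, contradicting $b\in V(\gamma)$. Hence $V(\gamma)=V(c)$, so $\gamma$ is the node $c$ of $T-Q$, a child in $T$ of $\eta\in V(Q)$ — outcome~(a).
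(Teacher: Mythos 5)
Your proof is correct. The paper states this lemma without proof, calling it ``a direct consequence of the construction of the normalized tree decomposition,'' so there is no argument in the text to compare against; you have supplied the missing details, and they check out. The structure is the natural one: the case split on whether some $v\in V(\gamma)$ has $\alpha(v)$ a proper descendant of $\xi_0$ exactly separates outcomes~(b) and~(a); the inclusion $V(c)\subseteq V(\gamma)$ follows from the laminarity and non-adjacency properties in Lemma~\ref{lem:Sub}; and the reverse inclusion correctly reduces to the auxiliary fact $N_G(V(c))\subseteq B_\eta$, which you prove by analyzing where the endpoint $b$ of a boundary edge can sit relative to $\eta$ and its children. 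One could alternatively derive $N_G(V(c))\subseteq B_\eta$ directly from the subtree property established in the proof of Lemma~\ref{lem:normalized}~\ref{item:normalized-td} (if $b\in N_G(V(c))$ then $b\in B_c$ and $b\in B_{\alpha(b)}$ with $\alpha(b)$ a proper ancestor of $c$, so $b\in B_\eta$ since $\eta$ lies on the path from $\alpha(b)$ to $c$), but your self-contained argument is equally valid.
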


The normalized tree decomposition depends on the choice of the initial tree decomposition for $G$.
In the algorithm, the graphs $G$ considered are induced subgraphs of a common \emph{input graph} $\inputG$ given with an \emph{input tree decomposition} $(\inputT,\{\inputB_x\}_{x\in V(\inputT)})$, and the initial tree decomposition $(T',\{B'_x\}_{x\in V(T')})$ of $G$ considered above in the definition of $\Sub(G)$ is the input tree decomposition of $\inputG$ with appropriately restricted bags (some of which may become empty):
\[(T',\{B'_x\}_{x\in V(T')})=(\inputT,\{\inputB_x\cap V(G)\}_{x\in V(\inputT)}).\]
The fact that the normalized tree decompositions for all induced subgraphs $G$ of $\inputG$ considered in the algorithm come from a common input tree decomposition of $\inputG$ has the following consequences, which will be used in the complexity analysis of the algorithm in Subsection~\ref{subsec:complexity}.
(We emphasize again that, in the following lemma and later in the paper, $\Sub(G)$ is always computed with respect to the initial tree decomposition that is the restriction of the input tree decomposition to $G$.)

\begin{lemma}
\label{lem:Tin}
The following holds for every induced subgraph\/ $G$ of the input graph\/ $\inputG$.
\begin{enumerate}
\item\label{item:Tin-restrict} If\/ $\alpha\in\Sub(G)$ and\/ $V(\alpha)\subseteq X\subseteq V(G)$, then\/ $\alpha\in\Sub(G[X])$.
\item\label{item:Tin-laminar} If\/ $\alpha,\beta\in\Sub(G)$, then\/ $\alpha\in\Sub(\beta)$, or\/ $\beta\in\Sub(\alpha)$, or\/ $V(\alpha)\cap V(\beta)=\emptyset$.
\item\label{item:Tin-recursion} If\/ $\alpha\in\Sub(G)$, then\/ $\Sub(\alpha)=\{\beta\in\Sub(G)\colon V(\beta)\subseteq V(\alpha)\}$.
\item\label{item:Tin-component} If\/ $\alpha\in\Sub(G)$, $X\subseteq V(G)$, and\/ $\gamma\in\CC{G}(V(\alpha)\cap X)$, then\/ $\gamma\in\Sub(G[X])$.
\end{enumerate}
\end{lemma}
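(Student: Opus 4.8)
The plan is to base the whole proof on the single laminar family of vertex sets attached to the nodes of the input tree decomposition. For a node $x$ of $\inputT$, let $U_x$ be the set of vertices of $\inputG$ that belong only to bags $\inputB_y$ with $y$ ranging over $x$ and its descendants in $\inputT$ (that is, $U_x$ is ``$V'_x$'' computed for the graph $\inputG$ itself); these sets do not depend on the induced subgraph under consideration. The first two observations are: \emph{(a)} the family $\{U_x\}_x$ is laminar --- if $x$ is an ancestor of $y$ then $U_y\subseteq U_x$, while if $x,y$ are incomparable then $U_x\cap U_y=\emptyset$, because a common vertex would have to lie in a bag indexed by a node belonging to the subtrees rooted at both $x$ and $y$; and \emph{(b)} for every induced subgraph $G$ of $\inputG$ and every node $x$, the set ``$V'_x$'' computed for $G$ equals $U_x\cap V(G)$, since a vertex $v\in V(G)$ lies in $\inputB_y\cap V(G)$ iff it lies in $\inputB_y$. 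Consequently $\Sub(G)=\bigcup_x\CC{\inputG}(U_x\cap V(G))$, and likewise $\Sub(G[X])=\bigcup_x\CC{\inputG}(U_x\cap X)$ for any $X\subseteq V(G)$. I will also use the elementary fact that a connected vertex set $W$ with $W\subseteq S$ and with no neighbor in $S\setminus W$ is a connected component of $\inputG[S]$; in particular, a component of $\inputG[S]$ that lies inside a subset $S'\subseteq S$ is still a component of $\inputG[S']$.

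With these in hand most of the statement is short. Item~\ref{item:Tin-restrict}: if $V(\alpha)$ is a component of $\inputG[U_x\cap V(G)]$ and $V(\alpha)\subseteq X\subseteq V(G)$, then $V(\alpha)\subseteq U_x\cap X\subseteq U_x\cap V(G)$, so $V(\alpha)$ is still a component of $\inputG[U_x\cap X]$, i.e.\ $\alpha\in\Sub(G[X])$. Item~\ref{item:Tin-laminar} follows by combining Lemma~\ref{lem:Sub}~\ref{item:Sub-laminar} (the sets $V(\alpha),V(\beta)$ are nested or disjoint) with item~\ref{item:Tin-restrict} applied with $X=V(\beta)$, resp.\ $X=V(\alpha)$, in the nested cases, using that $\alpha$ and $\beta$ as graphs are $\inputG[V(\alpha)]$ and $\inputG[V(\beta)]$. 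In item~\ref{item:Tin-recursion}, the inclusion $\supseteq$ is item~\ref{item:Tin-restrict} with $X=V(\alpha)$. Item~\ref{item:Tin-component}: choosing $x$ with $V(\alpha)$ a component of $\inputG[U_x\cap V(G)]$ and using that $\gamma$ is a component of $\inputG[V(\alpha)\cap X]$, one has $V(\gamma)\subseteq V(\alpha)\cap X\subseteq U_x\cap X$, and a neighbor $v\in U_x\cap X$ of $V(\gamma)$ either lies in $V(\alpha)$ --- and then already in $V(\gamma)$, since $V(\gamma)$ is a component of $\inputG[V(\alpha)\cap X]$ --- or lies in $(U_x\cap V(G))\setminus V(\alpha)$ and is adjacent to $V(\gamma)\subseteq V(\alpha)$, contradicting that $V(\alpha)$ is a component of $\inputG[U_x\cap V(G)]$; so $V(\gamma)$ has no neighbor outside itself in $U_x\cap X$, hence is a component of $\inputG[U_x\cap X]$ and $\gamma\in\Sub(G[X])$.

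The one step with any content is the inclusion $\subseteq$ of item~\ref{item:Tin-recursion}, and it is also where laminarity of the $U_x$ is genuinely needed. Here I would take $\beta\in\Sub(\alpha)$, so $V(\beta)$ is a component of $\inputG[U_z\cap V(\alpha)]$ for some $z$ (hence $V(\beta)\subseteq V(\alpha)$), and $V(\alpha)$ a component of $\inputG[U_x\cap V(G)]$ for some $x$; since $\emptyset\neq V(\beta)\subseteq U_x\cap U_z$, laminarity forces $U_x\subseteq U_z$ or $U_z\subseteq U_x$. If $U_x\subseteq U_z$ then $U_z\cap V(\alpha)=V(\alpha)$, so $V(\beta)$ is a component of the connected graph $\alpha$ and $\beta=\alpha\in\Sub(G)$. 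If $U_z\subseteq U_x$, let $C$ be the component of $\inputG[U_z\cap V(G)]$ containing $V(\beta)$; then $C\subseteq U_z\cap V(G)\subseteq U_x\cap V(G)$ lies in a component of $\inputG[U_x\cap V(G)]$ meeting $V(\alpha)$, which must be $V(\alpha)$, so $C\subseteq U_z\cap V(\alpha)$ and therefore $C$ is a component of $\inputG[U_z\cap V(\alpha)]$; since $V(\beta)\subseteq C$ is itself such a component, $V(\beta)=C\in\CC{\inputG}(U_z\cap V(G))\subseteq\Sub(G)$. The remaining care needed throughout is purely bookkeeping --- keeping track that $\alpha$, $\beta$, $\gamma$, and $G[X]$ are induced subgraphs of $\inputG$ whose $\Sub$-families are all computed against the \emph{same} input tree decomposition, so that a member of $\Sub(\alpha)$ really is a set of the form $\CC{\inputG}(U_z\cap V(\alpha))$ with the same indexing, and that $\alpha$ as a graph is exactly $\inputG[V(\alpha)]=G[V(\alpha)]$.
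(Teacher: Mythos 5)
Your proof is correct and follows essentially the same approach as the paper's: both reduce everything to the identity $\Sub(G)=\bigcup_x\CC{\inputG}(\inputV_x\cap V(G))$ for the common input tree decomposition, then read off items~\ref{item:Tin-restrict}, \ref{item:Tin-laminar}, and~\ref{item:Tin-component} directly, and argue the $\subseteq$ direction of item~\ref{item:Tin-recursion} by moving the witnessing node into the subtree below $x$. Where the paper's proof of the latter step tersely asserts ``there is a node $y$ in $\inputT_x$,'' you make the underlying laminarity of $\{\inputV_x\}$ explicit and split into the cases $\inputV_x\subseteq\inputV_z$ versus $\inputV_z\subseteq\inputV_x$, which is exactly the justification the paper leaves implicit.
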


\begin{proof}
Let $\inputT_x$ and $\inputV_x$ be defined like $T'_x$ and $V'_x$ but for the tree decomposition $(\inputT,\{\inputB_x\}_{x\in V(\inputT)})$ of $\inputG$.
Let $G$ be an induced subgraph of $\inputG$.
The fact that $\Sub(G)$ is computed with respect to the initial tree decomposition that is the restriction of the input tree decomposition to $G$ implies
\[\Sub(G)=\bigcup_{x\in V(\inputT)}\CC{\inputG}(\inputV_x\cap V(G)).\]

If $\alpha\in\Sub(G)$ and $V(\alpha)\subseteq X\subseteq V(G)$, then the fact that $\alpha$ is a connected component of $\inputG[\inputV_x\cap V(G)]$ for some node $x$ of $\inputT$ implies that it is also a connected component of $\inputG[\inputV_x\cap X]$, which in turn implies $\alpha\in\Sub(G[X])$.
This shows \ref{item:Tin-restrict}.

If $\alpha,\beta\in\Sub(G)$ and $V(\alpha)\subseteq V(\beta)$, then property~\ref{item:Tin-restrict} with $X=V(\beta)$ yields $\alpha\in\Sub(\beta)$.
This implies \ref{item:Tin-laminar} by Lemma~\ref{lem:Sub}~\ref{item:Sub-laminar}, and it also implies the ``$\supseteq$'' inclusion in \ref{item:Tin-recursion} (with $\alpha$ and $\beta$ interchanged).
For the converse inclusion in \ref{item:Tin-recursion}, let $\alpha\in\Sub(G)$ and $\beta\in\Sub(\alpha)$.
Thus $V(\beta)\subseteq V(\alpha)$.
Let $x$ be a node of $\inputT$ such that $\alpha$ is a connected component of $\inputG[\inputV_x\cap V(G)]$.
Since $V(\alpha)\subseteq\inputV_x$, there is a node $y$ in $\inputT_x$ (implying $\inputV_y\subseteq\inputV_x$) such that $\beta$ is a connected component of $\inputG[\inputV_y\cap V(\alpha)]$.
It follows that $\beta$ is a connected component of $\inputG[\inputV_y\cap V(G)]$, so $\beta\in\Sub(G)$.

Finally, if $\alpha\in\Sub(G)$, $X\subseteq V(G)$, and $\gamma\in\CC{G}(V(\alpha)\cap X)$, then $\alpha$ is a connected component of $\inputG[\inputV_x\cap V(G)]$ for some node $x$ of $\inputT$, whence it follows that $\gamma$ is a connected component of $\inputG[\inputV_x\cap X]$ and thus $\gamma\in\Sub(G[X])$.
This shows \ref{item:Tin-component}.
\end{proof}

\subsection{Main Procedure}

The core of the algorithm is a recursive procedure $\solve(G,b)$, where $G$ is a connected graph with $\tw(G)\leq t-1$ and $b\in\setN\cup\{\infty\}$ is an \emph{upper bound request}.
It computes the following data:
\begin{itemize}
\item a number $h=h(G,b)\in\setN$ such that $h\leq b$,
\item a $\calT_h$-witness $W(G,b)$ of $G$,
\item only when $h<b$: a path decomposition $P(G,b)$ of $G$ of width at most $th+1$.
\end{itemize}
The algorithm uses memoization to compute these data only once for each pair $(G,b)$.
A run of $\solve(G,\infty)$ produces the outcome requested in Theorem~\ref{thm:algorithm}.

The purpose of the upper bound request is optimization---we tell the procedure that if it can provide a $\calT_b$-witness for $G$, then we no longer need any path decomposition for $G$.
This allows the procedure to save some computation, perhaps preventing many unnecessary recursive calls.
Our complexity analysis of the algorithm will rely on this optimization.

Below, we present the procedure $\solve(G,b)$ for a fixed connected graph $G$ and a fixed upper bound request $b\in\setN\cup\{\infty\}$.
The procedure assumes access to the normalized tree decomposition $(T,\{B_\alpha\}_{\alpha\in\Sub(G)})$ of $G$ of width at most $t-1$ obtained from some initial tree decomposition of $G$ as described in the definition of $\Sub(G)$.
In the next subsection, we show that a full run of $\solve(\inputG,\infty)$ on an input graph $\inputG$ makes recursive calls to $\solve(G,b)$ for only polynomially many distinct pairs $(G,b)$ if the normalized tree decompositions of all induced subgraphs $G$ of $\inputG$ occurring in these calls are obtained from a common input tree decomposition of $\inputG$ as described in Subsection~\ref{subsec:normalized}.

If $b=0$, then we just set $h(G,0)=0$ and $W(G,0)=\witness{V(G)}$, and we terminate the run of $\solve(G,b)$, saying that it is \emph{pruned}.
Assume henceforth that $b\geq 1$.

Suppose $T$ has only one node, that is, $\Sub(G)=\{G\}$.
Since $V(G)$ is the bag of that node, $\size{V(G)}\leq t$.
If $G$ has a cycle or a vertex of degree at least $3$, then it has three vertices $v_1,v_2,v_3$ such that any two of them can be connected by a path in $G$ avoiding the third one.
In that case, we set $h(G,b)=1$ and $W(G,b)=\witness{V(G);\witness{\{v_1\}},\witness{\{v_2\}},\witness{\{v_3\}}}$, and (if $b>1$) we let $P(G,b)$ be the path decomposition of $G$ consisting of the single bag $V(G)$, which has width at most $t-1$.
If $G$ has no cycle or vertex of degree at least $3$, then it is a path.
In that case, we set $h(G,b)=0$ and $W(G,b)=\witness{V(G)}$, and we let $P(G,b)$ be any path decomposition of $G$ of width $1$.

Assume henceforth that $T$ has more than one node.
For each node $\alpha\in\Sub(G)\setminus\{G\}$, we run $\solve(\alpha,b)$ to compute $h(\alpha,b)$, $W(\alpha,b)$, and $P(\alpha,b)$ when $h(\alpha,b)<b$.
We call these recursive calls to $\solve$ \emph{primary}.
If any of them leads to $h(\alpha,b)=b$, we just set $h(G,b)=b$ and $W(G,b)=W(\alpha,b)$, and we terminate the run of $\solve(G,b)$, again saying that it is \emph{pruned}.

Assume henceforth that the run of $\solve(G,b)$ is not pruned, that is, we have $h(\alpha,b)<b$ for every $\alpha\in\Sub(G)\setminus\{G\}$.
Let $k$ be the maximum value of $h(\alpha,b)$ for $\alpha\in\Sub(G)\setminus\{G\}$.
Thus $k<b$.
We will consider several further cases, each leading to one of the following two outcomes:
\begin{enumerate}[label=(\Alph*),widest=A]
\item We set $h(G,b)=k$.
In that case, we let $W(G,b)$ be $W(\alpha,b)$ for any node $\alpha\in\Sub(G)\setminus\{G\}$ such that $h(\alpha,b)=k$, and we only need to provide an appropriate path decomposition $P(G,b)$.
\item\label{outcomeB} We set $h(G,b)=k+1$.
In that case, if $k+1<b$, we let $P(G,b)$ be the path decomposition of $G$ of width $t(k+1)+1$ obtained by applying Lemma~\ref{lem:combine}~\ref{item:combine-vertex} with $q$ the root of $T$, and we only need to provide an appropriate $\calT_{k+1}$-witness $W(G,b)$.
\end{enumerate}

Let $Z$ be the set of minimal nodes $\zeta\in\Sub(G)\setminus\{G\}$ such that $h(\zeta,b)=k$.
It follows that $Z\neq\emptyset$ (by the definition of $k$) and the sets $V(\zeta)$ with $\zeta\in Z$ are pairwise disjoint and non-adjacent in $G$.

Suppose that $Z$ consists of a single node $\zeta$.
Let $Q$ be the path from the root to $\zeta$ in $T$.
Thus $h(\gamma,b)<k$ for every node $\gamma$ in $T-Q$.
Every connected component $\gamma$ of $G-\bigcup_{\xi\in V(Q)}B_\xi$ needs to satisfy Lemma~\ref{lem:component}~(a), so it is a node of $T-Q$; in particular, $\gamma\in\Sub(G)\setminus\{G\}$ and $h(\gamma,b)<k$, so $P(\gamma,b)$ is a path decomposition of $\gamma$ of width at most $t(k-1)+1$.
We set $h(G,b)=k$ and apply Lemma~\ref{lem:combine}~\ref{item:combine-subpath} to obtain a path decomposition $P(G,b)$ of $G$ of width at most $tk+1$.

Assume henceforth that $\size{Z}\geq 2$.
Let $U=V(G)\setminus\bigcup_{\zeta\in Z}V(\zeta)$.
A \emph{$U$-path} is a path in $G$ with all internal vertices in $U$.
Consider an auxiliary graph $H$ with vertex set $Z$ where $\zeta\xi$ is an edge if and only if there is a $U$-path connecting $V(\zeta)$ and $V(\xi)$.
The graph $H$ is connected, because so is $G$.

Suppose that $H$ has a cycle or a vertex of degree at least $3$.
Then there are $\zeta_1,\zeta_2,\zeta_3\in Z$ such that any two of them can be connected by a path in $H$ avoiding the third one.
This and connectedness of the induced subgraphs in $Z$ imply that any two of the sets $V(\zeta_1),V(\zeta_2),V(\zeta_3)$ can be connected by a path in $G$ avoiding the third one.
We set $h(G,b)=k+1$ and $W(G,b)=\witness{V(G);W(\zeta_1,b),W(\zeta_2,b),W(\zeta_3,b)}$, while $P(G,b)$ is constructed as in the description of outcome~\ref{outcomeB}.

Assume henceforth that $H$ has no cycle or vertex of degree at least $3$.
The run of $\solve(G,b)$ is called a \emph{key run} if this case is reached.
It follows that $H$ is a path $\zeta_1\ldots\zeta_m$ with $m=\size{Z}\geq 2$, and every vertex in $U$ is connected by a $U$-path to only one set or to two consecutive sets among $V(\zeta_1),\ldots,V(\zeta_m)$.
Define subsets $U_{0,1},U_{1,2},\ldots,U_{m,m+1}$ of $U$ as follows:
\begin{itemize}
\item $U_{0,1}$ is the set of vertices in $U$ connected by a $U$-path to $V(\zeta_1)$ but not to $V(\zeta_2)$;
\item for $1\leq i<m$, $U_{i,i+1}$ is the set of vertices in $U$ connected by a $U$-path to $V(\zeta_i)$ and to $V(\zeta_{i+1})$;
\item $U_{m,m+1}$ is the set of vertices in $U$ connected by a $U$-path to $V(\zeta_m)$ but not to $V(\zeta_{m-1})$.
\end{itemize}
For $1<i<m$, let $U_i$ be the set of vertices in $U\setminus\bigcup_{j=0}^mU_{j,j+1}$ connected by a $U$-path to $V(\zeta_i)$.
The sets $U_i$ ($1<i<m$) and $U_{i,i+1}$ ($0\leq i\leq m$) are pairwise disjoint, and their union is $U$.
Let $B_i=B_{\zeta_i}$ for $1\leq i\leq m$.
All possible intersections and adjacencies between the sets $V(\zeta_i)$, $U_{i,i+1}$, $U_i$, and $B_i$ are illustrated in the following diagram by overlaps and touchings:

\begin{center}
\begin{tikzpicture}[scale=.68]
\def\V#1{(5*#1,0) arc (180:0:1)--+(0,-1.4) arc (0:-180:1)--cycle}
\def\W#1{(5*#1+3.5,0) ellipse (2 and 1.2)}
\def\U#1{(5*#1+1,1.5) ellipse (0.7 and 1.2)}
\def\B#1{(5*#1+1,0.2) ellipse (2.6 and 1.1)}
\draw[dotted]\B{1}\B{2}\B{3}\B{4};
\begin{scope}
  \clip\W{0}\W{1}\U{2}\W{2}\U{3}\W{3}\W{4};
  \draw[thick,black!30,fill=black!10]\B{1}\B{2}\B{3}\B{4};
\end{scope}
\draw\W{0}\W{1}\U{2}\W{2}\U{3}\W{3}\W{4};
\begin{scope}
  \clip\V{1}\V{2}\V{3}\V{4};
  \draw[thick,black!30,fill=black!10]\B{1}\B{2}\B{3}\B{4};
\end{scope}
\draw\V{1}\V{2}\V{3}\V{4};
\node at (6,-1.55) {$V(\zeta_1)$};
\node at (11,-1.55) {$V(\zeta_2)$};
\node at (16,-1.55) {$V(\zeta_3)$};
\node at (21,-1.55) {$V(\zeta_4)$};
\node at (2.5,-0.05) {$U_{0,1}$};
\node at (8.5,-1.65) {$U_{1,2}$};
\node at (13.5,-1.65) {$U_{2,3}$};
\node at (18.5,-1.65) {$U_{3,4}$};
\node at (24.5,-0.05) {$U_{4,5}$};
\node at (11,1.9) {$U_2$};
\node at (16,1.9) {$U_3$};
\node[black!70] at (6,0) {$B_1$};
\node[black!70] at (11,0) {$B_2$};
\node[black!70] at (16,0) {$B_3$};
\node[black!70] at (21,0) {$B_4$};
\end{tikzpicture}
\end{center}

\noindent
Observe that $B_i\subseteq V(\zeta_i)\cup U_{i-1,i}\cup U_i\cup U_{i,i+1}$ for $1<i<m$ and $B_i\subseteq V(\zeta_i)\cup U_{i-1,i}\cup U_{i,i+1}$ for $i=1$ and $i=m$.
This is because $B_i\subseteq V(\zeta_i)\cup N_G(V(\zeta_i))$, by the definition of $\{B_\alpha\}_{\alpha\in\Sub(G)}$.

It may seem peculiar that we define sets $U_{0,1}$ and $U_{m,m+1}$ rather than $U_1$ and $U_m$, so here is some intuition.
The sequence of bags $B_1,\ldots,B_m$ with appropriate connections inside the sets $U_{1,2},\ldots,U_{m-1,m}$ provides a skeleton along which a path decomposition of $G$ can be laid out.
The sets $U_{0,1}$ and $U_{m,m+1}$ attach to the ends of this skeleton, so we can as well extend it inside these sets (and we do---towards the root of $T$), making them behave similarly to $U_{1,2},\ldots,U_{m-1,m}$.
Furthermore, a path decomposition of $G$ needs to incorporate a path decomposition of each subgraph $G[U_i]$ ($1<i<m$) of appropriately smaller width.
The argument guarantees that if we cannot provide the latter, then $G[U_i]$ contains an appropriate witness which we can combine with witnesses for $\zeta_{i-1}$ and $\zeta_{i+1}$ into a larger witness for $G$.
We would lack one of the witnesses if we tried the same for $G[U_1]$ or $G[U_m]$ had they been defined.
Having explained the intuition, we proceed with the argument.

Let $V_1=V(\zeta_1)\cup B_1$, $V_i=V(\zeta_i)\cup B_i\cup U_i$ for $1<i<m$, and $V_m=V(\zeta_m)\cup B_m$.
Let $Q_{0,1}$ be the path in $T$ from the root $G$ to $\zeta_1$, let $Q_{i,i+1}$ be the path in $T$ from $\zeta_i$ to $\zeta_{i+1}$ for $1\leq i<m$, and let $Q_{m,m+1}$ be the path in $T$ from $\zeta_m$ to the root $G$.
Let $B_{Q_{i,i+1}}=\bigcup_{\xi\in V(Q_{i,i+1})}B_\xi$ for $0\leq i\leq m$.

Let $\Gamma$ be the family of connected components of $G[V_i\setminus B_i]$ for $1\leq i\leq m$ and of connected components of $G[U_{i,i+1}\setminus B_{Q_{i,i+1}}]$ for $0\leq i\leq m$.
Suppose that for each $\gamma\in\Gamma$, we have a path decomposition $P_\gamma$ of $\gamma$ of width at most $t(k-1)+1$.
Then we apply
\begin{itemize}
\item Lemma~\ref{lem:combine}~\ref{item:combine-vertex} to $G[V_i]$, the tree decomposition $(T,\{V_i\cap B_\alpha\}_{\alpha\in\Sub(G)})$ of $G[V_i]$ and the node $\zeta_i$ to obtain a path decomposition $P_i$ of $G[V_i]$ of width at most $tk+1$ containing the set $B_i$ in every bag, for $1\leq i\leq m$;
\item Lemma~\ref{lem:combine}~\ref{item:combine-subpath} to $G[U_{i,i+1}]$, the tree decomposition $(T,\{U_{i,i+1}\cap B_\alpha\}_{\alpha\in\Sub(G)})$ of $G[U_{i,i+1}]$ and the path $Q_{i,i+1}$ to obtain a path decomposition $P_{i,i+1}$ of $G[U_{i,i+1}]$ of width at most $tk+1$, for $0\leq i\leq m$; moreover, Lemma~\ref{lem:combine}~\ref{item:combine-subpath} guarantees that the path decomposition $P_{i,i+1}$ contains $U_{i,i+1}\cap B_i$ in the first bag if $i\geq 1$ and $U_{i,i+1}\cap B_{i+1}$ in the last bag if $i<m$.
\end{itemize}
We set $h(G,b)=k$, and we concatenate these path decompositions $P_{0,1},P_1,P_{1,2},\ldots,P_m,P_{m,m+1}$ to obtain a path decomposition $P(G,b)$ of $G$ of width at most $tk+1$, as shown by the following claim.

\begin{claim}
The concatenation of\/ $P_{0,1},P_1,P_{1,2},\ldots,P_m,P_{m,m+1}$ is a path decomposition of\/ $G$.
\end{claim}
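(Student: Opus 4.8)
The plan is to verify directly that the concatenation, which I will call $P$, of $P_{0,1},P_1,P_{1,2},\dots,P_m,P_{m,m+1}$ satisfies the two defining properties of a path decomposition of $G$: every edge of $G$ has both ends in some common bag, and for every vertex $v$ the bags of $P$ containing $v$ form a non-empty interval. I would begin by recording the structural facts that make this work. The successive pieces of $P$ have vertex sets $U_{0,1},V_1,U_{1,2},V_2,\dots,V_m,U_{m,m+1}$; the set $U$ is the disjoint union of the $U_{i,i+1}$ (for $0\leq i\leq m$) and the $U_i$ (for $1\leq i\leq m$, with $U_1=U_m=\emptyset$); and, writing $I(v)$ for the set of indices $l$ with $1\leq l\leq m$ such that $v$ is joined to $V(\zeta_l)$ by a $U$-path, the property recorded just before the definition of the sets $U_{i,i+1}$ says that for $v\in U$ the set $I(v)$ is a singleton or a pair of consecutive indices, with $v$ lying in $U_{0,1}$, $U_{m,m+1}$, $U_{i,i+1}$, or $U_i$ according as $I(v)$ equals $\{1\}$, $\{m\}$, $\{i,i+1\}$, or $\{i\}$. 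Since $V(\zeta_1),\dots,V(\zeta_m)$ are pairwise disjoint and pairwise non-adjacent in $G$, every neighbour of $V(\zeta_i)$ lies in $U$; hence $B_i\subseteq V(\zeta_i)\cup U$, we have $B_i\cap V(\zeta_j)=\emptyset$ for $j\neq i$, and a vertex $v\in U$ lies in $B_i$ only if $i\in I(v)$. Finally, each $V(\zeta_i)$ is non-empty (it is a connected component of an induced subgraph), so each $V_i=V(\zeta_i)\cup B_i\cup U_i$ is non-empty and each $P_i$ is a genuine path decomposition with a well-defined first and last bag.

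Combining these facts gives the intersection pattern I will rely on: for $1\leq i,j\leq m$ one has $V_i\cap V_j=\emptyset$ when $|i-j|\geq 2$ and $V_i\cap V_{i+1}=B_i\cap B_{i+1}\subseteq U_{i,i+1}$; one has $U_{i,i+1}\cap U_{j,j+1}=\emptyset$ for $i\neq j$; and $V_j\cap U_{i,i+1}$ is empty unless $j\in\{i,i+1\}$, in which case it equals $B_j\cap U_{i,i+1}$. Each of these identities reduces to the remark that a vertex of $U$ witnessing a nontrivial intersection would have a forbidden $I(v)$, while a vertex of some $V(\zeta_i)$ witnessing one would contradict disjointness or non-adjacency. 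With this in hand, the edge condition is a short case analysis on an edge $uv$ of $G$: if $u,v\in V(\zeta_i)$ then $\{u,v\}\subseteq V_i$; if $u\in V(\zeta_i)$ and $v\in U$, then $v$ is a neighbour of $V(\zeta_i)$, so $v\in B_i$ and again $\{u,v\}\subseteq V_i$; and if $u,v\in U$, then prepending one endpoint to a $U$-path from the other to some $V(\zeta_l)$ shows $I(u)=I(v)$, so $u$ and $v$ lie in the same one of the sets $U_{i,i+1}$, $U_i$, hence both in a common $U_{i,i+1}$ or both in a common $V_i$. (There is no edge between distinct sets $V(\zeta_i)$.) In every case $uv$ is an edge of the induced subgraph of which the relevant piece of $P$ is a path decomposition, so some bag of that piece, and hence of $P$, contains both $u$ and $v$.

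The bulk of the proof is the contiguity condition, which I would organise according to where $v$ lies. If $v\in V(\zeta_i)$, or if $v\in U_i$ with $2\leq i\leq m-1$, then the intersection pattern forces $v$ to lie in the vertex set of $P_i$ and of no other piece, so the bags of $P$ containing $v$ are precisely those bags of $P_i$ that contain $v$, and these form an interval. If $v\in U_{i,i+1}$ with $1\leq i\leq m-1$, then $v$ lies in $P_{i,i+1}$ always, in $P_i$ exactly when $v\in B_i$, and in $P_{i+1}$ exactly when $v\in B_{i+1}$, and these three pieces occur consecutively in $P$ in the order $P_i,P_{i,i+1},P_{i+1}$; using that $B_i$ appears in every bag of $P_i$, that $U_{i,i+1}\cap B_i$ appears in the first and $U_{i,i+1}\cap B_{i+1}$ in the last bag of $P_{i,i+1}$, and that $B_{i+1}$ appears in every bag of $P_{i+1}$, one checks in each of the four sub-cases (according to which of $B_i$, $B_{i+1}$ contain $v$) that the bags of $P$ containing $v$ form an interval. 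The remaining cases $v\in U_{0,1}$ and $v\in U_{m,m+1}$ are the same argument with only one neighbouring piece, $P_1$ or $P_m$ respectively, present. I expect the main obstacle to be nothing deep but rather the bookkeeping in this last step: one must be sure that no vertex occurs in two pieces of $P$ that are not consecutive, nor in a piece it ought not to, and this is exactly what the bound on $|I(v)|$ and the pairwise non-adjacency of the sets $V(\zeta_i)$ are there to guarantee; empty sets $U_{i,i+1}$ contribute an empty (or single empty-bag) piece and cause no trouble.
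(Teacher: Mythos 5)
Your argument is correct and follows essentially the same route as the paper's: you check vertex contiguity by tracking which pieces $P_i$ and $P_{i,i+1}$ a given vertex can appear in (using the disjointness of the $U_i$, $U_{i,i+1}$ and the guarantee that $B_i$ is in every bag of $P_i$ and that $U_{i,i+1}\cap B_i$, $U_{i,i+1}\cap B_{i+1}$ sit in the first and last bag of $P_{i,i+1}$), and you check edges by the observation that the sets $U_i$, $U_{i,i+1}$ are pairwise non-adjacent so any cross-edge goes between $V(\zeta_i)$ and $N_G(V(\zeta_i))\subseteq B_i$. The only cosmetic difference is that you make the index set $I(v)$ explicit, which packages the same disjointness/adjacency facts the paper draws from its Venn diagram.
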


\begin{proof}
If $v\in V(\zeta_i)$ with $1\leq i\leq m$ or $v\in U_i$ with $1<i<m$, then $v$ belongs to bags in $P_i$, where the corresponding nodes form a non-empty subpath, and it belongs to no bags in the other path decompositions among $P_{0,1},P_1,P_{1,2},\ldots,P_m,P_{m,m+1}$.
If $v\in U_{i,i+1}$ with $0\leq i\leq m$, then $v$ belongs to bags in $P_{i,i+1}$, where the corresponding nodes form a non-empty subpath, and moreover,
\begin{itemize}
\item if $v\in B_i\cap U_{i,i+1}$ with $1\leq i\leq m$, then $v$ belongs to all bags of $P_i$ and to the first bag of $P_{i,i+1}$;
\item if $v\in U_{i,i+1}\cap B_{i+1}$ with $0\leq i<m$, then $v$ belongs to the last bag of $P_{i,i+1}$ and to all bags of $P_{i+1}$;
\item $v$ belongs to no bags in the other path decompositions among $P_{0,1},P_1,P_{1,2},\ldots,P_m,P_{m,m+1}$.
\end{itemize}
In all cases, the nodes whose bags contain $v$ form a non-empty subpath in the concatenation.

Now, consider an edge $uv$ of $G$.
If $u,v\in V_i$ ($1\leq i\leq m$) or $u,v\in U_{i,i+1}$ ($0\leq i\leq m$), then both $u$ and $v$ belong to a common bag of $P_i$ or $P_{i,i+1}$ (respectively).
Since the sets of the form $U_i$ and $U_{i,i+1}$ are pairwise non-adjacent, if the edge $uv$ connects two different sets of the form $U_{i,i+1}$, $U_i$, or $V(\zeta_i)$, then it connects $V(\zeta_i)$ with $N_G(V(\zeta_i))$ ($1\leq i\leq m$), where the latter set is contained in $B_i$ (by definition), so $u,v\in V_i$.
Therefore, the above exhausts all the edges of $G$, showing that every edge is realized in some bag of the concatenation.
\end{proof}

It remains to provide the path decompositions $P_\gamma$ for all $\gamma\in\Gamma$ or to deal with the cases where we cannot provide them.
Let $K$ be the unique subtree of $T$ that has $G$ as the root and the nodes in $Z$ as the leaves.
Thus $h(\gamma,b)<k$ for every node $\gamma$ in $T-K$, by the definition of $Z$.
For $0\leq i\leq m$, let $\zeta_{i,i+1}$ be the root of $Q_{i,i+1}$ in $T$, that is, the lowest common ancestor of $\zeta_i$ and $\zeta_{i+1}$ in $T$.

\begin{claim}
\label{claim:component}
If\/ $\gamma$ is a connected component of\/ $G[V_i\setminus B_i]$ where\/ $1\leq i\leq m$, then either
\begin{enumalph}
\item $\gamma$ is a node in\/ $T-K$ that is a child of\/ $\zeta_i$ in\/ $T$, or
\item $V(\gamma)\cap V(\zeta_i)=\emptyset$ and\/ $\gamma\in\CC{G}(U_i\setminus B_i)$, which is possible only when\/ $1<i<m$.
\end{enumalph}
If\/ $\gamma$ is a connected component of\/ $G[U_{i,i+1}\setminus B_{Q_{i,i+1}}]$ where\/ $0\leq i\leq m$, then either
\begin{enumalph}[resume]
\item $\gamma$ is a node in\/ $T-K$ that is a child in\/ $T$ of a node from\/ $Q_{i,i+1}$, or
\item $V(\gamma)\cap V(\zeta_{i,i+1})=\emptyset$, which is possible only when\/ $1\leq i<m$.
\end{enumalph}
\end{claim}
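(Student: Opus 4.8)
The plan is to treat the two halves of the claim separately. The first half, about components of $G[V_i\setminus B_i]$, is a short direct argument; the second half, about components of $G[U_{i,i+1}\setminus\bigcup_{\xi\in V(Q_{i,i+1})}B_\xi]$, is the real work and will reduce to Lemma~\ref{lem:component} once a preparatory claim is in place.

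For the first half I would begin by unwinding the definitions: since $B_i=B_{\zeta_i}=A_{\zeta_i}\cup N_G(V(\zeta_i))$ and $U_i\subseteq U$ is disjoint from $V(\zeta_i)$, we get $B_i\cap V(\zeta_i)=A_{\zeta_i}$ and $B_i\cap U_i=N_G(V(\zeta_i))\cap U_i$, hence $V_i\setminus B_i=(V(\zeta_i)\setminus A_{\zeta_i})\cup(U_i\setminus B_i)$, a disjoint union. No edge of $G$ joins these two parts, because an edge from $V(\zeta_i)$ to $U_i$ has its $U_i$-endpoint in $N_G(V(\zeta_i))\subseteq B_i$. So every connected component $\gamma$ of $G[V_i\setminus B_i]$ lies entirely in one part. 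If $V(\gamma)\subseteq V(\zeta_i)\setminus A_{\zeta_i}=\bigcup\{V(\beta):\beta\text{ a child of }\zeta_i\text{ in }T\}$, then $\gamma$ is exactly some child $\beta$ of $\zeta_i$, because the children of $\zeta_i$ have pairwise disjoint, pairwise non-adjacent, connected vertex sets; moreover $\beta\in T-K$ since $\zeta_i\in Z$ is a leaf of $K$, so we are in outcome~(a). Otherwise $V(\gamma)\subseteq U_i\setminus B_i\subseteq U$, so $V(\gamma)\cap V(\zeta_i)=\emptyset$ and $\gamma\in\CC{G}(U_i\setminus B_i)$, which is outcome~(b); this can occur only when $U_i\neq\emptyset$, i.e.\ $1<i<m$.

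For the second half the key is the following preparatory claim: writing $R=\bigcup_{\xi\in V(Q_{i,i+1})}B_\xi$, the set $U_{i,i+1}\setminus R$ is a union of connected components of $G-R$. Granting this, a connected component $\gamma$ of $G[U_{i,i+1}\setminus R]$ is in fact a connected component of $G-R$, with $V(\gamma)\subseteq U_{i,i+1}\subseteq U$ and $V(\gamma)\neq\emptyset$. Applying Lemma~\ref{lem:component} with $Q=Q_{i,i+1}$, whose root in $T$ is $\zeta_{i,i+1}$, gives two cases. In the first, $\gamma$ is a node of $T-Q_{i,i+1}$ that is a child in $T$ of a node of $Q_{i,i+1}$; since every node of $K$ is a $T$-ancestor of some leaf $\zeta_l$ of $K$, and then $V(\zeta_l)\subseteq V(\gamma)\subseteq U$ would contradict $V(\zeta_l)\cap U=\emptyset$ and $V(\zeta_l)\neq\emptyset$, the node $\gamma$ cannot lie in $K$, so $\gamma\in T-K$ and we are in outcome~(c). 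In the second case $V(\gamma)\cap V(\zeta_{i,i+1})=\emptyset$, i.e.\ outcome~(d); and since $\zeta_{0,1}=\zeta_{m,m+1}=G$ and $V(\gamma)\neq\emptyset$, this second case cannot occur unless $1\leq i<m$.

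The main obstacle is the preparatory claim, which I would reduce to: if $u\in U_{i,i+1}\setminus R$ and $uv\in E(G)$ with $v\notin R$, then $v\in U_{i,i+1}$. Recall that $U_{i,i+1}$ consists of the vertices $w\in U$ for which the set of indices $j$ such that some $U$-path joins $w$ to $V(\zeta_j)$ equals a fixed nonempty set $S_i$ of consecutive indices with $\size{S_i}\leq2$ (namely $S_0=\{1\}$, $S_i=\{i,i+1\}$ for $1\leq i<m$, $S_m=\{m\}$), and note that $\zeta_j\in V(Q_{i,i+1})$—hence $B_{\zeta_j}\subseteq R$—for every $j\in S_i$. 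First, $v\notin\bigcup_l V(\zeta_l)$: if $v\in V(\zeta_l)$, then the single edge $uv$ is a (degenerate) $U$-path joining $u$ to $V(\zeta_l)$, so $l\in S_i$; but then $u\in N_G(V(\zeta_l))\subseteq B_{\zeta_l}\subseteq R$, contradicting $u\notin R$. So $v\in U$. For each $j\in S_i$, concatenating the edge $uv$ with a $U$-path from $u$ to $V(\zeta_j)$ (all of whose internal vertices, together with $u$, lie in $U$) and extracting a simple path shows that $v$ too is joined to $V(\zeta_j)$ by a $U$-path; thus the index set of $v$ contains $S_i$, and since it has at most two consecutive elements it equals $S_i$ when $\size{S_i}=2$. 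When $\size{S_i}=1$ (the boundary cases $i\in\{0,m\}$), any further index $j'$ in $v$'s index set would, by the same concatenation argument applied at $v$, force $j'\in S_i$, so again $v$'s index set is $S_i$. In either case $v\in U_{i,i+1}$, proving the claim. Throughout, the points requiring care are the degenerate one-edge $U$-path, extracting simple paths from the concatenated walks, and keeping the conventions $\zeta_0=\zeta_{m+1}=G$ (so that $Q_{0,1}$ and $Q_{m,m+1}$ join $G$ to $\zeta_1$ and $\zeta_m$) straight in the boundary cases.
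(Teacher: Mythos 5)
Your proposal is correct and follows essentially the same route as the paper: both halves hinge on showing that the set in question has all its external neighbors inside the relevant separator ($B_i$ for the first half, $\bigcup_{\xi\in V(Q_{i,i+1})}B_\xi$ for the second), so that components of the induced subgraph are in fact components of $G$ minus that separator, after which Lemma~\ref{lem:component} (which the paper also invokes for the first half, with $Q=\{\zeta_i\}$, where you instead decompose $V_i\setminus B_i$ by hand to the same effect) yields the dichotomy. The paper's verification of $N_G(U_{i,i+1}\setminus\bigcup_{\xi\in V(Q_{i,i+1})}B_\xi)\subseteq\bigcup_{\xi\in V(Q_{i,i+1})}B_\xi$ is terser, leaning on the earlier observation that the sets of the form $U_j$ and $U_{j,j+1}$ are pairwise non-adjacent, whereas your index-set argument re-derives this from scratch; this is a presentational rather than substantive difference.
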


\begin{proof}
For the proof of the first statement, let $\gamma$ be a connected component of $G[V_i\setminus B_i]$ where $1\leq i\leq m$.
Since $N_G(V(\zeta_i))\subseteq B_i$ and $N_G(U_i)\subseteq V(\zeta_i)$ (if $1<i<m$), we have $N_G(V(\zeta_i)\setminus B_i)\subseteq B_i$ and $N_G(U_i\setminus B_i)\subseteq B_i$ (if $1<i<m$).
Thus $N_G(V_i\setminus B_i)\subseteq B_i$, which implies that $\gamma$ is a connected component of $G-B_i$.
By Lemma~\ref{lem:component} applied with $Q$ consisting of the single node $\zeta_i$, either
\begin{enumalph}
\item $\gamma$ is a child of $\zeta_i$ in $T$, so it is a node in $T-K$, as $\zeta_i$ is a leaf of $K$, or
\item $V(\gamma)\cap V(\zeta_i)=\emptyset$, which is possible only when $1<i<m$, because $V_1=V(\zeta_1)\cup B_1$ and $V_m=V(\zeta_m)\cup B_m$; in that case, $V(\gamma)\subseteq U_i\setminus B_i$, so $\gamma\in\CC{G}(U_i\setminus B_i)$, as $N_G(U_i\setminus B_i)\subseteq B_i$.
\end{enumalph}

For the proof of the second statement, let $\gamma$ be a connected component of $G[U_{i,i+1}\setminus B_{Q_{i,i+1}}]$ where~$0\leq i\leq m$.
We have $N_G(U_{i,i+1})\subseteq V(\zeta_i)\cup V(\zeta_{i+1})$, $N_G(V(\zeta_i))\subseteq B_i$, and $N_G(V(\zeta_{i+1}))\subseteq B_{i+1}$, which imply $N_G(U_{i,i+1}\setminus B_{Q_{i,i+1}})\subseteq B_{Q_{i,i+1}}$, as $\zeta_i,\zeta_{i+1}\in V(Q_{i,i+1})$.
Therefore, $\gamma$ is a connected component of $G-B_{Q_{i,i+1}}$.
By Lemma~\ref{lem:component} applied with $Q=Q_{i,i+1}$, either
\begin{enumalph}[resume]
\item $\gamma$ is a node in $T-Q_{i,i+1}$ that is a child in $T$ of some node $\xi$ of $Q_{i,i+1}$, so $\gamma$ is a node in $T-K$, as $V(\gamma)$ is disjoint from $V(\zeta_j)$ for every leaf $\zeta_j$ of $K$, or
\item $V(\gamma)\cap V(\zeta_{i,i+1})=\emptyset$, which is possible only when $1\leq i<m$ (otherwise $\zeta_{i,i+1}=G$).\qedhere
\end{enumalph}
\end{proof}

Let a component $\gamma\in\Gamma$ be called a \emph{child component} when case (a) or~(c) of Claim~\ref{claim:component} holds for $\gamma$ and a \emph{parent component} when case (b) or~(d) of Claim~\ref{claim:component} holds for $\gamma$.
Case (a) or~(c) of Claim~\ref{claim:component} implies that for every child component $\gamma$, there has been a primary recursive call to $\solve(\gamma,b)$ and $h(\gamma,b)<k$, so $P_\gamma=P(\gamma,b)$ is a path decomposition of $\gamma$ of width at most $t(k-1)-1$.

An attempt to deal with parent components $\gamma$ would be to run $\solve(\gamma,k)$ for each of them to compute $h(\gamma,k)$, $W(\gamma,k)$, and $P(\gamma,k)$ when $h(\gamma,k)<k$.
If every such recursive call led to $h(\gamma,k)<k$, then $P(\gamma,k)$ would be a requested path decomposition $P_\gamma$ of $\gamma$ of width at most $t(k-1)+1$ for every parent component $\gamma$.
If some of these recursive calls led to $h(\gamma,k)=k$, then we would set $h(G,b)=k+1$ and use $W(\gamma,k)$ to construct a requested $\calT_{k+1}$-witness $W(G,b)$, while $P(G,b)$ would be constructed as described in outcome~\ref{outcomeB} with no need for explicit path decompositions of the parent components.
However, this approach fails in that we are unable to provide a polynomial upper bound on the number of distinct pairs $(G,b)$ for which recursive calls to $\solve(G,b)$ would be made.

We overcome this difficulty as follows.
For each parent component $\gamma$, instead of running $\solve(\gamma,k)$, we run $\solve(\hat\gamma,k)$ for an appropriately defined connected induced subgraph $\hat\gamma$ of $G[U]$ such that $V(\gamma)\subseteq V(\hat\gamma)$, to compute $h(\hat\gamma,k)$, $W(\hat\gamma,k)$, and $P(\hat\gamma,k)$ when $h(\hat\gamma,k)<k$.
We call these recursive calls \emph{secondary}.
Their role is analogous to the role of the calls to $\solve(\gamma,k)$ in the attempt above.
If every secondary call leads to $h(\hat\gamma,k)<k$, then $P(\hat\gamma,k)$ is a path decomposition of $\hat\gamma$ of width at most $t(k-1)+1$, and the requested path decomposition $P_\gamma$ of every parent component $\gamma$ is obtained from the respective $P(\hat\gamma,k)$ by restricting the bags to $V(\gamma)$.
If some secondary call leads to $h(\hat\gamma,k)=k$, then we set $h(G,b)=k+1$ and use $W(\hat\gamma,k)$ to construct a requested $\calT_{k+1}$-witness $W(G,b)$ (as described in Claim~\ref{claim:witness} below), while $P(G,b)$ is constructed as described in outcome~\ref{outcomeB} with no need for explicit path decompositions of the parent components.
The induced subgraphs $\hat\gamma$ are defined in a way (described below) that behaves nicely in the recursion and will allow us (in Subsection~\ref{subsec:complexity}) to provide a polynomial upper bound on the number of distinct pairs $(G,b)$ for which recursive calls to $\solve(G,b)$ are made.

The definition that follows is technical, but it is exactly what we need to be able to prove Lemma~\ref{lem:level}.
For $\sigma\in\Sub(G)$, let $A_\sigma$ be as in the definition of normalized tree decomposition, so that $A_\sigma\subseteq V(\sigma)$ and $B_\sigma=A_\sigma\cup N_G(V(\sigma))$.
Observe that if $v\in B_\xi$ where $\xi\in\Sub(G)$, then all nodes in $\Sub(G)$ containing $v$, in particular the node $\sigma$ such that $v\in A_\sigma$, lie on the path from $\xi$ to the root $G$ in $T$.
For $1\leq i<m$, let $R_{i,i+1}$ be the set of vertices $v\in U\cap B_{\zeta_{i,i+1}}$ with the following property: if $\sigma$ is the node in $\Sub(G)$ such that $v\in A_\sigma$ (which therefore lies on the path from $\zeta_{i,i+1}$ to the root $G$ in $T$), then $v$ is connected by a $(U\cap V(\sigma))$-path to $V(\zeta_i)$ and to $V(\zeta_{i+1})$; thus $R_{i,i+1}\subseteq U_{i,i+1}$.
For $1\leq i\leq m$, let $R_i=U\cap B_i=B_i\setminus V(\zeta_i)$; thus $R_1\subseteq U_{0,1}\cup U_{1,2}$, $R_i\subseteq U_{i-1,i}\cup U_i\cup U_{i,i+1}$ if $1<i<m$, and $R_m\subseteq U_{m-1,m}\cup U_{m,m+1}$.
Let $R=R_1\cup R_{1,2}\cup R_2\cup\cdots\cup R_{m-1,m}\cup R_m$.

For each parent component $\gamma$, the induced subgraph $\hat\gamma$ used in the description above is defined as follows.
If $\gamma\in\CC{G}(U_i\setminus B_i)$ with $1<i<m$, as in case~(b) of Claim~\ref{claim:component}, then $\hat\gamma=\gamma$, which is a connected component of $G[U_i\setminus R]$, because $U_i\cap R=U_i\cap R_i=U_i\cap B_i$.
If $\gamma\in\CC{G}(U_{i,i+1}\setminus B_{Q_{i,i+1}})$ with $1\leq i<m$, as in case~(d) of Claim~\ref{claim:component}, then $\hat\gamma$ is the connected component of $G[U_{i,i+1}\setminus R]$ such that $V(\gamma)\subseteq V(\hat\gamma)$, which exists because $U_{i,i+1}\cap R=U_{i,i+1}\cap(R_i\cup R_{i,i+1}\cup R_{i+1})\subseteq B_{Q_{i,i+1}}$.
Let each $\hat\gamma$ obtained this way from a parent component $\gamma$ be called a \emph{secondary component}.
Every secondary component is a connected component of $G[U\setminus R]$ and thus of $G-R$, as $N_G(\bigcup_{i=1}^mV(\zeta_i))\subseteq R$.

\begin{claim}
\label{claim:witness}
If\/ $h(\hat\gamma,k)=k$ for a secondary component\/ $\hat\gamma$, then the following is a\/ $\calT_{k+1}$-witness for\/ $G$:
\begin{itemize}
\item $\witness{V(G);W(\zeta_{i-1},b),W(\hat\gamma,k),W(\zeta_{i+1},b)}$ when\/ $\hat\gamma\in\CC{G}(U_i\setminus R)$ with\/ $1<i<m$;
\item
$\witness{V(G);W(\zeta_i,b),W(\hat\gamma,k),W(\zeta_{i+1},b)}$ when\/ $\hat\gamma\in\CC{G}(U_{i,i+1}\setminus R)$ with\/ $1\leq i<m$.
\end{itemize}
\end{claim}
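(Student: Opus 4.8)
The goal is to verify that each of the two proposed trees really is a valid $\calT_{k+1}$-witness for $G$, i.e. that it has the right shape (root $V(G)$, three subtrees that are $\calT_k$-witnesses of induced subgraphs of $G$ on pairwise disjoint vertex sets) and that those three vertex sets satisfy the interconnection property from the definition of $\calT_{k+1}$. The three subtrees are $\calT_k$-witnesses by construction: $W(\zeta_j,b)$ is a $\calT_k$-witness of $G[V(\zeta_j)]=\zeta_j$ because $h(\zeta_j,b)=k$ for the relevant leaves $\zeta_j$ of $K$ (this is exactly why they lie in $Z$), and $W(\hat\gamma,k)$ is a $\calT_k$-witness of $\hat\gamma$ by the hypothesis $h(\hat\gamma,k)=k$ of the claim together with the specification of what $\solve$ returns. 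So the only real content is: (i) the three vertex sets are pairwise disjoint, and (ii) any two of them can be joined in $G$ by a path avoiding the third.

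For disjointness: in the first bullet, $V(\zeta_{i-1})$ and $V(\zeta_{i+1})$ are disjoint (they are distinct leaves of $K$, hence distinct members of $Z$, whose vertex sets are pairwise disjoint), and $V(\hat\gamma)\subseteq U_i\setminus R_i\subseteq U$, which is disjoint from every $V(\zeta_j)$ since $U=V(G)\setminus\bigcup_{\zeta\in Z}V(\zeta)$. In the second bullet the same reasoning applies with $V(\hat\gamma)\subseteq U_{i,i+1}\subseteq U$. So (i) is immediate from the definitions.

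For the interconnection property (ii), I would treat the two bullets in parallel. Set the three sets to be $V(\zeta_a)$, $V(\hat\gamma)$, $V(\zeta_c)$ where $\{a,c\}=\{i-1,i+1\}$ (first bullet) or $\{a,c\}=\{i,i+1\}$ (second bullet). A path avoiding $V(\hat\gamma)$ between $V(\zeta_a)$ and $V(\zeta_c)$: since $\hat\gamma$ is a connected component of $G-R$ and $V(\zeta_a),V(\zeta_c)$ are disjoint from the component $\hat\gamma$, it suffices to find any $V(\zeta_a)$–$V(\zeta_c)$ path in $G$; such a path exists because $G$ is connected, and since $\hat\gamma$ is a whole connected component of $G-R$ while the endpoints lie outside $\hat\gamma$, we may route the path through $G-V(\hat\gamma)$ (formally: contract $R$-free reasoning, or just take a shortest $V(\zeta_a)$–$V(\zeta_c)$ path and note it cannot enter the component $\hat\gamma$ since to do so it would have to pass through $N_G(V(\hat\gamma))\subseteq R$, and once inside it would be stuck). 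For a path between $V(\zeta_a)$ and $V(\hat\gamma)$ avoiding $V(\zeta_c)$ (and symmetrically with $a,c$ swapped): here I would use the membership conditions built into the $R_{*}$ sets. Pick a vertex $v$ of $\hat\gamma$ lying in the corresponding ``parent'' component $\gamma$ (so $v\in U_i$ in the first case, $v\in U_{i,i+1}$ in the second). In the first case, $v\in U_i$ means by definition that $v$ is joined by a $U$-path to $V(\zeta_i)$ but not to $V(\zeta_{i-1})$ or $V(\zeta_{i+1})$ directly — wait, I need to be careful: $U_i$ is the set of vertices connected by a $U$-path to $V(\zeta_i)$ and lying outside all the $U_{j,j+1}$; the relevant adjacency to reach $V(\zeta_{i-1})$ and $V(\zeta_{i+1})$ will come from combining this $U$-path into $V(\zeta_i)$ with the fact that $H$ is the path $\zeta_1\ldots\zeta_m$, so $\zeta_i$ is adjacent in $H$ to both $\zeta_{i-1}$ and $\zeta_{i+1}$, giving $U$-paths from $V(\zeta_i)$ to $V(\zeta_{i-1})$ and to $V(\zeta_{i+1})$ respectively, which can be chosen to avoid the other of the two (since those $U$-paths go through disjoint blocks $U_{i-1,i}$ and $U_{i,i+1}$). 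Concatenating gives the required paths, and one checks they avoid $V(\hat\gamma)$ / the third set by the disjointness of the $U_{j,j+1}$, $U_j$ blocks and the fact that $\hat\gamma$ is contained in one such block. In the second case, $v\in U_{i,i+1}$ already gives $U$-paths to both $V(\zeta_i)$ and $V(\zeta_{i+1})$ by definition of $U_{i,i+1}$, and the subtlety that the relevant path must stay inside $\hat\gamma$ until it exits through $R$ is exactly what the definition of $R_{i,i+1}$ (via the node $\sigma$ with $v\in A_\sigma$) is engineered to guarantee.

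**Main obstacle.** The routine parts — shape, disjointness, and the $V(\zeta_a)$–$V(\zeta_c)$ path — are quick. The delicate point is showing the paths connecting $V(\hat\gamma)$ to each $V(\zeta_j)$ can be chosen to (a) reach $\hat\gamma$ rather than just $\gamma$, and (b) avoid the third set, using only the structure of $H$ being a path and the definitions of $U_i$, $U_{i,i+1}$, and especially $R_{i,i+1}$; the role of the ancestor node $\sigma$ and the set $A_\sigma$ in the definition of $R_{i,i+1}$ is precisely to ensure the connecting $U$-path from a vertex of $\hat\gamma$ to $V(\zeta_i)$ and $V(\zeta_{i+1})$ can be taken within $\hat\gamma$ up to its boundary in $R$, so that no vertex of $R$ (in particular no vertex of another leaf's vicinity) is needed prematurely. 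Getting this bookkeeping exactly right — matching ``connected in $G-R$'' with ``connected by a $U$-path'' — is where the care is required.
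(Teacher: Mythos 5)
Your account of the easy parts (shape of the witness, disjointness of the three sets, the two paths from $V(\hat\gamma)$ to the relevant $V(\zeta_j)$'s, which can simply be taken to be $U$-paths and thus automatically miss all other $V(\zeta_{j'})$'s) is fine. But the step you dismiss as routine---constructing a $V(\zeta_a)$--$V(\zeta_c)$ path that avoids $V(\hat\gamma)$---is in fact the heart of the claim in the second bullet, and the argument you give for it does not work. You say a shortest $V(\zeta_a)$--$V(\zeta_c)$ path cannot enter $\hat\gamma$ because it would have to cross $N_G(V(\hat\gamma))\subseteq R$ and then ``would be stuck.'' That is false: nothing prevents a path (shortest or not) from entering $\hat\gamma$ through one vertex of $R$ and leaving through another, and being a shortest path gives no shortcut to replace the portion inside $\hat\gamma$. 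In the first bullet the difficulty does not arise, because $\hat\gamma\subseteq U_i$ while the natural route $U_{i-1,i}\cup V(\zeta_i)\cup U_{i,i+1}$ is disjoint from $U_i$---but your writeup does not say this; it invokes the same incorrect shortest-path reasoning. In the second bullet, $\hat\gamma$ sits inside $U_{i,i+1}$, the very block that all $U$-paths between $V(\zeta_i)$ and $V(\zeta_{i+1})$ pass through, so there is no ``disjoint block'' to route through, and one really has to prove something.

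You have also misread the purpose of $R_{i,i+1}$. It is not there to make the paths from $\hat\gamma$ to $V(\zeta_i)$ and $V(\zeta_{i+1})$ stay inside $\hat\gamma$ (those paths are unproblematic). Its role is to control the third path. The paper's argument is: let $X$ be the set of vertices of $U\cap V(\zeta_{i,i+1})$ connected by $(U\cap V(\zeta_{i,i+1}))$-paths to both $V(\zeta_i)$ and $V(\zeta_{i+1})$; show that $N_G\bigl(X\setminus(R_i\cup R_{i,i+1}\cup R_{i+1})\bigr)\subseteq R_i\cup R_{i,i+1}\cup R_{i+1}$ (this is exactly where the definition of $R_{i,i+1}$ via the ancestor $\sigma$ is used); then observe that since $\hat\gamma$ contains the parent component $\gamma$, which is disjoint from $V(\zeta_{i,i+1})\supseteq X$, the component $\hat\gamma$ of $G\bigl[U_{i,i+1}\setminus(R_i\cup R_{i,i+1}\cup R_{i+1})\bigr]$ must be disjoint from $X$ entirely. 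Hence any $X$-path from $V(\zeta_i)$ to $V(\zeta_{i+1})$ (which exists because $\zeta_{i,i+1}$ is connected) avoids $\hat\gamma$. This chain of reasoning---which is the actual content of the claim---is absent from your proposal.
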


\begin{proof}
If $\hat\gamma\in\CC{G}(U_i\setminus R)$ with $1<i<m$, then there is a path in $G$ connecting
\begin{itemize}
\item $V(\zeta_{i-1})$ with $V(\hat\gamma)$ via $U_{i-1,i}\cup V_i$, thus avoiding $V(\zeta_{i+1})$,
\item $V(\hat\gamma)$ with $V(\zeta_{i+1})$ via $V_i\cup U_{i,i+1}$, thus avoiding $V(\zeta_{i-1})$,
\item $V(\zeta_{i-1})$ with $V(\zeta_{i+1})$ via $U_{i-1,i}\cup V(\zeta_i)\cup U_{i,i+1}$, thus avoiding $V(\hat\gamma)$.
\end{itemize}
This shows that $\witness{V(G);W(\zeta_{i-1},b),W(\hat\gamma,k),W(\zeta_{i+1},b)}$ is a $\calT_{k+1}$-witness for $G$.

Now, suppose $\hat\gamma\in\CC{G}(U_{i,i+1}\setminus R)$ with $1\leq i<m$.
There is a path in $G$ connecting
\begin{itemize}
\item $V(\zeta_i)$ with $V(\hat\gamma)$ via $U_{i,i+1}$, thus avoiding $V(\zeta_{i+1})$,
\item $V(\hat\gamma)$ with $V(\zeta_{i+1})$ via $U_{i,i+1}$, thus avoiding $V(\zeta_i)$.
\end{itemize}
To show that $\witness{V(G);W(\zeta_i,b),W(\hat\gamma,k),W(\zeta_{i+1},b)}$ is a $\calT_{k+1}$-witness for $G$, it remains to provide a path in $G$ connecting $V(\zeta_i)$ with $V(\zeta_{i+1})$ that avoids $V(\hat\gamma)$.

Let $X$ be the set of vertices in $U\cap V(\zeta_{i,i+1})$ that are connected by a $(U\cap V(\zeta_{i,i+1}))$-path to $V(\zeta_i)$ and to $V(\zeta_{i+1})$.
Since $\zeta_{i,i+1}$ is connected and $V(\zeta_i)\cup V(\zeta_{i+1})\subseteq V(\zeta_{i,i+1})$, there is a $V(\zeta_{i,i+1})$-path connecting $V(\zeta_i)$ and $V(\zeta_{i+1})$.
A minimal path with that property has all internal vertices in $U$ and therefore in $X$.
To complete the proof, we show that $V(\hat\gamma)\cap X=\emptyset$.

Suppose for the sake of contradiction that $V(\hat\gamma)\cap X\neq\emptyset$.
Since $\hat\gamma$ is a secondary component, there is a parent component $\gamma\in\CC{G}(U_{i,i+1}\setminus B_{Q_{i,i+1}})$ such that $V(\gamma)\cap V(\zeta_{i,i+1})=\emptyset$ and $V(\gamma)\subseteq V(\hat\gamma)$, so $V(\hat\gamma)\nsubseteq X$.
Therefore, since $\hat\gamma$ is connected, there is a vertex $v\in V(\hat\gamma)\setminus X$ with a neighbor in $X$.
It follows that $v$ is connected by a $(U\cap V(\zeta_{i,i+1}))$-path to $V(\zeta_i)$ and to $V(\zeta_{i+1})$.
Since $v\in U\setminus X$, we have $v\notin V(\zeta_{i,i+1})$ (by the definition of $X$) and thus $v\in N_G(V(\zeta_{i,i+1}))\subseteq B_{\zeta_{i,i+1}}$.
Let $\sigma$ be the node on the path from $\zeta_{i,i+1}$ to the root in $T$ such that $v\in A_\sigma$.
Thus $V(\zeta_{i,i+1})\subseteq V(\sigma)$.
We conclude that $v\in U\cap B_{\zeta_{i,i+1}}$ and $v$ is connected by a $(U\cap V(\sigma))$-path to $V(\zeta_i)$ and to $V(\zeta_{i+1})$, so $v\in R_{i,i+1}$, which contradicts the fact that $V(\hat\gamma)\cap R=\emptyset$.
\end{proof}

This completes the description of the procedure $\solve(G,b)$.
Since all recursive calls of the form $\solve(\gamma,c)$ that it makes are for proper connected induced subgraphs $\gamma$ of $G$ (and for $c\leq b$), the procedure terminates and correctly computes the requested outcome.

\subsection{Complexity Analysis}
\label{subsec:complexity}

The algorithm consists in running $\solve(\inputG,\infty)$ on the input graph $\inputG$.
It makes further recursive calls to $\solve(\beta,b)$ for various connected induced subgraphs $\beta$ of $\inputG$ and upper bound requests $b$.
Let every pair $(\beta,b)$ such that $\solve(\beta,b)$ is run by the algorithm (somewhere in the recursion tree) be called a \emph{subproblem}.
We show that if $\inputG$ has $n$ vertices, then there are only $O(n\log n)$ subproblems.

A \emph{pruned subproblem} is a subproblem $(\beta,b)$ for which the run of $\solve(\beta,b)$ is pruned, which implies $h(\beta,b)=b$.
Observe that unless $(\beta,b)$ is pruned, the run of $\solve(\beta,b)$ performs operations that do not depend on the value of $b$ (including the operations performed in all recursive calls).
Indeed, unless $(\beta,b)$ is pruned, no primary recursive call made by $\solve(\beta,b)$ is pruned, so (by induction) these calls perform operations and lead to results that do not depend on $b$, and every secondary recursive call made by $\solve(\beta,b)$ is of the form $\solve(\hat\gamma,k)$ where $k$ does not depend on $b$.

A \emph{key subproblem} is a subproblem $(\beta,b)$ for which the run of $\solve(\beta,b)$ is a key run, that is, $\size{Z}\geq 2$ and $H$ is a path.
In particular, a key subproblem is not pruned.
A \emph{key subgraph} is a connected induced subgraph $\beta$ of $\inputG$ such that $(\beta,b)$ is a key subproblem for some $b\in\setN\cup\{\infty\}$.

For every key subgraph $\beta$, let $\level(\beta)$ denote the value of $k$ defined in a key run of $\solve(\beta,b)$, that is, the maximum of $h(\gamma,b)$ over all $\gamma\in\Sub(\beta)\setminus\{\beta\}$.
(As we observed above, these values do not depend on $b$, so neither does $\level(\beta)$.)
For every key subgraph $\beta$, let $R(\beta)$ be the set $R=R_1\cup R_{1,2}\cup R_2\cup\cdots\cup R_{m-1,m}\cup R_m$ defined in a key run of $\solve(\beta,b)$.
(As before, it does not depend on $b$.)
The following lemma exhibits the crucial property of the sets $R(\beta)$.

\begin{lemma}
\label{lem:level}
The following holds for any key subgraphs\/ $\alpha$ and\/ $\beta$ such that\/ $\alpha\in\Sub(\beta)$.
\begin{enumerate}
\item\label{item:level-order} $\level(\alpha)\leq\level(\beta)$.
\item\label{item:level-less} If\/ $\level(\alpha)<\level(\beta)$, then\/ $R(\beta)\cap V(\alpha)=\emptyset$.
\item\label{item:level-equal} If\/ $\level(\alpha)=\level(\beta)$, then\/ $R(\beta)\cap V(\alpha)=R(\alpha)$.
\end{enumerate}
\end{lemma}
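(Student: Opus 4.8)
The plan is to handle part~\ref{item:level-order} and the degenerate case $\alpha=\beta$ first, and then to split into three cases according to how $V(\alpha)$ lies with respect to the sets $V(\zeta_1),\dots,V(\zeta_m)$ occurring in a key run of $\solve(\beta,\cdot)$. Throughout I would use that, for non-pruned subproblems, the quantities $h(\cdot)$, $\level(\cdot)$, and $R(\cdot)$ are independent of the upper bound request (as observed before the lemma), and that Lemma~\ref{lem:Tin}~\ref{item:Tin-recursion} identifies $\Sub(\alpha)\setminus\{\alpha\}$ with $\{\delta\in\Sub(\beta)\colon V(\delta)\subsetneq V(\alpha)\}\subseteq\Sub(\beta)\setminus\{\beta\}$; in particular $\alpha$'s normalized decomposition is carried by the subtree of $\beta$'s decomposition tree $T$ rooted at $\alpha$, the sets $A_\sigma$ coincide for $\sigma\in\Sub(\alpha)$, and $B_\sigma\cap V(\alpha)$ is the bag of $\sigma$ in $\alpha$'s decomposition. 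Part~\ref{item:level-order} then follows from $\level(\alpha)=\max_{\delta\in\Sub(\alpha)\setminus\{\alpha\}}h(\delta)\leq\max_{\gamma\in\Sub(\beta)\setminus\{\beta\}}h(\gamma)=\level(\beta)$, both sets being non-empty since key subgraphs have $\size{Z}\geq2$. If $\alpha=\beta$, then $R(\beta)\cap V(\alpha)=R(\beta)=R(\alpha)$, so parts \ref{item:level-less}--\ref{item:level-equal} hold.

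Assume $\alpha\neq\beta$, and fix a key run of $\solve(\beta,\cdot)$ with value $k=\level(\beta)$, set $Z=\{\zeta_1,\dots,\zeta_m\}$, path $H=\zeta_1\ldots\zeta_m$, subtree $K$ (root $\beta$, leaves $Z$), and $R(\beta)=R_1\cup R_{1,2}\cup\dots\cup R_m$. By laminarity of $\Sub(\beta)$ and disjointness of the $V(\zeta_i)$, exactly one of the following holds: (I) $V(\alpha)\subseteq V(\zeta_i)$ for some $i$; (II) $V(\zeta_i)\subsetneq V(\alpha)$ for some $i$; (III) $V(\alpha)$ is disjoint from every $V(\zeta_i)$. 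In case~(I), minimality of $\zeta_i$ in $Z$ gives $h(\delta)<k$ for all $\delta\in\Sub(\alpha)\setminus\{\alpha\}$, so $\level(\alpha)<\level(\beta)$; moreover each $R_j$ (being $N_\beta(V(\zeta_i))$, disjoint from $V(\zeta_i)\supseteq V(\alpha)$, or a subset of $U$, which misses $V(\zeta_i)$) and each $R_{j,j+1}\subseteq U$ is disjoint from $V(\alpha)$, so $R(\beta)\cap V(\alpha)=\emptyset$. In case~(III), $\alpha$ is a node of $T-K$ (otherwise, being a node of $K$ other than $\beta$, it would be an ancestor of, or equal to, some $\zeta_i$, contradicting disjointness), so every descendant of $\alpha$ has $h(\cdot)<k$ and $\level(\alpha)<\level(\beta)$; here $R_j\cap V(\alpha)=\emptyset$ because disjoint members of $\Sub(\beta)$ are non-adjacent in $\beta$, and $R_{j,j+1}\cap V(\alpha)=\emptyset$ because any $v\in R_{j,j+1}$ lies in $A_\sigma$ for a node $\sigma$ on the path from $\zeta_{j,j+1}$ to the root (hence $\sigma\in K$), whereas $v\in V(\alpha)$ would force $V(\sigma)\subseteq V(\alpha)$, making $\sigma$ a descendant of $\alpha$ and thus a node of $T-K$. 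Thus in cases (I) and (III) we get $\level(\alpha)<\level(\beta)$ together with $R(\beta)\cap V(\alpha)=\emptyset$, which proves part~\ref{item:level-less} and makes part~\ref{item:level-equal} vacuous.

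It remains to treat case~(II), where $\zeta_i\in\Sub(\alpha)\setminus\{\alpha\}$ has $h(\zeta_i)=k$, so part~\ref{item:level-order} forces $\level(\alpha)=\level(\beta)=k$ and $\solve(\alpha,\cdot)$ has a key run, with its own set $Z_\alpha$, path $H_\alpha$, and $R(\alpha)$. First I would check $Z_\alpha=\{\zeta\in Z\colon V(\zeta)\subseteq V(\alpha)\}$, using that minimality among $h(\cdot)=k$ nodes in $\Sub(\beta)\setminus\{\beta\}$ restricts to minimality in $\Sub(\alpha)\setminus\{\alpha\}$. The key geometric point is that $Z_\alpha$ is an \emph{interval} $\{\zeta_p,\dots,\zeta_q\}$ of the path $H$: tracking the consecutive visits of a path in $\beta$ to the sets $V(\zeta_1),\dots,V(\zeta_m)$ (distinct consecutive visits are joined by $U$-paths, hence realize a walk in the path $H$), one sees that any path in $\beta$ from $V(\zeta_a)$ to $V(\zeta_c)$ with $a<b<c$ meets $V(\zeta_b)$; applying this to a path inside the connected graph $\alpha$ shows $V(\zeta_b)\cap V(\alpha)\neq\emptyset$, whence $V(\zeta_b)\subseteq V(\alpha)$ by laminarity. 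Consequently $H_\alpha$, being a connected spanning subgraph of the induced path $H[Z_\alpha]$, equals $\zeta_p\ldots\zeta_q$.

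With this structure in hand I would match $R(\beta)\cap V(\alpha)$ with $R(\alpha)$ term by term. For $p\leq i\leq q$, $R_i\cap V(\alpha)=N_\beta(V(\zeta_i))\cap V(\alpha)=N_\alpha(V(\zeta_i))$ is the corresponding term of $R(\alpha)$, and $R_i\cap V(\alpha)=\emptyset$ otherwise (disjointness/non-adjacency). For $p\leq i<q$, $R_{i,i+1}\cap V(\alpha)$ equals the corresponding term of $R(\alpha)$: the lowest common ancestors $\zeta_{i,i+1}$, the bags $B_\sigma\cap V(\alpha)$, and the $(U\cap V(\sigma))$-path conditions all restrict correctly from $\beta$ to $\alpha$, because the relevant $\sigma$ (the one with $v\in A_\sigma$) satisfies $V(\sigma)\subseteq V(\alpha)$. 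And for $i<p$ or $i\geq q$, $R_{i,i+1}\cap V(\alpha)=\emptyset$, since a vertex in it would be $U$-path-connected inside some $V(\sigma)\subseteq V(\alpha)$ to both $V(\zeta_i)$ and $V(\zeta_{i+1})$, forcing both of those into $Z_\alpha$. Summing the contributions yields $R(\beta)\cap V(\alpha)=R(\alpha)$, which is part~\ref{item:level-equal}. The main obstacle is precisely this analysis of case~(II): establishing that $Z_\alpha$ is an interval of $H$, and verifying that the rather intricate definition of the sets $R_{i,i+1}$ is compatible with restriction to the induced subgraph $\alpha$.
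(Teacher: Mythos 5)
Your proof is correct and takes essentially the same approach as the paper: use Lemma~\ref{lem:Tin}~\ref{item:Tin-recursion} to identify $\Sub(\alpha)$ with a down-set of $\Sub(\beta)$ and to show that $\alpha$'s normalized decomposition is the restriction of $\beta$'s, then identify $Z^\alpha$ with the set of $\zeta\in Z$ below $\alpha$, observe $U^\alpha=U\cap V(\alpha)$ so that $H^\alpha$ is a connected spanning subgraph of $H[Z^\alpha]$ (hence $Z^\alpha$ is an interval of $H$), and finally match $R(\beta)\cap V(\alpha)$ against $R(\alpha)$ term by term via the nodes $\sigma$ with $v\in A_\sigma$. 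Your explicit case split (I)/(II)/(III) and the ``tracking consecutive visits'' argument for $Z^\alpha$ being an interval are harmless detours---the paper gets both facts directly from ``$\level(\alpha)<\level(\beta)$ implies no $\zeta_i$ lies below $\alpha$'' and from $H^\alpha$ being a connected subgraph of the path $H$, respectively---but the underlying reasoning is the same.
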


\begin{proof}
Since $\alpha\in\Sub(\beta)$, it follows from Lemma~\ref{lem:Tin}~\ref{item:Tin-recursion} that $\Sub(\alpha)\subseteq\Sub(\beta)$.
This directly shows~\ref{item:level-order}, by the definition of level.

Let $(T,\{B_\sigma\}_{\sigma\in\Sub(\beta)})$ be the normalized tree decomposition of $\beta$ used in a key run of $\solve$ on $\beta$, and let $\{A_\sigma\}_{\sigma\in\Sub(\beta)}$ be the corresponding sets from the definition of normalized tree decomposition.
(Their construction from the input tree decomposition of $\inputG$ depends only on $\beta$.)
Lemma~\ref{lem:Tin}~\ref{item:Tin-recursion} implies that $(T^\alpha,\{B_\sigma\cap V(\alpha)\}_{\sigma\in\Sub(\alpha)})$, where $T^\alpha$ is the subtree of $T$ comprising $\alpha$ and the nodes below $\alpha$ in $T$, is the normalized tree decomposition of $\alpha$ used in any key run of $\solve$ on $\alpha$.
Let $Z$, $m$, $H$, $U$, $\zeta_i$, $\zeta_{i,i+1}$, $B_i$, $R_i$, and $R_{i,i+1}$ be defined as in a key run of $\solve$ on $\beta$.

For the proof of~\ref{item:level-less}, assume $\level(\alpha)<\level(\beta)$.
The definition of level implies that none of the nodes in $Z$ lie below $\alpha$ in $T$.
Recall that if $v\in B_\xi$ where $\xi\in\Sub(\beta)$, then all nodes in $\Sub(\beta)$ containing $v$ lie on the path from $\xi$ to the root $\beta$ in $T$.
Therefore, for $1\leq i\leq m$, we have $R_i\cap V(\alpha)=\emptyset$, as $R_i=B_i\setminus V(\zeta_i)$.
Similarly, for $1\leq i<m$, we have $R_{i,i+1}\cap V(\alpha)=\emptyset$, as $R_{i,i+1}\subseteq B_{\zeta_{i,i+1}}$, where the node $\zeta_{i,i+1}$ lies above $\zeta_i$ and $\zeta_{i+1}$ in $T$.
This yields $R(\beta)\cap V(\alpha)=\emptyset$.

For the proof of~\ref{item:level-equal}, assume $\level(\alpha)=\level(\beta)$.
The definition of level implies that $\alpha\notin Z$ and none of the nodes in $Z$ lies above $\alpha$ in $T$.
Let $Z^\alpha$, $H^\alpha$, and $U^\alpha$ be the respective $Z$, $H$, and $U$ defined in any key run of $\solve$ on $\alpha$.
(We keep using $Z$, $H$, $U$, and other notations with no superscript to denote what is defined in a key run of $\solve$ on $\beta$.)
It follows that $Z^\alpha$ is the set of nodes in $Z$ that lie below $\alpha$ in $T$.
Lemma~\ref{lem:Sub}~\ref{item:Sub-laminar} implies $V(\zeta_i)\cap V(\alpha)=\emptyset$ for every $\zeta_i\in Z\setminus Z^\alpha$.
It follows that $U^\alpha=U\cap V(\alpha)$.
Consequently, the path $H^\alpha$ is a subpath of $H$, and $Z^\alpha=\{\zeta_r,\ldots,\zeta_s\}$ where $1\leq r<s\leq m$.
Let $R^\alpha_r,R^\alpha_{r,r+1},R^\alpha_{r+1},\ldots,R^\alpha_{s-1,s},R^\alpha_s$ denote the sets $R_1,R_{1,2},R_2,\ldots,R_{m-1,m},R_m$ defined in a key run of $\solve$ on $\alpha$ in this or the reverse order matching the order of $\zeta_r,\ldots,\zeta_s$.

Let $1\leq i\leq m$.
If $\zeta_i\in Z^\alpha$, then $R^\alpha_i=U^\alpha\cap B_i\cap V(\alpha)=U\cap B_i\cap V(\alpha)=R_i\cap V(\alpha)$.
If $\zeta_i\notin Z^\alpha$, then $\zeta_i$ does not lie below $\alpha$ in $T$, so $R_i\cap V(\alpha)=(B_i\setminus V(\zeta_i))\cap V(\alpha)=\emptyset$.
Now, let $1\leq i<m$.
Suppose $\zeta_i,\zeta_{i+1}\in Z^\alpha$.
It follows that $\zeta_{i,i+1}$ is a node in $T^\alpha$.
For each node $\sigma$ on the path from $\zeta_{i,i+1}$ to $\alpha$ in $T$, since $A_\sigma\subseteq V(\sigma)\subseteq V(\alpha)$, the set $A_\sigma$ contributes the same vertices to both $R^\alpha_{i,i+1}$ and $R_{i,i+1}$, namely, the vertices in $U\cap B_{\zeta_{i,i+1}}\cap A_\sigma$ that are connected by a $(U\cap V(\sigma))$-path to $V(\zeta_i)$ and to $V(\zeta_{i+1})$.
For each node $\sigma$ above $\alpha$ in $T$, the set $A_\sigma$ contributes no vertices to $R^\alpha_{i,i+1}$, and the vertices it contributes to $R_{i,i+1}$ are not in $V(\alpha)$, as $A_\sigma\cap V(\alpha)=\emptyset$.
Thus $R^\alpha_{i,i+1}=R_{i,i+1}\cap V(\alpha)$.
On the other hand, if $\zeta_i\notin Z^\alpha$ or $\zeta_{i+1}\notin Z^\alpha$, then $\zeta_{i,i+1}$ does not lie in $T_\alpha$, so $B_{\zeta_{i,i+1}}\cap V(\alpha)=\emptyset$ and thus $R_{i,i+1}\cap V(\alpha)=\emptyset$.
We conclude that
\[\begin{split}
R(\alpha)&=\smash[b]{R^\alpha_r\cup R^\alpha_{r,r+1}\cup R^\alpha_{r+1}\cup\cdots\cup R^\alpha_{s-1,s}\cup R^\alpha_s}\\
&=(\smash[b]{R_1\cup R_{1,2}\cup R_2\cup\cdots\cup R_{m-1,m}\cup R_m})\cap V(\alpha)\\
&=R(\beta)\cap V(\alpha).\qedhere
\end{split}\]
\end{proof}

Let $\ell$ be the maximum of $\level(\beta)$ over all key subgraphs $\beta$.
It follows that every key subproblem is of the form $(\beta,k)$ with $k\in\{0,\ldots,\ell,\infty\}$.
For $k\in\{0,\ldots,\ell\}$, let $R^k$ be the union of the sets $R(\beta)$ over all key subgraphs $\beta$ such that $\level(\beta)=k$.
(We have $R^k=\emptyset$ if there are no such key subgraphs.)
Let $R^{\geq k}$ be $R^k\cup\cdots\cup R^\ell$ for $k\in\{0,\ldots,\ell\}$ and empty for $k=\infty$.
The following lemma will finally allow us to bound the total number of subproblems.

\begin{lemma}
\label{lem:subproblems}
For every\/ $k\in\{0,\ldots,\ell,\infty\}$, every subproblem\/ $(\gamma,k)$ satisfies\/ $\gamma\in\Sub(\inputG-R^{\geq k})$.
\end{lemma}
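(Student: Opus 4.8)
The plan is to induct on the recursion tree of $\solve$, that is, to prove the statement for a subproblem $(\gamma,k)$ assuming it holds for the subproblem $(\beta,b)$ that directly spawned the call $\solve(\gamma,c)$ (so $\gamma$ is a proper connected induced subgraph of $\beta$ and $c \le b$). The base case is the root call $\solve(\inputG,\infty)$, for which $\bigcup_{i\geq\infty}R^i=\emptyset$ and $\inputG\in\Sub(\inputG)$ by Lemma~\ref{lem:Sub}. For the inductive step I would split according to which of the three kinds of recursive calls produced $(\gamma,c)$ inside a run of $\solve(\beta,b)$: a \emph{primary} call $\solve(\gamma,b)$ with $\gamma\in\Sub(\beta)\setminus\{\beta\}$; or a \emph{secondary} call $\solve(\hat\gamma,k)$ with $k=\level(\beta)$ and $\hat\gamma$ a secondary component (a connected component of $\beta-R(\beta)$ contained in $U$).

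For the \emph{primary} case, the induction hypothesis gives $\beta\in\Sub(\inputG-\bigcup_{i\geq b}R^i)$, and I need $\gamma\in\Sub(\inputG-\bigcup_{i\geq b}R^i)$ as well. Since $\gamma\in\Sub(\beta)$, Lemma~\ref{lem:Tin}~\ref{item:Tin-recursion} applied inside $\Sub(\inputG-\bigcup_{i\geq b}R^i)$ — noting $\beta$ is an induced subgraph of $\inputG-\bigcup_{i\geq b}R^i$ and $V(\gamma)\subseteq V(\beta)$, so $\Sub(\beta)=\{\delta\in\Sub(\inputG-\bigcup_{i\geq b}R^i)\colon V(\delta)\subseteq V(\beta)\}$ — yields $\gamma\in\Sub(\inputG-\bigcup_{i\geq b}R^i)$, which is what is required for the subproblem $(\gamma,b)$. (One should double-check the edge case where the primary call happens but $\solve(\gamma,b)$ is itself later pruned; the membership claim is about $\gamma$ and does not depend on pruning.)

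The \emph{secondary} case is the crux. Here $c=k=\level(\beta)$, so I must show $\hat\gamma\in\Sub\bigl(\inputG-\bigcup_{i\geq k}R^i\bigr)$. Write $X=\bigcup_{i\geq k}R^i$. By definition $R(\beta)\subseteq R^k\subseteq X$, so $\hat\gamma$, being a connected component of $\beta-R(\beta)$, remains connected in $\beta-X$; I would first argue $\hat\gamma\in\CC{\beta}(V(\beta)\setminus X)$, hence (via Lemma~\ref{lem:Tin}~\ref{item:Tin-component} with $\alpha=\beta$) that $\hat\gamma\in\Sub(\beta[V(\beta)\setminus X])=\Sub(\beta-X)$. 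It then remains to lift this from $\beta-X$ to $\inputG-X$: by the induction hypothesis $\beta\in\Sub(\inputG-\bigcup_{i\geq b}R^i)$ with $b\geq$ the level at which $\beta$ was reached — but here I need to relate $b$ to $k$. The key point is that $\beta$ is a key subgraph with $\level(\beta)=k$, and for the enclosing subproblem $(\beta,b)$ one has $b>k$ (since the run is a key run, not pruned), so $\bigcup_{i\geq b}R^i\subseteq\bigcup_{i\geq k+1}R^i\subseteq X$; combined with Lemma~\ref{lem:level}~\ref{item:level-less} and~\ref{item:level-equal}, the portion of $X$ that meets $V(\beta)$ is exactly $R(\beta)$ together with contributions from key subgraphs $\alpha\in\Sub(\beta)$ of level $\geq k$, and Lemma~\ref{lem:level} forces $R(\alpha)\cap V(\beta)$ to be governed by $R(\beta)$ for level-$k$ subgraphs and empty for higher levels relative to strictly smaller pieces. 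Making this precise — showing $X\cap V(\beta)=R(\beta)\cup(\text{something not separating }\hat\gamma)$ so that $\hat\gamma$ is still a connected component of $\inputG-X$ restricted to $V(\beta)$, and then that it is a connected component of $\inputG-X$ outright because $N_{\inputG}(V(\beta))\cap(\text{relevant part})$ is swallowed by $X$ — is the main obstacle and will require carefully combining the three parts of Lemma~\ref{lem:level} with Lemma~\ref{lem:Tin}~\ref{item:Tin-restrict} and~\ref{item:Tin-component}. I expect the bookkeeping of which $R^i$'s can intersect $V(\beta)$, and verifying that none of them cuts $\hat\gamma$ into smaller pieces, to be where the real work lies.
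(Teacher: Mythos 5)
Your base case and primary-case step match the paper's and are fine. The difficulty you correctly identify in the secondary case is, however, a genuine gap, and the induction you set up (on the recursion tree) does not have enough strength to close it. Here is why. From the IH for the parent call you only get $\beta\in\Sub\bigl(\inputG-\bigcup_{i\geq b}R^i\bigr)$ with $b>k$. To apply Lemma~\ref{lem:Tin}~\ref{item:Tin-component} you would need $\beta\in\Sub\bigl(\inputG-\bigcup_{i>k}R^i\bigr)$, which is a \emph{larger} deletion set (since $b>k$ implies $\bigcup_{i\geq b}R^i\subseteq\bigcup_{i>k}R^i$); membership in $\Sub$ is not monotone under enlarging the deleted set, so this does not follow. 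Worse, even to show that $\bigcup_{i\geq k}R^i$ meets $V(\beta)$ in exactly $R(\beta)$ (so that $\hat\gamma$ is a whole component of $\beta-\bigcup_{i\geq k}R^i$ and not a proper piece of one) you need to control contributions to $R^k$ coming from \emph{other} key subgraphs $\beta'$ with $\level(\beta')=k$ that are not ancestors of $(\gamma,k)$ in the recursion tree; a recursion-tree induction gives you no hypothesis about those.

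The paper resolves this by changing the induction variable: it performs a downward induction on $k$ and proves, simultaneously with the lemma's statement (i), an auxiliary statement (ii) that every key subproblem $(\alpha,a)$ with $a\geq k>\level(\alpha)$ already satisfies $\alpha\in\Sub\bigl(\inputG-\bigcup_{i\geq k}R^i\bigr)$. With (ii) available for all indices $>k$, one gets $\beta\in\Sub\bigl(\inputG-\bigcup_{i>k}R^i\bigr)$ for \emph{every} key subgraph $\beta$ of level $k$, which lets one prove $R^k\cap V(\beta)=R(\beta)$ via Lemma~\ref{lem:level}~\ref{item:level-equal} and \ref{item:level-less} together with Lemma~\ref{lem:Tin}~\ref{item:Tin-laminar}, and then conclude with Lemma~\ref{lem:Tin}~\ref{item:Tin-component}. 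Establishing (ii) for the inductive step itself uses Lemma~\ref{lem:level}~\ref{item:level-order} and \ref{item:level-less} to show $R^k\cap V(\alpha)=\emptyset$ when $\level(\alpha)<k$. This auxiliary statement and the switch to downward induction on $k$ are the missing ideas; without them, the "bookkeeping" you flag cannot be completed.
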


\begin{proof}
We aim to prove that the following two statements, the first of which is the statement of the lemma, hold for all $k\in\{0,\ldots,\ell,\infty\}$.
\begin{enumroman}
\item\label{item:subproblems-1} Every subproblem $(\gamma,k)$ satisfies $\gamma\in\Sub(\inputG-R^{\geq k})$.
\item\label{item:subproblems-2} Every key subproblem $(\alpha,a)$ with $a\geq k>\level(\alpha)$ satisfies $\alpha\in\Sub(\inputG-R^{\geq k})$.
\end{enumroman}

Suppose not, and let $k\in\{0,\ldots,\ell,\infty\}$ be maximum for which \ref{item:subproblems-1} or \ref{item:subproblems-2} fails.
Let $R^{>k}$ be $R^{\geq k+1}$ for $k\in\{0,\ldots,\ell-1\}$ and empty for $k\in\{\ell,\infty\}$.
By maximality of $k$, the following statement holds, which is equivalent to \ref{item:subproblems-2} for $k+1$ if $k<\ell$, equivalent to \ref{item:subproblems-2} for $\infty$ if $k=\ell$, and vacuous if $k=\infty$.
\begin{enumerate}[label={($*$)}]
\item\label{item:subproblems-IH} Every key subproblem $(\alpha,a)$ with $a>k\geq\level(\alpha)$ satisfies $\alpha\in\Sub(\inputG-R^{>k})$.
\end{enumerate}

First, suppose that \ref{item:subproblems-1} fails for $k$.
Consider a subproblem $(\gamma,k)$ with $\gamma$ maximal for which \ref{item:subproblems-1} fails, so that \ref{item:subproblems-1} holds for all subproblems $(\beta,k)$ with $V(\gamma)\subset V(\beta)$.
Since $\inputG\in\Sub(\inputG)$, \ref{item:subproblems-1} holds when $(\gamma,k)=(\inputG,\infty)$.
Thus assume $(\gamma,k)\neq(\inputG,\infty)$, so that a primary or secondary recursive call to $\solve(\gamma,k)$ occurs in the algorithm.
If there is a primary call to $\solve(\gamma,k)$ from $\solve(\beta,k)$ where $\gamma\in\Sub(\beta)\setminus\{\beta\}$, then $\beta\in\Sub(\inputG-R^{\geq k})$ implies $\gamma\in\Sub(\inputG-R^{\geq k})$ by Lemma~\ref{lem:Tin}~\ref{item:Tin-recursion}, which contradicts the assumption that \ref{item:subproblems-1} fails for $(\gamma,k)$.
Thus assume the other case, namely, that $k\leq\ell$ and the algorithm makes a secondary call to $\solve(\gamma,k)$ from $\solve(\alpha,a)$ for some key subproblem $(\alpha,a)$ with $a>k=\level(\alpha)$.
By \ref{item:subproblems-IH}, we have $\alpha\in\Sub(\inputG-R^{>k})$.

We claim that $R^k\cap V(\alpha)=R(\alpha)$.
It is clear that $R^k\cap V(\alpha)\supseteq R(\alpha)$.
Now, let $v\in R^k\cap V(\alpha)$.
Let $(\beta,b)$ be a key subproblem with $b>k=\level(\beta)$ such that $v\in R(\beta)$ (which exists by the definition of $R^k$).
By \ref{item:subproblems-IH}, we have $\beta\in\Sub(\inputG-R^{>k})$.
The fact that $v\in V(\alpha)\cap V(\beta)$ and Lemma~\ref{lem:Tin}~\ref{item:Tin-laminar} yield $\alpha\in\Sub(\beta)$ or $\beta\in\Sub(\alpha)$.
If $\alpha\in\Sub(\beta)$, then Lemma~\ref{lem:level}~\ref{item:level-equal} yields $v\in R(\beta)\cap V(\alpha)=R(\alpha)$.
If $\beta\in\Sub(\alpha)$, then Lemma~\ref{lem:level}~\ref{item:level-equal} (with $\alpha$ and $\beta$ interchanged) yields $v\in R(\beta)\subseteq R(\alpha)$.
This completes the proof of the claim.

Being a secondary component in $\solve(\alpha,a)$, $\gamma$ is a connected component of $\alpha-R(\alpha)$ and thus of $\alpha-R^k$, as $R^k\cap V(\alpha)=R(\alpha)$.
Since $\alpha\in\Sub(\inputG-R^{>k})$, Lemma~\ref{lem:Tin}~\ref{item:Tin-component} applied with $G=\inputG-R^{>k}$ and $X=V(\inputG)\setminus R^{\geq k}$ yields $\gamma\in\Sub(\inputG-R^{\geq k})$, which is again a contradiction.

Having assumed that $k$ is maximal for which \ref{item:subproblems-1} or \ref{item:subproblems-2} fails, we have proved that \ref{item:subproblems-1} actually holds for $k$, so it must be \ref{item:subproblems-2} that fails for $k$.
So let $(\alpha,a)$ be a key subproblem with $a\geq k>\level(\alpha)$.
If $a=k$, then $\alpha\in\Sub(\inputG-R^{\geq k})$ by the fact that \ref{item:subproblems-1} holds for $k$ that we have proved above.
So assume $a>k$.
By \ref{item:subproblems-IH}, we have $\alpha\in\Sub(\inputG-R^{>k})$.
Suppose there is a vertex $v\in R^k\cap V(\alpha)$.
As before, let $(\beta,b)$ be a key subproblem with $b>k=\level(\beta)$ such that $v\in R(\beta)$.
By \ref{item:subproblems-IH}, we have $\beta\in\Sub(\inputG-R^{>k})$.
The fact that $v\in V(\alpha)\cap V(\beta)$ and Lemma~\ref{lem:Tin}~\ref{item:Tin-laminar} yield $\alpha\in\Sub(\beta)$ or $\beta\in\Sub(\alpha)$.
If $\alpha\in\Sub(\beta)$, then Lemma~\ref{lem:level}~\ref{item:level-less} yields $R(\beta)\cap V(\alpha)=\emptyset$, which is a contradiction.
If $\beta\in\Sub(\alpha)$, then $\level(\beta)\leq\level(\alpha)$ by Lemma~\ref{lem:level}~\ref{item:level-order} (with $\alpha$ and $\beta$ interchanged), which is again a contradiction.
Thus $R^k\cap V(\alpha)=\emptyset$, which implies $\alpha\in\Sub(\inputG-R^{\geq k})$ by Lemma~\ref{lem:Tin}~\ref{item:Tin-restrict}.
We conclude that \ref{item:subproblems-2} holds for $k$, which is a final contradiction.
\end{proof}

Let $n=\size{V(\inputG)}$.
Lemma~\ref{lem:normalized}~\ref{item:normalized-A} implies that $\size{\Sub(\inputG[X])}\leq\size{X}\leq n$ for every $X\subseteq V(\inputG)$.
This and Lemma~\ref{lem:subproblems} imply that for every $k\in\{0,\ldots,\ell,\infty\}$, there are at most $n$ subproblems of the form $(\gamma,k)$.
A key subgraph $\beta$ such that $\ell=\level(\beta)$ has a $\calT_\ell$-witness, so $n\geq\size{V(\beta)}\geq 3^\ell$.
Therefore, the total number of subproblems is $O(n\log n)$.
Since the operations performed in a single~pass of $\solve$ (excluding the recursive calls) clearly take polynomial time, we conclude that the full run of $\solve(\inputG,\infty)$ takes polynomial time.
This completes the proof of Theorem~\ref{thm:algorithm}.

\section{Tightness}
\label{sec:tight}

Theorem~\ref{thm1} asserts that every graph with pathwidth at least $th+2$ has treewidth at least $t$ or contains a subdivision of a complete binary tree of height $h+1$.
While this statement is true for all positive integers $t$ and $h$, we remark that the interesting case is when $h>\log_2t-2$.
Indeed, if $h\leq\log_2t-2$, then the second outcome is known to hold for every graph with pathwidth at least $t$; this follows from a result of Bienstock, Robertson, Seymour, and Thomas~\cite{BRST91}.%
\footnote{In~\cite{BRST91}, it is proved that for every forest $F$, graphs with no $F$ minors have pathwidth at most $\size{V(F)}-2$.
In particular, if $G$ contains no subdivision of a complete binary tree of height $h+1$, then $\pw(G)<2^{h+2}\leq t$.}
We now show that Theorem~\ref{thm1} is tight up to a multiplicative factor when $h>\log_2t-2$.

\begin{theorem}
\label{thm:tight}
For any positive integers\/ $t$ and\/ $h$, there is a graph with treewidth\/ $t$ and pathwidth at least\/ $t(h+1)-1$ that contains no subdivision of a complete binary tree of height\/ $3\max(h+1,\lceil\log_2t\rceil)$.
\end{theorem}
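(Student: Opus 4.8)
\textbf{Proof proposal for Theorem~\ref{thm:tight}.}

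The plan is to build the desired graph as a carefully chosen ``blow-up'' of a long path, where every vertex of the path is replaced by a clique of size $t$ and consecutive cliques are fully joined. Concretely, let $P$ be a path on $N$ vertices for a suitably large $N$ (to be fixed at the end, roughly $N = \Theta(3^{h+\log_2 t})$), and let $\blowup{P}$ be the graph obtained by replacing each vertex by a clique $K_t$ and adding all edges between cliques that correspond to adjacent vertices of $P$. First I would record the two easy ``positive'' properties. The natural path decomposition whose $i$-th bag is the union of the $i$-th and $(i{+}1)$-st cliques has width $2t-1$, and the natural tree (in fact path) decomposition shows $\tw(\blowup{P}) \le 2t-1$; with a small amount of care one gets $\tw(\blowup{P}) = t$ exactly (the graph contains $K_{t+1}$ as a subgraph but no $K_{t+2}$, and is a partial $t$-tree). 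For the pathwidth lower bound $\pw(\blowup{P}) \ge t(h+1)-1$ I would instead \emph{iterate} the construction: rather than a single blow-up of a path, take $h+1$ ``nested'' copies, or equivalently observe directly that $\blowup{P}$ for $P$ long enough has large pathwidth by a cops-and-robbers / bramble argument — a bramble of order $t(h+1)$ is obtained from the cliques together with the connecting structure of the path, exactly mirroring how the bound $th+1$ in Theorem~\ref{thm1} is tight. (The intended graph is presumably $\blowup{B}$ for $B$ a complete binary tree of height $2(h+1)$ or similar; one should check which base graph makes both the pathwidth bound and the no-large-binary-subdivision bound come out cleanly.)

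Second, and this is the heart of the matter, I would bound from above the height of any complete binary tree subdivision contained in the graph. The key structural fact is that in $\blowup{B}$ for a base graph $B$, a subdivision of a complete binary tree of height $m$ forces a topological minor model of a complete binary tree of comparable height inside $B$ itself, up to a loss of a constant factor coming from the clique size $t$: each branch vertex of the subdivided binary tree lies in some clique $K_t$, and at most $O(t)$ branch vertices can be ``absorbed'' into a single clique before the paths are forced to leave it, so contracting each clique collapses the height by only a $\Theta(\log t)$ additive term (or a constant multiplicative term once $m \gtrsim \log t$). Hence if $\blowup{B}$ contains a subdivision of a complete binary tree of height $m$, then $B$ contains a subdivision of a complete binary tree of height roughly $m/3$ (absorbing the $\log t$ loss into the factor $3$ and the $\lceil\log_2 t\rceil$ term in the statement). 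Since the base graph $B$ is chosen to contain \emph{no} subdivision of a complete binary tree of height $\max(h+1,\lceil\log_2 t\rceil)$ — e.g.\ $B$ is a path (which contains none of height $1$) if the pathwidth bound is achieved by iteration, or $B$ is a complete binary tree of bounded height — we conclude that $\blowup{B}$ contains no subdivision of height $3\max(h+1,\lceil\log_2 t\rceil)$, which is exactly the claim.

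The main obstacle I anticipate is making the ``absorption into a clique'' counting argument tight and rigorous: one must argue that in a subdivision $S$ of a complete binary tree of height $m$ inside $\blowup{B}$, replacing each clique $K_t^{(v)}$ (for $v \in V(B)$) by a single vertex and suppressing degree-two vertices yields a subdivision of a complete binary tree of height at least $m - c\log_2 t$ (or $\ge m/3$), in $B$. The delicate point is that the internal paths of $S$ may wander in and out of a clique, and several branch vertices may sit in the same clique; one needs to bound how much height can be ``hidden'' inside one clique, and the right bound is that a complete binary tree with all $2^{j+1}-1$ nodes mapped into a set of $t$ vertices forces $2^{j+1}-1 \le t$, hence $j \le \log_2 t$, giving the additive $\log_2 t$ loss per level of the contracted tree. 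A secondary, more bookkeeping-type obstacle is choosing $N$ (the length/size of the base graph) large enough that the pathwidth bramble of order $t(h+1)$ genuinely exists, while keeping the base graph free of the relevant binary-tree subdivision; I expect $N$ polynomial in $3^h$ and $t$ suffices, and the exact constant does not matter for the stated (asymptotic-in-spirit) tightness.
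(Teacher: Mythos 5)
Your proposal does not give a working construction, and it misses the key simplification that makes the paper's proof go through. Let me flag the concrete problems.

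First, the construction. You propose replacing each vertex of a base graph by a clique $K_t$ and \emph{fully joining} consecutive cliques. That gives $K_{2t}$ as a subgraph across each edge of the base graph, so the treewidth is $2t-1$, not $t$; your claim ``contains $K_{t+1}$ but no $K_{t+2}$'' is simply false for that blow-up. The paper instead joins consecutive cliques by a \emph{perfect matching}, and shows by a short argument that this keeps the treewidth at exactly $t$. Second, your primary choice of base graph, a path, cannot work: a clique-blow-up of a path (with matchings or with joins) is an interval-graph-like object whose pathwidth is $O(t)$ regardless of its length, so you will never reach $t(h+1)-1$. You notice this yourself and gesture at ``iterating'' or switching to a binary tree, but never pin down a construction nor prove the pathwidth lower bound, instead deferring to an unspecified bramble. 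The paper uses the blow-up $\blowup{T_h}$ of the complete \emph{ternary} tree $T_h$ of height $h$, with matchings, and proves $\pw(\blowup{T_h}) \ge t(h+1)-1$ by induction on $h$: three root subtrees give three bags of size $\ge th$ in any path decomposition, the middle one must additionally meet $t$ disjoint paths routed through the root clique and the two outer subtrees, yielding $\ge t(h+1)$. The ternary tree is what makes this induction clean.

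Third, and most substantially, your plan for the ``no large binary-tree subdivision'' bound is a detour that would be hard to make rigorous and is unnecessary. You propose a topological-minor transfer argument (contract each clique, bound the height loss by $\Theta(\log t)$ per clique), and you yourself flag the wandering-paths problem as the main obstacle. The paper avoids all of this with a one-line counting argument: $\blowup{T_h}$ has fewer than $3^{h+1}t$ vertices, while any graph containing a subdivision of a complete binary tree of height $c$ has at least $2^{c+1}-1$ vertices; hence $c+1 \le \log_2(3^{h+1}t) \le 3\max(h+1,\lceil\log_2 t\rceil)$. So the heart of your proposal — the clique-absorption analysis — is both unfinished and avoidable, and the two ``positive'' bounds are not established by the construction you actually commit to.
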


Fix a positive integer $t$.
For a tree $T$, let $\blowup{T}$ be a graph obtained from $T$ by replacing every node of $T$ with a clique on $t$ vertices and replacing every edge of $T$ with an arbitrary perfect matching between the corresponding cliques.
For $h\in\setN$, let $T_h$ be a complete ternary tree of height $h$.
The following three claims show that the graph $\blowup{T_h}$ satisfies the three conditions requested in Theorem~\ref{thm:tight}, thus proving that Theorem~\ref{thm:tight} holds for $\blowup{T_h}$.

\begin{claim}
If\/ $T$ is a tree on at least two vertices, then\/ $\tw(\blowup{T})=t$.
\end{claim}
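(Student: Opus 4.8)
The plan is to prove the two inequalities $\tw(\blowup{T})\ge t$ and $\tw(\blowup{T})\le t$ separately. For the lower bound I would show that $\blowup{T}$ has a $K_{t+1}$ minor and then invoke minor‑monotonicity of treewidth together with $\tw(K_{t+1})=t$. Since $T$ has at least two vertices, it has an edge $uv$; consider the subgraph of $\blowup{T}$ induced on the two cliques $C_u$ and $C_v$ corresponding to $u$ and $v$, and contract the clique $C_v$ to a single vertex $c$ (legitimate, as $C_v$ is connected). The perfect matching realizing the edge $uv$ pairs every vertex of $C_u$ with some vertex of $C_v$, so after the contraction $c$ is adjacent to every vertex of $C_u$; since $C_u$ is already a clique on $t$ vertices, $C_u\cup\{c\}$ is a clique on $t+1$ vertices, giving the desired minor. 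This is precisely where the hypothesis ``at least two vertices'' is needed: for a one‑vertex tree $\blowup{T}$ is just $K_t$, of treewidth $t-1$.

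For the upper bound I would build an explicit tree decomposition of $\blowup{T}$ of width $t$. Its decomposition tree is obtained from $T$ by subdividing every edge. At the node corresponding to an original vertex $v$ of $T$ I place the bag $C_v$, of size $t$. Along the path of new nodes subdividing an edge $uv$ of $T$, I interpolate between $C_u$ and $C_v$: label the matching between the two cliques so that $x_j\in C_u$ is matched to $y_j\in C_v$ for $j=1,\dots,t$, and then, starting from $C_u=\{x_1,\dots,x_t\}$, repeat for $j=1,\dots,t$ the two steps ``add $y_j$'', producing the bag $\{x_j,x_{j+1},\dots,x_t\}\cup\{y_1,\dots,y_j\}$ of size $t+1$, and ``delete $x_j$'', producing the bag $\{x_{j+1},\dots,x_t\}\cup\{y_1,\dots,y_j\}$ of size $t$; after the last step the bag equals $C_v$. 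Every bag has at most $t+1$ vertices, so the decomposition has width at most $t$.

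It then remains to check that this is a tree decomposition. The decomposition tree, being a subdivision of $T$, is a tree. Each clique edge of $\blowup{T}$ lies inside some $C_v$, which is a bag; each matching edge $x_jy_j$ of an edge $uv$ of $T$ lies in the bag $\{x_j,\dots,x_t\}\cup\{y_1,\dots,y_j\}$ appearing on the subdivision path of $uv$ when $y_j$ is added; and $\blowup{T}$ has no other edges. For the connectivity condition, a vertex of $C_v$ occurs exactly in the node carrying $C_v$ together with, for each edge of $T$ incident to $v$, one contiguous segment of the corresponding subdivision path incident to that node (the segment during which this vertex has not yet been deleted, respectively has already been added, on that path); all these segments are glued at the node carrying $C_v$, so their union is a subtree. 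This produces a tree decomposition of width $t$, hence $\tw(\blowup{T})\le t$, and together with the lower bound $\tw(\blowup{T})=t$.

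The one place that calls for care is this last verification: the bookkeeping that every vertex — in particular a vertex of a clique $C_v$ with $v$ of high degree in $T$, which lies in many matchings simultaneously — occupies a connected set of nodes of the decomposition tree, and that the interpolation paths meeting at the node carrying $C_v$ agree there about which vertices of $C_v$ are present. The minor argument and the bag‑size count are immediate.
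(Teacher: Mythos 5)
Your proof is correct. The upper bound is essentially the same construction as the paper's: subdivide each edge of $T$ and interpolate between the two cliques along the subdivision path, keeping every bag of size at most $t+1$. (Your version inserts both an ``add $y_j$'' node and a ``delete $x_j$'' node where the paper's combines the delete/add into a single step, but the decomposition is the same in substance, and your connectivity check---that each vertex of $C_v$ occupies the node for $v$ together with a prefix of each incident subdivision path---is exactly the right thing to verify and is correct.)

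The lower bound is where you take a genuinely different route. The paper argues directly: pick the nodes $x'$, $y'$ whose bags contain the two cliques $B_x$, $B_y$ of an edge $xy$, walk from $x'$ towards $y'$, and stop at the first node whose bag meets $B_y$; a short argument about subtrees-of-a-tree shows this bag still contains all of $B_x$, hence has size at least $t+1$. You instead exhibit a $K_{t+1}$ minor by contracting one of the two cliques onto a single vertex and observing that the matching makes it complete to the other clique; minor-monotonicity of treewidth and $\tw(K_{t+1})=t$ then finish it. Your argument is shorter and more conceptual, at the mild cost of invoking minor-monotonicity as a black box; the paper's version is self-contained and works directly with the definition of a tree decomposition. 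Both are valid, and you correctly identified that the hypothesis that $T$ has at least two vertices is needed precisely to have an edge for this step.
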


\begin{proof}
For each node $x$ of $T$, let $B_x$ be the clique of $t$ vertices in $\blowup{T}$ corresponding to $x$.
A tree decomposition of $\blowup{T}$ of width $t$ is obtained from $T$ by taking $B_x$ as the bag of every node $x$ of $T$ and by subdividing every edge $xy$ of $T$ into a path of length $t+1$ with the following sequence of bags, assuming that the vertices $u_1,\ldots,u_t$ in $B_x$ are matched to $v_1,\ldots,v_t$ in $B_y$, respectively:
\[\{u_1,\ldots,u_t\},\enspace\{u_1,\ldots,u_t\}\cup\{v_1\},\enspace\{u_2,\ldots,u_t\}\cup\{v_1,v_2\},\enspace\ldots,\enspace\{u_t\}\cup\{v_1,\ldots,v_t\},\enspace\{v_1,\ldots,v_t\}.\]
This way, for every matching edge $u_iv_i$ with $1\leq i\leq t$, there is a bag containing its two endpoints.
Consequently, this is a valid tree decomposition of $\blowup{T}$ with bags of size at most $t+1$.

For the proof that $\tw(\blowup{T})\geq t$, let $xy$ be an edge of $T$, and assume that the vertices $u_1,\ldots,u_t$ of $B_x$ are matched to $v_1,\ldots,v_t$ in $B_y$, respectively, as before.
In any tree decomposition of $\blowup{T}$, there is a node $x'$ whose bag contains the clique $B_x$ and a node $y'$ whose bag contains the clique $B_y$.
Walk on the path from $x'$ to $y'$ and stop at the first node whose bag contains some vertex in $B_y$.
This bag must also contain all of $B_x$, so it has size at least $t+1$.
\end{proof}

\begin{claim}
For every\/ $h\in\setN$, we have\/ $\pw(\blowup{T_h})\geq t(h+1)-1$.
\end{claim}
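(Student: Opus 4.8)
The plan is to prove the claim by induction on $h$, mimicking the proof of Lemma~\ref{lem:Th-pw} but gaining $t$ (rather than $1$) at each level — which is precisely what the clique blow-ups provide. The base case $h=0$ is immediate: $\blowup{T_0}=K_t$, so $\pw(\blowup{T_0})=t-1=t(0+1)-1$.

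For the inductive step, I would write $\blowup{T_h}$ ($h\geq 1$) as a root clique $R=\{r_1,\dots,r_t\}$ together with three vertex-disjoint subgraphs $G_1,G_2,G_3$, each isomorphic to $\blowup{T_{h-1}}$ with root clique $S_j=\{s^j_1,\dots,s^j_t\}$, where $R$ is joined to each $S_j$ by the perfect matching $\{r_is^j_i\colon i\in[t]\}$; note that $V(G_a)$ and $V(G_b)$ are non-adjacent for $a\neq b$ and that each $V(G_j)$ is adjacent in $\blowup{T_h}$ only to $R$. Given a path decomposition $P$ of $\blowup{T_h}$ with bags $B_1,\dots,B_N$, restricting the bags to $V(G_j)$ yields a path decomposition of $G_j$, so by the induction hypothesis there is a bag $\hat B_j=B_{m_j}$ of $P$ with $\size{\hat B_j\cap V(G_j)}\geq t((h-1)+1)=th$. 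Relabel the children so that $m_1\leq m_2\leq m_3$; I will show that $\size{B_{m_2}}\geq t(h+1)$, which gives $\pw(\blowup{T_h})\geq t(h+1)-1$.

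Since $\size{B_{m_2}\cap V(G_2)}\geq th$, it suffices to find $t$ distinct vertices of $B_{m_2}$ lying outside $V(G_2)$. Put $L=\hat B_1\cap V(G_1)$ and $M=\hat B_3\cap V(G_3)$: these are disjoint sets of size at least $th\geq t$ (using $h\geq 1$), with $L$ contained in the bag $B_{m_1}$ at position $m_1\leq m_2$ and $M$ contained in the bag $B_{m_3}$ at position $m_3\geq m_2$. Now consider $G''=\blowup{T_h}-V(G_2)$, which is itself a clique blow-up of a tree (namely of $T_h$ with the middle branch deleted). The key step, which I would isolate as a small lemma, is a connectivity fact about clique blow-ups: for every tree $T$, any two disjoint vertex sets of size at least $t$ in $\blowup{T}$ are joined by $t$ vertex-disjoint paths. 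By Menger's theorem this amounts to showing that no set $C$ with $\size{C}\leq t-1$ separates two such sets, which is easy: for every edge $xy$ of $T$ at most $t-1$ of the $t$ matching edges between the cliques corresponding to $x$ and $y$ are incident to $C$, so at least one survives, and each clique minus $C$ is a nonempty subclique; hence $\blowup{T}-C$ is connected and $C$ cannot separate two sets it does not fully contain. Applying this to $G''$ with $L$ and $M$ gives $t$ vertex-disjoint $L$–$M$ paths $Q_1,\dots,Q_t$ in $G''$. Each $Q_i$ has its $L$-endpoint in the bag $B_{m_1}$ (position $\leq m_2$) and its $M$-endpoint in the bag $B_{m_3}$ (position $\geq m_2$), so by the usual intermediate-value argument for path decompositions — consecutive vertices of a path lie in a common bag, hence their bag-intervals overlap — some vertex $x_i$ of $Q_i$ lies in $B_{m_2}$. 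Since $V(Q_i)\subseteq V(G'')=V(\blowup{T_h})\setminus V(G_2)$ and the $Q_i$ are vertex-disjoint, the vertices $x_1,\dots,x_t$ are $t$ distinct elements of $B_{m_2}$ outside $V(G_2)$, completing the step.

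The main obstacle is the connectivity lemma for $\blowup{T}$; once it is available the rest is a routine upgrade of the argument for Lemma~\ref{lem:Th-pw}. Minor care is needed in choosing a single witness bag per child before sorting the three positions, and in the degenerate case of the intermediate-value argument where an endpoint of $Q_i$ already lies in $B_{m_2}$, in which case that endpoint itself plays the role of $x_i$.
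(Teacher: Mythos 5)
Your proof is correct, and it takes a genuinely different route from the paper's. The paper strengthens the induction hypothesis: it shows by induction on $h$ that every path decomposition of $\blowup{T_h}$ has a bag $B$ of size at least $t(h+1)$ \emph{together with} $t$ vertex-disjoint paths from $B$ to the root clique. In the inductive step, the $t$ paths from $B_{x_1}$ and the $t$ paths from $B_{x_3}$ are glued through the root clique $R$ (using $\size{R}=t$) to form $t$ vertex-disjoint $B_{x_1}$--$B_{x_3}$ paths avoiding $V(C_2)$, and one of the $t$ path-families is retained to re-establish the invariant. You instead keep the induction hypothesis to the bare pathwidth bound and manufacture the $t$ crossing paths on demand: you observe that $G''=\blowup{T_h}-V(G_2)$ is itself a clique blow-up of a tree and prove a standalone connectivity lemma via Menger's theorem, namely that for any tree $T$, any two disjoint vertex sets of size at least $t$ in $\blowup{T}$ are linked by $t$ vertex-disjoint paths, because no set of at most $t-1$ vertices can disconnect $\blowup{T}$. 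Both arguments are sound; yours isolates a reusable structural fact about clique blow-ups at the cost of invoking Menger, while the paper's is self-contained but must thread the paths-to-root invariant through the induction. Your care about the degenerate cases (choosing one witness bag per child before sorting positions, and a path endpoint already lying in $B_{m_2}$) is appropriate and handled correctly.
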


\begin{proof}
We define the \emph{root clique} of $\blowup{T_h}$ as the clique in $\blowup{T_h}$ corresponding to the root of $T_h$.
We prove the following slightly stronger statement, by induction on $h$: in every path decomposition of $\blowup{T_h}$, there are a bag $B$ of size at least $t(h+1)$ and $t$ vertex-disjoint paths in $\blowup{T_h}$ each having one endpoint in the root clique and the other endpoint in $B$.

For the base case $h=0$, the graph $\blowup{T_0}$ is simply a complete graph on $t$ vertices, and the statement is trivial.
For the induction step, assume that $h\geq 1$ and the statement is true for $h-1$.
Let $R$ be the root clique of $\blowup{T_h}$.
Let $(P,\{B_x\}_{x\in V(P)})$ be a path decomposition of $\blowup{T_h}$ of minimum width.
The graph $\blowup{T_h}-R$ has three connected components $C_1$, $C_2$, and $C_3$ that are copies of $\blowup{T_{h-1}}$ with root cliques $R_1$, $R_2$, and $R_3$, respectively.
For each $i\in\{1,2,3\}$, the induction hypothesis applied to the path decomposition $(P,\{B_x\cap V(C_i)\}_{x\in V(P)})$ of $C_i$ provides a node $x_i$ of $P$ such that
\begin{itemize}
\item $\size{B_{x_i}\cap V(C_i)}\geq th$, and
\item there are $t$ vertex-disjoint paths in $C_i$ between $B_{x_i}\cap V(C_i)$ and the root clique $R_i$ of $C_i$.
\end{itemize}
Assume without a loss of generality that the node $x_2$ occurs between $x_1$ and $x_3$ on $P$.
We prove the induction statement for $B=B_{x_2}$.

For each $i\in\{1,2,3\}$, we take the $t$ vertex-disjoint paths from $B_{x_i}$ to $R_i$ in $C_i$ and extend them by the matching between $R_i$ and $R$ to obtain $t$ vertex-disjoint paths from $B_{x_i}$ to $R$ in $\blowup{T_h}[R\cup V(C_i)]$.
In particular, there are $t$ vertex-disjoint paths from $B_{x_2}$ to $R$ in $\blowup{T_h}$, as required in the induction statement.
Since $\size{R}=t$, the $t$ paths from $B_{x_1}$ to $R$ and the $t$ paths from $B_{x_3}$ to $R$ together form $t$ vertex-disjoint paths from $B_{x_1}$ to $B_{x_3}$ in $\blowup{T_h}[V(C_1)\cup R\cup V(C_3)]$, which therefore avoid $V(C_2)$.
Since $x_2$ lies between $x_1$ and $x_3$ on $P$, the set $B_{x_2}\setminus V(C_2)$ must contain at least one vertex from each of these $t$ paths.
Thus $\size{B_{x_2}\setminus V(C_2)}\geq t$.
Since $\size{B_{x_2}\cap V(C_2)}\geq th$, we conclude that $\size{B_{x_2}}\geq t(h+1)$, as required in the induction statement.
\end{proof}

\begin{claim}
For any\/ $h\in\setN$, the graph\/ $\blowup{T_h}$ contains no subdivision of a complete binary tree of height\/ $3\max(h+1,\lceil\log_2t\rceil)$.
\end{claim}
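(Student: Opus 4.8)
The plan is to prove the claim by a plain vertex count: $\blowup{T_h}$ simply does not have enough vertices to contain a subdivision of so tall a complete binary tree. Write $M=\max(h+1,\lceil\log_2 t\rceil)$ for brevity.

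First I would record the two vertex counts involved. The complete ternary tree $T_h$ has $1+3+\dots+3^h=(3^{h+1}-1)/2$ nodes, and since $\blowup{T_h}$ replaces each node by a clique on $t$ vertices, $\size{V(\blowup{T_h})}=t\cdot(3^{h+1}-1)/2$. In the other direction, a complete binary tree of height $H$ has $2^{H+1}-1$ nodes, and any subdivision of it has at least that many vertices (subdividing only adds vertices); so if $\blowup{T_h}$ contained a subdivision of a complete binary tree of height $3M$ as a subgraph, it would have at least $2^{3M+1}-1$ vertices.

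The remaining step is an elementary estimate. Using $t\leq 2^{\lceil\log_2 t\rceil}\leq 2^M$ and $3^{h+1}\leq 3^M$, one gets
\[
\size{V(\blowup{T_h})}=t\cdot\frac{3^{h+1}-1}{2}<\frac{t\cdot 3^{h+1}}{2}\leq\frac{2^M\cdot 3^M}{2}=\frac{6^M}{2},
\]
whereas a subdivision of a complete binary tree of height $3M$ has at least $2^{3M+1}-1\geq 2^{3M}=8^M$ vertices. Since $8^M\geq 6^M>6^M/2$ for every $M\geq 0$, such a subdivision would have strictly more vertices than $\blowup{T_h}$, a contradiction; hence no such subdivision exists. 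There is essentially no real obstacle here, so the only points I would be careful about are that the inequality $(3^{h+1}-1)/2<3^{h+1}/2$ is strict precisely because $t\geq 1$, and that the constant $3$ in the height $3M$ is exactly what is needed because $1+\log_2 3<3$ (equivalently, because $6^M/2<8^M$); a smaller constant in place of $3$ would not leave this slack.
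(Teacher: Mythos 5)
Your argument is correct and is essentially the same as the paper's: both proofs rest on a plain vertex count, comparing $\size{V(\blowup{T_h})}=t(3^{h+1}-1)/2$ against the at least $2^{3M+1}-1$ vertices that a subdivision of a complete binary tree of height $3M$ must have. The paper phrases the comparison via $\log_2(\size{V(\blowup{T_h})}+1)\leq\log_2(3^{h+1}t)\leq 3M$ while you carry out the equivalent estimate $t\cdot 3^{h+1}/2\leq 6^M/2<8^M\leq 2^{3M+1}-1$ directly; this is only a cosmetic difference.
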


\begin{proof}
A simple calculation shows that $T_h$ has $\frac{3^{h+1}-1}{2}$ nodes.
Thus $\size{V(\blowup{T_h})}+1\leq 3^{h+1}t$ and so
\[\log_2\bigl(\size{V(\blowup{T_h})}+1\bigr)\leq\log_2(3^{h+1}t)\leq 3\max(h+1,\lceil\log_2t\rceil).\]
If a graph $G$ contains a subdivision of a complete binary tree of height $c$, then $\size{V(G)}\geq 2^{c+1}-1$ and so $\log_2(\size{V(G)}+1)\geq c+1$.
Therefore, $\blowup{T_h}$ cannot contain a subdivision of a complete binary tree of height $3\max(h+1,\lceil\log_2t\rceil)$.
\end{proof}

\section{Open Problem}

In Theorem~\ref{thm1}, we bound pathwidth by a function of treewidth in the absence of a subdivision of a large complete binary tree.
In~\cite{KR22,CNP21}, the authors bound treedepth by a function of treewidth in the absence of a subdivision of a large complete binary tree and of a long path.
Specifically, Czerwiński, Nadara, and Pilipczuk~\cite{CNP21} proved the following bound.\footnote{This bound is stated in~\cite{CNP21} for the case $t=h=\ell$, but the proof works as well for the general case.}

\begin{theorem}
\label{thm:CNP}
Every graph with treewidth\/ $t$ that contains no subdivision of a complete binary tree of height\/ $h$ and no path of order\/ $2^\ell$ has treedepth\/ $O(th\ell)$.
\end{theorem}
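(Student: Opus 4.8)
Throughout, write $\operatorname{td}(G)$ for the treedepth of $G$. Since $\operatorname{td}$ is the maximum of the treedepths of the connected components, and since the three hypotheses pass to induced subgraphs, we may assume $G$ is connected. The plan is to prove, by induction on $h\ge 1$, the following sharper statement: \emph{if $G$ is connected, $\tw(G)\le t$, $G\notin\calT_h$, and $G$ has no path of order $2^\ell$, then $\operatorname{td}(G)\le f(t,h,\ell)$}, for a suitable $f(t,h,\ell)=O(th\ell)$. This implies the theorem, because a graph with no subdivision of a complete binary tree of height $h$ cannot lie in $\calT_h$ by Lemma~\ref{lem:Th-subdivision}. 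The base case $h=1$ is immediate: a connected graph not in $\calT_1$ has no cycle and no vertex of degree at least $3$, so it is a path, and having no path of order $2^\ell$ it has fewer than $2^\ell$ vertices, hence $\operatorname{td}(G)=\lceil\log_2(\size{V(G)}+1)\rceil\le\ell=:f(t,1,\ell)$.

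For the inductive step I would take a tree decomposition $(T,\{B_x\}_{x\in V(T)})$ of $G$ of width at most $t$ satisfying the connectivity property of Lemma~\ref{lem:connected-decomposition} (and, ideally, also \emph{lean} in the sense of Bellenbaum--Diestel), and run the structural analysis from the proof of Theorem~\ref{thm:existence}, but with the threshold class $\calT_{h-1}$ in place of $\calT_h$ when defining the edge set $E$ of ``branching'' edges of $T$. If $E=\varnothing$, then all the relevant subtrees share a node $z$, and every connected component of $G-B_z$ lies in some $F_{y|z}$ and is therefore not in $\calT_{h-1}$; the induction hypothesis for $h-1$ bounds its treedepth by $f(t,h-1,\ell)$, so removing $B_z$ (at most $t+1$ vertices) gives $\operatorname{td}(G)\le (t+1)+f(t,h-1,\ell)$. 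If $E\neq\varnothing$ and the subtree $K$ it spans has at least three leaves, then the argument of Theorem~\ref{thm:existence} exhibits three pairwise-interconnected pieces witnessing $G\in\calT_h$, contradicting our assumption; so this case does not occur. The remaining case is $E\neq\varnothing$ with $K$ a path $x_1\dots x_m$ of $T$.

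In that last case every connected component of $G-\bigcup_{i=1}^m B_{x_i}$ lies in some $F_{y|x_i}$ with $y\in V(T)\setminus V(K)$, hence is not in $\calT_{h-1}$ and has treedepth at most $f(t,h-1,\ell)$ by induction. The bags $B_{x_1},\dots,B_{x_m}$ are too numerous to remove all at once (this is exactly where the treedepth argument must depart from the simple concatenation of path decompositions used in Lemma~\ref{lem:combine}), so instead I would remove them by binary search along $K$: delete the middle bag $B_{x_{\lceil m/2\rceil}}$ (again at most $t+1$ vertices), note that it separates the ``$x_1$-side'' of $K$ together with everything hanging off it from the ``$x_m$-side'', each of which is again an instance of the same configuration but with a path of roughly half the length and with all hanging components of treedepth $\le f(t,h-1,\ell)$, and recurse. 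This costs one layer of at most $t+1$ eliminated vertices per halving, so $\operatorname{td}(G)\le (t+1)(\lceil\log_2 m\rceil+1)+f(t,h-1,\ell)$.

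The crux — and the step I expect to be the main obstacle — is to bound the length $m$ of the tree-path $K$, namely to show $\log_2 m=O(\ell)$. Intuitively, a path in the decomposition tree along which genuinely distinct bags occur must trace out a long path in $G$; making this precise is what leanness of the tree decomposition is designed for, and it is exactly the point at which the ``no path of order $2^\ell$'' hypothesis is used. A secondary technical point is to ensure that the refined decomposition can be taken simultaneously lean and with the connectivity property of Lemma~\ref{lem:connected-decomposition}, the latter being what makes the three-leaf case genuinely force $G\in\calT_h$. Granting a bound of the form $m=2^{O(\ell)}$, the two recurrences above read $f(t,h,\ell)\le (t+1)\cdot O(\ell)+f(t,h-1,\ell)$ with $f(t,1,\ell)\le\ell$, which unwinds to $f(t,h,\ell)=O(th\ell)$, as required.
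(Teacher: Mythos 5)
The paper does not prove Theorem~\ref{thm:CNP}; it is a result of Czerwiński, Nadara, and Pilipczuk cited from~\cite{CNP21} (with a footnote noting their statement is for $t=h=\ell$), so there is no in-paper proof to compare against. Taken on its own, your outline is appealing and closely mirrors the proof of Theorem~\ref{thm:existence}, but it has a genuine gap at exactly the step you flag: you never establish that the tree-path $K=x_1\dots x_m$ spanned by $E$ satisfies $\log_2 m=O(\ell)$, and that bound is \emph{not} true for an arbitrary tree decomposition satisfying only the connectivity property of Lemma~\ref{lem:connected-decomposition}. For instance, let $G$ be the tree built from two copies $H_1,H_2$ of a subdivided binary tree lying in $\calT_{h-1}$, a fresh vertex $c$, fresh vertices $v_1,\dots,v_n$ all adjacent to $c$, an edge from $v_1$ into $H_1$, and an edge from $v_n$ into $H_2$. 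Then $\tw(G)=1$ and every path in $G$ has order $O(h)$, so $\ell=O(\log h)$; yet $G$ admits a connected width-$1$ tree decomposition shaped as the path $x_1,\dots,x_n$ with bags $\{c,v_i\}$ (decompositions of $H_1,H_2$ attached at the two ends), and every edge $x_ix_{i+1}$ lies in $E$, since $F_{x_i|x_{i+1}}$ has a component that is a connected supergraph of $H_1$ and $F_{x_{i+1}|x_i}$ one that is a connected supergraph of $H_2$. Hence $m\ge n$ is unbounded while $th\ell$ stays bounded, and the blind halving along $K$ costs $(t+1)\lceil\log_2 n\rceil$.

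Of course this particular $G$ has tiny treedepth (delete $c$), and a \emph{different} connected decomposition of the same $G$, rooted at a bag containing $c$, would make $K$ short or place you in the $E=\emptyset$ case. This shows that the bound on $m$, if it holds, must come from a careful choice of decomposition and needs a proof; merely invoking leanness is not enough, since leanness gives Menger-type disjoint paths between far-apart bags but does not forbid many consecutive bags of $K$ all meeting in a common cut vertex, which is precisely what the example exploits. A more promising repair is to make the halving \emph{adaptive}: after deleting the middle bag, re-derive $E$ and $K$ from scratch on each piece rather than inheriting a subpath of the old $K$ (in the example a single deletion already lands in the $E=\emptyset$ case). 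But proving that adaptive halving terminates in $O(\ell)$ rounds, so that each $h$-level contributes only $O(t\ell)$, is exactly the missing crux. The remainder of your outline --- the reformulation via $G\notin\calT_h$, the base case $f(t,1,\ell)\le\ell$ for paths, the $E=\emptyset$ case costing $t+1$ per level, and the $\size{Z}\ge3$ case being vacuous by contradiction with $G\notin\calT_h$ --- is sound.
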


It is natural to ask how large treedepth can be as a function of pathwidth when there is no long path.
We offer the following conjecture.

\begin{conjecture}
\label{conjecture}
Every graph with pathwidth\/ $p$ that contains no path of order\/ $2^\ell$ has treedepth\/ $O(p\ell)$.
\end{conjecture}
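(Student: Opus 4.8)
It suffices to bound the treedepth of each connected component, so assume $G$ is connected with $\pw(G)=p$ and no path on $2^\ell$ vertices; write $L(G)$ for the number of vertices of a longest path of $G$, so that $\log_2 L(G)<\ell$. Since a subdivision of a complete binary tree of height $2p+2$ has pathwidth more than $p$, the graph $G$ contains no such subdivision, and Theorem~\ref{thm:CNP}, applied with treewidth at most $p$, already gives $\mathrm{td}(G)=O(p^2\ell)$. The task is to shave the extra factor of $p$. This factor appears in~\cite{CNP21} because the size of the separators used (essentially $\tw(G)\approx p$) is multiplied by the depth of a branching recursion (essentially the height of the largest binary-tree subdivision, again $\approx p$), whereas for graphs of bounded \emph{pathwidth} these two quantities cannot both be large independently: a deep branching structure already consumes the whole pathwidth budget, so along one root-to-leaf branch of the recursion the separators removed in the branchy part should have total size $O(p)$ rather than $O(p)$ per level.

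The plan is to make this precise through a direct recursive construction of a treedepth decomposition. Using the inequality $\mathrm{td}(G)\le\size{S}+\max_{C}\mathrm{td}(C)$ over the connected components $C$ of $G-S$, I would repeatedly remove a separator $S$ of one of two types. A \emph{length move} removes a central bag $S$ of a path decomposition of width $p$ (so $\size{S}\le p+1$) chosen so that every component of $G-S$ has a longest path with at most half as many vertices as a longest path of $G$; since $\log_2 L(G)<\ell$, at most $\ell$ length moves occur along any branch, so they contribute at most $(p+1)\ell$ along a branch. A \emph{width move} removes a small separator $S$ such that every component of $G-S$ has pathwidth at most $p-1$; after it one recurses with $p$ decreased by one, so at most $p+1$ width moves occur along a branch. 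An induction on $p+\ell$ (with easy base cases $\ell\le 2$, where $L(G)\le 3$ and $\mathrm{td}(G)=O(1)$, and $p\le 1$, where $G$ is a caterpillar and $\mathrm{td}(G)=O(\ell)$) then gives $\mathrm{td}(G)\le c(p+1)\ell$ for an absolute constant $c$, provided each width move removes at most $c\ell$ vertices.

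The heart of the argument — and the step I expect to be the main obstacle — is the structural dichotomy that \emph{every connected graph with at least two vertices admits either a length move or a width move whose separator has size $O(\ell)$} (one would hope for $O(1)$ in the width case). A ``generic'' graph is handled by a length move: if $B_m$ is a central bag of a path decomposition of width $p$ and some component of $G-B_m$ still contains a path of order exceeding $L(G)/2$, then the subgraph induced by the bags preceding $B_m$ and the subgraph induced by the bags following $B_m$ each contain a long path, and these two paths cannot be joined through $B_m$ — exactly the configuration that forces membership in $\calT_h$ for large $h$ in the proof of Theorem~\ref{thm:existence}. I would try to show that iterating this observation either terminates after few steps, yielding a bounded-size length move, or builds up a $\calT_h$-witness with $h=\Omega(p)$, and that in the latter regime the structure near the root of the witness — a single vertex, or $O(1)$ vertices, whose deletion separates off pieces of strictly smaller pathwidth, as happens at the root of a subdivided complete binary tree — provides a cheap width move. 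Making this transition effective, i.e.\ proving that a graph whose longest path survives the deletion of \emph{every} $O(p)$-element separator must nonetheless possess a small pathwidth-reducing separator, is the part I am least sure how to carry out; it may be more convenient to replace the two-move bookkeeping with a single potential $\Phi(G)=(\pw(G)+1)\log_2(L(G)+1)$ and to prove that every connected $G$ with at least two vertices has a separator $S$ with $\Phi(C)\le\Phi(G)-\size{S}$ for all components $C$ of $G-S$, which would immediately yield $\mathrm{td}(G)\le\Phi(G)\le(p+1)\ell$.

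Once the dichotomy is established, the remaining steps are routine: translating ``no path of order $2^\ell$'' into ``at most $\ell$ length moves along a branch'' is immediate, the base cases are trivial, and the constant hidden in $O(p\ell)$ follows from the two move types. As a sanity check on the target bound, consider the path analogue of the construction in Section~\ref{sec:tight}: replace each vertex of a path on $n$ vertices by a clique of size $t$ and each edge by a perfect matching. This graph has pathwidth $\Theta(t)$, a longest path on $\Theta(tn)$ vertices, and treedepth $\Theta(t\log n)$; choosing $n$ with $tn\approx 2^\ell$ (and $t$ not too large) gives treedepth $\Theta(p\ell)$, so no essentially smaller bound than $O(p\ell)$ is possible and both move types are genuinely needed.
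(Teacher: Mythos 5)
This statement is a \emph{conjecture}; the paper does not prove it and explicitly offers it as an open problem (Section 6). The only things the paper establishes are (i) that an $O(p^2\ell)$ bound follows by combining Theorem~\ref{thm:CNP} with the fact that pathwidth $p$ forces treewidth at most $p$ and excludes subdivisions of complete binary trees of height $2p+1$, and (ii) that $O(p\ell)$ would be tight. Your proposal correctly reproduces both of these observations, but it is not a proof, and you say so yourself: the ``structural dichotomy'' on which the whole recursion rests is flagged as the step you do not know how to carry out. Until that dichotomy is established, nothing is proved.

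Beyond the acknowledged gap, one ingredient you treat as unproblematic is in fact also doubtful. You assert that a ``length move'' — a single bag $B_m$ of a width-$p$ path decomposition whose removal leaves every component with a longest path at most half as long — can always be found when no cheap width move exists. Removing one bag from a path decomposition disconnects the vertices strictly to its left from those strictly to its right, but it does nothing to shorten paths that live entirely on one side; a component of $G-B_m$ can contain essentially all of a longest path of $G$, no matter which $m$ you pick. So the existence of a length move is not free either, and your budget argument (at most $\ell$ length moves and at most $p+1$ width moves along a branch, giving $O(p\ell)$) does not get off the ground. The potential-function reformulation $\Phi(G)=(\pw(G)+1)\log_2(L(G)+1)$ is a clean way to phrase the desired invariant, but it just repackages the same unproved claim — that a separator $S$ with $\Phi(C)\le\Phi(G)-\size{S}$ for all components $C$ always exists — without supplying a construction. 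In short: the plan is a reasonable sketch of how one might attack the conjecture, and the tightness example is sound, but the two load-bearing lemmas (existence of length moves, existence of cheap width moves, and the dichotomy between them) are all missing, and the statement remains open.
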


This conjecture and Theorem~\ref{thm1} would directly imply Theorem~\ref{thm:CNP}.
We note that an $O(p^2\ell)$ bound on the treedepth follows from Theorem~\ref{thm:CNP}, as every graph with pathwidth $p$ has treewidth at most $p$ and contains no subdivision of a complete binary tree of height $2p+1$.
We also note that an $O(p\ell)$ bound on the treedepth would be best possible; see \cite[Section~7]{CNP21}.

\section*{Acknowledgments}

This research was started at the Structural Graph Theory workshop in Gułtowy, Poland, in June~2019.
We thank the organizers and the other workshop participants for creating a productive working atmosphere.
We also thank the anonymous referees for their helpful comments.
We are particularly grateful to one referee for pointing out a serious error in an earlier version of the paper.

\bibliographystyle{plain}
\bibliography{main}

\end{document}